\documentclass[runningheads]{llncs}
\usepackage[T1]{fontenc}
\usepackage{graphicx}
\usepackage{comment}
\usepackage{color}
\usepackage{algorithm}

\usepackage{algorithmicx}
\usepackage[noend]{algpseudocode}

\usepackage{pgf,tikz} 

\usepackage{amsmath} 
\usepackage{amssymb}
\usepackage{bm}
\usepackage{enumitem}

\usepackage[dvipsnames]{xcolor}
\usepackage[suppress]{color-edits}
\addauthor[suppress]{pa}{orange}
\addauthor[suppress]{ar}{blue}
\addauthor[journal]{pj}{red}

\newcommand{\adapt}{{\sc Adaptive}}
\newcommand{\adaptRec}{{\sc AdaptiveWithRecourse}}
\newcommand{\nonadapt}{{\sc NonAdaptive}}

\makeatletter
\algnewcommand{\LineComment}[1]{\Statex \hskip\ALG@thistlm \(\triangleright\) #1}
\makeatother

\def\cra{{\sc cr}}
\newtheorem{observation}{Observation}
\newtheorem{ourClaim}{Claim}
\begin{document}

\newcommand{\algorithmicbreak}{\textbf{break}}
\newcommand{\BREAK}{\State \algorithmicbreak}

\title{\pjreplace{Removable Online Knapsack: Exploiting Recourse and Bounded Item Sizes}{Leveraging Recourse and Size Bounds for Online Knapsack with Removable Items}\thanks{\pjcomment{They wanted this reference somewhere.}\pjedit{A preliminary version of this work appeared in IWOCA 2026~\cite{FGPP26}.}}}
%
\titlerunning{Leveraging Recourse and Size Bounds for Online Knapsack with Removable Items}
%


\author{Dimitris Fotakis\inst{1,2}\orcidID{0000-0001-6864-8960} \and
Laurent Gourvès \inst{3}\orcidID{0000-0002-5076-1583} \and
Aris Pagourtzis\inst{1,2}\orcidID{0000-0002-6220-3722} \and
Panagiotis Patsilinakos \inst{3}\orcidID{0009-0007-9540-2334}}


%
\authorrunning{D. Fotakis et al.}
%

\institute{National Technical University of Athens, Greece \and
Archimedes, Athena Research Center, Greece  \and
Université Paris-Dauphine, Université PSL, CNRS, LAMSADE,
Paris, France  
\email{fotakis@cs.ntua.gr}, 
\email{laurent.gourves@dauphine.fr},\\
\email{pagour@cs.ntua.gr},  
 \email{panagiotis.patsilinakos@dauphine.psl.eu}}


%
\maketitle              
\begin{abstract}
In the online unweighted knapsack problem, items arrive sequentially and must be accepted or rejected immediately for a bounded-capacity knapsack, aiming to maximize the total size of accepted items. When decisions are irrevocable, no algorithm achieves a 
non-trivial  
competitive ratio. An essential relaxation that circumvents this impossibility is to allow previously accepted items to be removed from the knapsack (a.k.a. removal). In this work, we investigate possible further improvements on the competitive ratio with removal by bringing together two other previously proposed model extensions: knowledge of an upper bound on the item sizes (a.k.a. bound awareness) and ability to reinsert previously rejected items (a.k.a. recourse). Our contributions are twofold. First, we establish close upper and lower bounds on the competitive ratio under bound awareness, removal and recourse. We show that significant improvements on the best known competitive ratio are possible when the upper bound on the item sizes is not too restrictive and only few uses of recourse are available. Second, we introduce the notion of adaptive online algorithms 
which do not require to know any upper bound on the item sizes but still achieve performance guarantees comparable to the ones obtained under the bound awareness model.


\keywords{Unweighted  Knapsack   \and  Adaptive Online Algorithms \and Removable Items \and Recourse}
\end{abstract}
\section{Introduction}


We study the online unweighted knapsack problem 
in which some 
items, to be placed in a knapsack of capacity $1$, are disclosed over time. 
Each item $o_i$ has a \emph{size} $|o_i| \in (0,1]$. Starting from an initial empty set $S$,  item $o_i$ is revealed at round $i$ during which it must be decided to pack $o_i$ in $S$ or not. 
The value of $S$, denoted by $v(S)$ and defined as $\sum_{o_i \in S} |o_i|$, should never exceed the capacity. The goal is to  maximize $v(S)$ when the instance stops, given that the number of items is not known in advance. 

The competitive ratio ({\cra} for short) $\rho$ of a deterministic online algorithm is the worst case ratio between the value of its solution $\hat S$ and the value of an optimal solution $S^*$: 
$\rho := {v(\hat S)}/{v(S^*)}.$ Every {\cra} of this article is in  $[0,1]$. 
Unfortunately, no deterministic algorithm can exhibit a non-trivial    
{\cra} (i.e., bounded away from zero by any constant)\footnote{Suppose a first item $o_1$ 
is released. If $o_1$ is not packed, then the instance stops. Otherwise, a second item $o_2$ of size $1$ arrives and the instance stops. Clearly,  $o_2$ cannot be packed because of $o_1$. The {\cra} of any deterministic algorithm for this instance is at most $|o_1|=\max(0/|o_1|,|o_1|/1)$ but $|o_1|$ can be arbitrarily close to $0$.} for the online unweighted knapsack problem \cite{MSV95}. This is mostly due to the fact that decisions are irrevocable. An essential relaxation that circumvents this impossibility is to allow items to be removed from $S$ after their insertion. Iwama and Taketomi showed that the online unweighted knapsack problem with removable items admits a $\phi^{-1}$-competitive algorithm with $\phi^{-1}:=\frac{\sqrt{5}-1}{2} \approx 0.618$,     
and that no deterministic online algorithm can achieve a better {\cra} \cite{IT02}.  

Refined models have been proposed for improving upon the $\phi^{-1}$-competitive algorithm.  
One such model is \emph{recourse}, in which a recourse operation allows one item that was previously rejected to be eventually packed. Concretely, an item may be discarded from the solution upon its arrival and temporarily stored in a buffer. Subsequently, this item may eventually be inserted into the solution, which requires a so-called recourse operation. \pjmargincomment{For completeness}\pjedit{Items removed after being packed can also be stored in the buffer and reinserted by a recourse operation.}

In contrast to the unlimited number of removals, the number of recourse operations is typically small, as such operations can be costly in practice. B{\"{o}}ckenhauer \emph{et al.} have demonstrated (among other results) that the online unweighted knapsack problem  with at most $r \in \{1,2,3\}$ recourse operation(s) 
and removable\footnote{Recourse operations are useless without the removability property \cite{BKMRSW26}.} items admits a deterministic $\frac{r+1}{r+2}$-competitive   algorithm,  and that no algorithm can achieve better {\cra} with $r$ recourse operation(s) \cite{BKMRS23,BKMRSW26}.  Consequently, a single recourse operation suffices to improve upon the ratio of $\phi^{-1}$.

The second model assumes that item sizes are upper-bounded by a value strictly smaller than the knapsack capacity. Moreover, when dealing repeatedly with such bounded instances, this bound can be known in advance and with precision. 

Accordingly, \emph{bound awareness} refers to the ability to know from the outset an upper bound $u \in (0,1]$ such that $|o_i| \le u$ holds for every item $o_i$. 
As $u$ approaches $0$, the instance becomes increasingly constrained; however, the potential errors of an algorithm have a diminishing impact on the {\cra}. 
When items are removable and their sizes are upper bounded by $u$, Gourv{\`e}s and Pagourtzis \cite{GourvesP24} proposed an algorithm parameterized by $u$ whose {\cra} $\rho(u)$  improves upon $\phi^{-1}$ whenever $u \le \phi^{-1}$, given that $\phi^{-1}$ is known to be the optimal {\cra} (without recourse) when $u \in (\phi^{-1},1]$ \cite{IT02}. Figure \ref{fig4} comprises an illustration of $\rho(u)$  while its formal definition is given in Theorem \ref{thm11}.

In comparison with the $\frac{r+1}{r+2}$-competitive algorithm of  B{\"{o}}ckenhauer \emph{et al.}, $\rho(u)$ exceeds $2/3$ for $u< 0.6$, exceeds $3/4$ for $u< 0.4$, and exceeds $4/5$ for $u< 5/17$. 

The approaches of \cite{BKMRSW26} and \cite{GourvesP24} are complementary  yet incomparable: the former considers recourse without bound awareness, while the latter assumes bound awareness but allows no recourse.

\subsection{Contribution}

Our objective in this article is to extend the results of \cite{BKMRSW26} and \cite{GourvesP24}.  

To this end, we explore in Section \ref{sec:recourse} the possibility of combining recourse with bound awareness for the online  unweighted knapsack problem with removable items. We establish both upper and lower bounds on the optimal {\cra}, given an upper bound $u$ on item sizes and a fixed limit on the number of recourse operations. Our  bounds, established for many discrete values of $u$ (typically $u=1/k$ for integers $k$), increase as 
$u=1/k$ decreases, 
as does the number of possible recourse operations (on the order of $k/2$ for the lower bound and of $k$ for the upper bound). Our most significant results concern the most relevant cases in which $u$ is large (so the instance is only weakly constrained) and the number of possible recourse operations is small.

Figure \ref{fig4} provides an illustration where bullets (resp.,  triangles) correspond to lower (resp., upper) bounds on the optimal {\cra} for a small number of recourse operations (this number is indicated in parentheses, to the right of the bullet/triangle). 
Concretely, having  $\blacktriangle (r)$ at coordinates $(x,y)$ means that for instances whose items have size at most $x$, every algorithm using at most $r$ recourse operations has {\cra} at most $y$. Having $\bullet (r)$ at coordinates $(x,y)$ means that there exists a $y$-competitive algorithm using at most $r$ recourse operations
for instances whose items have size at most $x$.

\begin{figure}
\center
\begin{tikzpicture}[scale=8]

\draw[->] (-0.03,0.5)--(1.05,0.5);
\draw[->] (0,0.47)--(0,1.05);

\draw  (1,0.47)--(1,0.53);
\draw  (2/3,0.47)--(2/3,0.53);
\draw  (0.03,1)--(-0.03,1);
\draw  (0.03,2/3)--(-0.03,2/3);
\draw  (0.03,3/4)--(-0.03,3/4);
\draw  (0.03,4/5)--(-0.03,4/5);

\draw (-0.03,0.618033989)--(0.03,0.618033989);
\draw (0.618033989,0.53)--(0.618033989,0.47);
\draw (0.5,0.53)--(0.5,0.47);


\draw[dotted]  (0,0.618033989)--(1,0.618033989);
\draw[dotted]  (0,2/3)--(1,2/3);
\draw[dotted]  (0,3/4)--(0.618033989,3/4);
\draw[dotted]  (0,4/5)--(1/3,4/5);

\node () at (1.05,0.45) {$u$};
\node [rotate=90] () at (-0.05,1.05) {{\cra}};
\node () at (-0.08,1) {$1$};
\node () at (0.618033989,0.43) { $\frac{1}{\phi}$};
\node () at (1,0.43) {$1$};
\node () at (2/3,0.43) {$\frac{2}{3}$};
\node () at (0.5,0.43) {$\frac{1}{2}$};
\node () at (-0.08,0.5) {$0.5$};

\draw plot [domain=0:0.076118446,variable=\t] (\t,{1/(1+\t)});

\draw plot [domain=0.585786438:0.618033989,variable=\t] (\t,{(1-\t)/\t});
\draw plot [domain=0.394448725:0.414213562,variable=\t] (\t,{(1-\t)/(2*\t)});
\draw plot [domain=0.291796068:0.302775638,variable=\t] (\t,{(1-\t)/(3*\t)});
\draw plot [domain=0.229670386:0.236067977,variable=\t] (\t,{(1-\t)/(4*\t)});
\draw plot [domain=0.192582404:0.188611699,variable=\t] (\t,{(1-\t)/(5*\t)});
\draw plot [domain=0.16227766:0.159670332,variable=\t] (\t,{(1-\t)/(6*\t)});
\draw plot [domain=0.140054945:0.138260621,variable=\t] (\t,{(1-\t)/(7*\t)});
\draw plot [domain=0.123105626:0.121822171,variable=\t] (\t,{(1-\t)/(8*\t)});
\draw plot [domain=0.108824378:0.109772229,variable=\t] (\t,{(1-\t)/(9*\t)});
\draw plot [domain=0.099019514:0.098300563,variable=\t] (\t,{(1-\t)/(10*\t)});
\draw plot [domain=0.090169944:0.089612167,variable=\t] (\t,{(1-\t)/(11*\t)});
\draw plot [domain=0.08276253:0.082321372,variable=\t] (\t,{(1-\t)/(12*\t)});
\draw plot [domain=0.076473219:0.076118446,variable=\t] (\t,{(1-\t)/(13*\t)});

\draw (0.618033989,0.618033989)--(1,0.618033989);
\draw (0.414213562,0.707106781)--(0.585786438,0.707106781);
\draw (0.394448725,0.767591879)--(0.302775638,0.767591879);
\draw (0.291796068,0.809016994)--(0.236067977,0.809016994);
\draw (0.229670386,0.838516481)--(0.192582404,0.838516481);
\draw (0.188611699,0.86037961)--(0.16227766,0.86037961);
\draw (0.188611699,0.86037961)--(0.16227766,0.86037961);

\draw (0.159670332,0.877150706)--(0.140054945,0.877150706);
\draw (0.138260621,0.890388203)--(0.123105626,0.890388203);
\draw (0.121822171,0.901085803)--(0.109772229,0.901085803);
\draw (0.108824378,0.909901951)--(0.099019514,0.909901951);
\draw (0.098300563,0.917288177)--(0.090169944,0.917288177);
\draw (0.089612167,0.923563544)--(0.08276253,0.923563544);
\draw (0.082321372,0.928959478)--(0.076473219,0.928959478);
\draw (0.076118446,0.933647701)--(0.071067812,0.933647701);

\node () at (-0.24,0.618033989) {\scriptsize no recourse \cite{IT02} $\rightarrow 1/\phi$};
\node () at (-0.223,2/3) {\scriptsize 1 recourse \cite{BKMRSW26} $\rightarrow 2/3$};
\node () at (-0.23,3/4) {\scriptsize $2$ recourses \cite{BKMRSW26} $\rightarrow 3/4$};
\node () at (-0.23,4/5) {\scriptsize 3 recourses \cite{BKMRSW26} $\rightarrow 4/5$};



\node () at (1/3,5/6) {{\scriptsize $\blacktriangle$}};
\node () at (1/3+0.03,5/6) {{\tiny $(1)$}};


\node () at (2/5,4/5) {{\scriptsize $\blacktriangle$}};
\node () at (2/5+0.03,4/5) {{\tiny $(1)$}};

\node () at (2/7,6/7) {{\scriptsize $\blacktriangle$}};
\node () at (2/7+0.03,6/7) {{\tiny $(2)$}};

\node () at (2/9,8/9) {{\scriptsize $\blacktriangle$}};
\node () at (2/9+0.03,8/9) {{\tiny $(3)$}};

\node () at (2/11,10/11) {{\scriptsize $\blacktriangle$}};
\node () at (2/11+0.03,10/11) {{\tiny $(4)$}};

\node () at (2/15,14/15) {{\scriptsize $\blacktriangle$}};
\node () at (2/15+0.03,14/15) {{\tiny $(6)$}};

\node () at (1/3,4/5) {$\bullet$};
\node () at (1/3+0.03,4/5) {{\tiny $(1)$}};

\node () at (1/4,5/6) {$\bullet$};
\node () at (1/4+0.03,5/6) {{\tiny $(2)$}};

\node () at (1/5,6/7) {$\bullet$};
\node () at (1/5+0.03,6/7) {{\tiny $(2)$}};

\node () at (1/7,8/9) {$\bullet$};
\node () at (1/7+0.03,8/9) {{\tiny $(3)$}};

\node () at (1/9,10/11) {$\bullet$};
\node () at (1/9+0.03,10/11) {{\tiny $(4)$}};

\node () at (1/13,14/15) {$\bullet$};
\node () at (1/13+0.03,14/15) {{\tiny $(6)$}};

\node () at (5/8,3/4) {$\bullet$};
\node () at (5/8+0.03,3/4) {{\tiny $(1)$}};

\node () at (1,2/3) {$\bullet$};
\node () at (1+0.03,2/3) {{\tiny $(1)$}};

\node () at (2/3,2/3) {{\scriptsize $\blacktriangle$}};
\node () at (2/3+0.03,2/3) {{\tiny $(1)$}};
\node () at (0.61,0.78) {{\scriptsize $\blacktriangle$}};
\node () at (0.61+0.03,0.78) {{\tiny $(1)$}};

\end{tikzpicture}
\caption{Bounds on the {\cra} as a function of $u$. The solid curve depicts $\rho(u)$, the {\cra} of the bound aware algorithm of \cite{GourvesP24}. 
Bullets and triangles correspond to lower and upper bounds on the optimal {\cra} when recourse is allowed (only drawn for a small number of recourse operations). The maximum number of recourse operations is indicated in parentheses. On the left are the {\cra}s from \cite{IT02,BKMRSW26}, which are independent of $u$. 
\label{fig4}}
\end{figure}

Our second approach, developed in Section \ref{sec:adaptive},  overcomes a drawback of bounded awareness. We continue to assume that item sizes are upper-bounded by a value $u$ 
but $u$ is \emph{not} known in advance. Therefore, we propose an \emph{adaptive} version of the parametric algorithm of 
\cite{GourvesP24}: its {\cra} is approximately the same, but the parameter $u$ is updated each time a (larger) new item is disclosed. Finally, an adaptive version of the algorithm that allows for one recourse operation is proposed, thereby improving the previously established competitive ratio.

Importantly, our adaptive algorithm yields a learning-augmented variant that can use an uncertain upper-bound prediction to match non-adaptive performance when correct, degrading by only a small factor in the worst case when incorrect.

All our bounds on {\cra}s are for deterministic algorithms for the online unweighted knapsack problem with removable items, so we often omit this information.  \pjdelete{Due to space constraints, some technical parts are omitted and deferred to the full version.}


\pjmargincomment{Had comparison with conference version here.}
\pjdelete{A preliminary version of this work appeared in IWOCA 2026~\cite{FGPP26}. This full version includes the proofs omitted from the conference version: the correctness of Algorithm~\ref{algorec4}, the algorithm and analysis for Theorem~\ref{thmk34}, the complementary upper bounds, and the analyses of the adaptive and learning-augmented algorithms. It also includes corrections to the adaptive algorithms and their analyses.}

 \subsection{Additional Related Work}

We have already mentioned the most related to our work earlier results. Some additional related work is described below.  

Regarding the possibility of item removal, an alternative model was proposed by Han \emph{et al.}  
\cite{HKM14}
where removal induces a cost either proportional to the item's size (by a value $f$ known as the {\em buyback factor}) or equal to~$1$.
The objective is to maximize the worth of packed items minus the cost of removed items, counting only those that were removed after been packed. Under this model, Han \emph{et al.} 
studied the unweighted online knapsack problem \cite{HKM14}, obtaining ratios that tend to $1/2$ when the buyback cost tends to $0$. Babaioff \emph{et al.}~\cite{BHK09} considered, among others, the online weighted knapsack problem with a known upper bound $u \le 1$ on the size of any item.
They gave a deterministic algorithm for the case $u<1/2$ whose competitive ratio is $(1-2u)/(1+2f+2\sqrt{f(1+f)})$, which becomes $1-2u$ when $f=0$. This is below the ratio  $\phi^{-1}$  of~\cite{IT02} except for very small values of $u$, and is also inferior to the ratios obtained in~\cite{GourvesP24} for every value of the item size bound $u$. However, the results of~\cite{BHK09} hold for the weighted case as well; note that no deterministic algorithm can achieve any non-trivial  
\cra{} for the online {\em weighted} knapsack, by removal only \cite{IwamaZ10}. 
Another result of~\cite{BHK09} is a randomized algorithm whose \cra{} tends to~$1/3$ as $f$ goes to $0$. 
Randomized algorithms for the online knapsack problem with (costless) removal can be found in \cite{HKM15,CJS16}.

A variant to the idea of recourse was proposed by B{\"{o}}ckenhauer \emph{et al.} \cite{BBHLR21}, where the possibility to reserve an item (i.e., postpone the decision about it) is studied. In that model items are non-removable and the decision on which to restore is taken at the end, assuming that the algorithm is informed once the instance ends and can take action afterwards. Assuming that  reserving  an item costs $\alpha$ times its value, 
the authors  characterize the best competitive ratio for all possible values of $\alpha\in (0,1)$. The best ratio  in~\cite{BBHLR21} is $1/2$, obtained for small values of $\alpha$ ($\le 1/6$), and the performance deteriorates as $\alpha$ increases. Note that in~\cite{BKMRSW26} they achieve a $2/3$ \cra{} with just one use of recourse and no information about the end of the instance; however, the two models are incomparable as in~\cite{BBHLR21} there is no limit on the number of items that can be restored from the reserve, while in~\cite{BKMRSW26} there is no cost for reserving items in the recourse and bringing them into the solution.
Another model supposes that the items can be placed in a buffer of size $K>1$ before a selection of them is put in a knapsack of size 1 \cite{HanKMY19}; this model also assumes that the algorithm is informed when the input ends.  Reservation with removability has been considered in \cite{BGLMR23}.  

Other proposed approaches for online knapsack include the possibility of {\em augmentation} of the knapsack size~\cite{IwamaZ10}, the possibility of items being split and partially included in the knapsack~\cite{HanM10}, and the study of the \emph{advice complexity} of the  problem~\cite{BKKR14,BFR24}. The work of Zhou \emph{et al.} on the weighted knapsack problem also makes (restrictive) assumptions about both the size of objects and their weight-to-size ratio \cite{ZCL08}.

For more information about the online knapsack problem, the interested reader may refer to a recent survey by B{\"{o}}ckenhauer {\it et al.} \cite{BHKRS26}.

\section{Algorithms with  Recourse} \label{sec:recourse}

\subsection{Lower Bounds  on the Best Competitive Ratio}

We propose (and analyze the {\cra} of) Algorithm \ref{algorec4} where for some integer $k \ge 3$ an upper bound on the item sizes of $u=1/k$ is assumed and at most $\lfloor k/2 \rfloor$ recourse operations are allowed. 
Small and large items have size in $(0, \frac{1}{k+2}]$ and $(\frac{1}{k+2},\frac{1}{k}]$, respectively. 
It follows that any feasible solution contains at most $k+1$ large items. Upon the arrival of item $o_i$, the high-level strategy of Algorithm \ref{algorec4} is to maintain a solution $S$ according to the following rules: 
$(i)$ do not modify $S$ if $v(S) \ge \frac{k+1}{k+2}$;  $(ii)$ when $v(S) < \frac{k+1}{k+2}$, attempt to exploit $o_i$ together with at most $\lfloor \frac{k}{2} \rfloor$ items from the buffer (via recourse operations) in order to construct a feasible solution of value at least $\frac{k+1}{k+2}$; $(iii)$
 if step $(ii)$ fails, retain in the current solution the $\lceil \frac{k}{2}\rceil$ smallest large items,  
the $\lfloor \frac{k}{2}\rfloor$ largest large items, and all small items revealed so far.


\begin{algorithm}[!t]
\small
\caption{\label{algorec4} at most $\lfloor \frac{k}{2}\rfloor$ recourse operations, $u \le 1/k$,   for any integer $k\ge 3$}
\begin{algorithmic}[1]
\State $S \gets \emptyset$
\Comment \emph{Small and large items have size in $(0,\frac{1}{k+2}]$ and  $(\frac{1}{k+2},\frac{1}{k}]$, respectively}

\While {a new item $o_i$ arrives}

    \If {$v(S) \ge \frac{k+1}{k+2}$}

    \State \label{lineraf2_gen} Put $o_i$ in the buffer

     \Else \If{$v(S \cup \{o_i\})  \leq 1$}
    
          \State \label{lineincl2_gen} $S \gets S \cup \{o_i\}$
          
          \Else
          

          \If{\label{lineL2_gen} $v(\{o \in S: |o| > \frac{1}{k+2} \} \cup \{o_i\} ) \leq 1$}
              \State  $S \gets S \cup \{o_i\}$

              \While {$v(S)>1$}
                  \State \label{linetrim3_gen} Remove a small item from $S$ \Comment \emph{chosen arbitrarily} 
               \EndWhile
          \Else

           \State \label{linecalT_gen}  ${\cal T} \gets$ collection of all  sets of at most $k+1$ elements of $\{o_1, \ldots, o_i\}$ 
          
          \Statex \hspace{26.5mm} which contain at most $\lfloor \frac{k}{2}\rfloor$ buffered items

           \If{$\exists T \in {\cal T}$ such that $1 \ge v(T) \ge \frac{k+1}{k+2}$}
                
                \State \label{linetef2_gen} $S \gets T$ 

            \Else  
             \State \label{line2396_gen} Keep in $S$ the $\lceil \frac{k}{2}\rceil$ smallest large items seen so far,  the $\lfloor \frac{k}{2}\rfloor$ largest 
             
             \Statex \hspace{23.5mm} large items seen so far, and every small item seen so far

\EndIf

\EndIf

\EndIf

\EndIf
\EndWhile

\Statex
\end{algorithmic}
  \vspace{-0.4cm}%
\end{algorithm}

\begin{theorem} \label{thmk1k2} For an integer $k\ge 3$, Algorithm \ref{algorec4} is $\frac{k+1}{k+2}$-competitive when $u \le 1/k$.  In addition, Algorithm \ref{algorec4} uses at most $\lfloor \frac{k}{2}\rfloor$ recourse operations.
\end{theorem}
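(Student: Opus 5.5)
We prove the recourse bound first and then the competitive ratio, via an invariant maintained at the end of every round. Recourse is only used at line~\ref{linetef2_gen}, where $S$ becomes a set $T\in\mathcal{T}$ containing at most $\lfloor k/2\rfloor$ buffered items. Once this line is executed, $v(S)\ge \frac{k+1}{k+2}$. By rule $(i)$, $S$ is never modified again. Hence line~\ref{linetef2_gen} is executed at most once, and the total number of recourse operations is at most $\lfloor k/2\rfloor$. The other branches never reinsert buffered items: lines~\ref{lineincl2_gen} and~\ref{lineL2_gen} only add $o_i$, and line~\ref{line2396_gen} keeps only items currently in $S$. To make the last point valid, we need an invariant: whenever line~\ref{line2396_gen} is reached, the $\lceil k/2\rceil$ smallest large items, the $\lfloor k/2\rfloor$ largest large items and all small items seen so far are all in $S\cup\{o_i\}$.

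The competitive analysis is a case split on the final state.

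\emph{Case 1: $v(S)\ge \frac{k+1}{k+2}$ at the end.} Since $v(S^*)\le 1$, the ratio is at least $\frac{k+1}{k+2}$.

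\emph{Case 2: $v(S)<\frac{k+1}{k+2}$ at the end.} Then rule $(i)$ was never triggered and no item was ever buffered. We argue that $S$ is then close to optimal by looking at the last round.

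\emph{Case 2a: the last round ended at line~\ref{lineincl2_gen}.} Then nothing was ever discarded, so $S$ contains all items and is optimal.

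\emph{Case 2b: the last round ended at line~\ref{linetrim3_gen}.} We have just removed a small item, of size at most $\frac{1}{k+2}$, to get back to $v(S)\le 1$. So $v(S)>1-\frac1{k+2}=\frac{k+1}{k+2}$, contradicting the assumption of Case~2.

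\emph{Case 2c: the last round ended at line~\ref{line2396_gen}.} This is the main case, and we show directly that $v(S)\ge \frac{k+1}{k+2}v(S^*)$. The key fact is that the test at line~\ref{linetef2_gen} failed. Since no item was buffered, $\mathcal{T}$ then contains every set of at most $k+1$ items seen so far. Also, $S^*$ has at most $k+1$ large items, because each has size greater than $\frac1{k+2}$. From these facts:
\begin{itemize}
\item $S^*$ cannot have value in $[\frac{k+1}{k+2},1]$ using at most $k+1$ items.
\item Greedily adding small items to a large-item subset of $S^*$ would reach the interval $[\frac{k+1}{k+2},1]$, since each step increases the value by at most $\frac1{k+2}$.
\end{itemize}
I would use this to show that the total volume of all items seen is below $\frac{k+1}{k+2}$ except for a configuration of large items that conflict. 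Here I expect the argument to need that the small items can be added one at a time without crossing $1$, which makes the "at most $k+1$ elements" restriction in $\mathcal{T}$ the delicate point. It is probably handled by a second greedy argument showing that some set of at most $k+1$ items already reaches the threshold whenever the total exceeds it.

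The line~\ref{lineL2_gen} test failing means that the large items of $S$ plus $o_i$ exceed $1$. Together with the failure of line~\ref{linetef2_gen}, this should force a structure of $k+1$ large items whose sizes are spread out. Keeping the $\lceil k/2\rceil$ smallest and $\lfloor k/2\rfloor$ largest large items, $k$ in total, is then shown to be feasible. Its value satisfies: smallest $\lceil k/2\rceil$ plus largest $\lfloor k/2\rfloor$ is at least $\frac{k}{k+1}$ of any $k+1$ large items, up to the small-item slack. The fact that the combination is feasible but cannot reach $\frac{k+1}{k+2}$ gives the required ratio bound. This combinatorial pairing step, relating the chosen $k$ large items to the best feasible set in $S^*$, is the main obstacle. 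My first attempt would be an exchange argument: pair each large item of $S^*$ with a kept item, and use that every $(k+1)$-subset has value either above $1$ or below $\frac{k+1}{k+2}$.
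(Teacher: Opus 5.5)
There is a genuine gap, and it starts with the claim in Case~2 that no item was ever buffered. Every execution of line~\ref{line2396_gen} keeps only $k$ of the $k+1$ large items of $S\cup\{o_i\}$ and sends the remaining one to the buffer, and this happens while $v(S)<\frac{k+1}{k+2}$. For $k=3$, take items $0.24,0.25,0.27,0.26$ and then $0.01$. Every triple of the first four is below $0.8$ and all four exceed $1$, so the fourth round goes through line~\ref{line2396_gen} and buffers $0.26$. The last item is then accepted at line~\ref{lineincl2_gen}, yet $v(S)=0.77<0.79=v(S^*)$.

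So Case~2a is false. The last round says nothing about the history, and since the input may stop at any time you need an invariant carried round by round. Further copies of $0.26$ would each go through line~\ref{line2396_gen} as well. Hence more than $\lfloor k/2\rfloor$ items may be buffered, and $\mathcal{T}$ does not contain every set of at most $k+1$ items seen. The failed test at line~\ref{linetef2_gen} then certifies nothing about subsets of $S^*$ that would need more recourse. It also cannot control the small items of $S^*$, since members of $\mathcal{T}$ have at most $k+1$ elements. This is the ``delicate point'' you flag, and a greedy argument does not fix it.

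Case~2c itself is only a plan, and the pairing inequality you aim for is not what is needed. Two ideas are missing.
\begin{itemize}
\item \emph{Split small from large.} While $v(S)<\frac{k+1}{k+2}$, every small item seen is in $S$, so $v(S^s)\ge v(OPT^s)$. It therefore suffices to maintain $v(S^\ell)\ge\frac{k+1}{k+2}v(Q)$, with $Q$ a maximum-value feasible set of large items. Small items never need to enter a member of $\mathcal{T}$.
\item \emph{Use the structure of large items.} Any $k$ large items fit, and any feasible $k+1$ large items already have value at least $\frac{k+1}{k+2}$. This gives two cases.
\begin{itemize}
\item Such a set first appears with $o_i$. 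Then $o_i$ together with the $k$ smallest earlier large items is also feasible. It needs only $\lfloor k/2\rfloor$ recourses, because the $\lceil k/2\rceil$ smallest are kept.
\item All feasible large sets have at most $k$ items. Then $Q$ is just the $k$ largest large items seen, and $S^\ell$ has exactly $k$ items and cannot absorb another large item. This gives $v(S^\ell)>1-\frac1k$, which suffices when $v(Q)\le\frac{k+1}{k+2}$; this is where $k\ge3$ is used. At the first moment $v(Q)>\frac{k+1}{k+2}$, $Q$ is $o_i$ plus the $k-1$ largest earlier large items. The $\lfloor k/2\rfloor$ largest of these are kept, so $Q\in\mathcal{T}$.
\end{itemize}
\end{itemize}
This is exactly why the algorithm keeps the $\lceil k/2\rceil$ smallest and the $\lfloor k/2\rfloor$ largest large items.

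Your recourse count matches the paper's. However, the invariant that line~\ref{line2396_gen} uses no buffered item, and the feasibility of $S$ after that line, are only asserted. Both follow from $S_{i-1}$ containing exactly $k$ large items, together with $v(S_i^\ell)<\frac{k+1}{k+2}$ and $v(S_i^s)<\frac{k+1}{k+2}-(1-\frac1k)$.
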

\pjdelete{(Sketch)}
\begin{proof} 
The proof is by induction until it possibly happens that a feasible solution of value at least $\frac{k+1}{k+2}$ is built. As soon as 
the current solution has value at least $\frac{k+1}{k+2}$ (which occurs at line \ref{lineraf2_gen}, or when the while loop at line \ref{linetrim3_gen} is executed,\footnote{The while loop always creates a feasible solution of value at least 
$1-\frac{1}{k+2}=\frac{k+1}{k+2}$.} or at line \ref{linetef2_gen}), the algorithm 
 is clearly $\frac{k+1}{k+2}$-competitive  because the optimal value is at most 1. 

\pamargincomment{Made minor changes to save a line: Changed "Let us consider" to "Consider" and "We first" to "First,"}Consider the cases other than those of lines \ref{lineraf2_gen},  \ref{linetrim3_gen}, and \ref{linetef2_gen}. 
First, observe that if $v(S) < \frac{k+1}{k+2}$, then every small item disclosed so far belongs to $S$, and every item placed in the buffer is large. Indeed, an item is put into $S$ if it fits (cf. line \ref{lineincl2_gen}), which is always the case for small items because their size is at most $\frac{1}{k+2}$.

Let $OPT_{i}$  (resp., $S_i$) be an optimal solution (resp., the solution produced by Algorithm \ref{algorec4}) for the item set $\{o_1, \ldots,o_i\}$. Let $S_i^s$ and $S_i^\ell$ denote the restriction of $S_i$ to small and large items of $\{o_1, \ldots,o_i\}$, respectively. Let $Q_i$ be a subset of large items taken in $\{o_1, \ldots,o_i\}$ whose value is maximum under the constraint $v(Q_i) \le 1$. We claim that Algorithm \ref{algorec4} maintains a feasible solution which guarantees 
\begin{equation} \label{recinvariant_gen}
v(S_{j}^\ell) \ge \frac{k+1}{k+2} v(Q_{j})  
\end{equation}
for $j=1..i-1$ until it possibly happens that $v(S_i) \ge \frac{k+1}{k+2}$.  
The base case ($j=1$) of the induction hypothesis $(\ref{recinvariant_gen})$ holds trivially: if $o_1$ is large then $S_1^\ell=Q_1=\{o_1\}$, otherwise $S_1^\ell=Q_1=\emptyset$, so $v(S_1^\ell)=v(Q_1)$.  
From now on, we assume that $(\ref{recinvariant_gen})$ holds for $j=i-1$, right before $o_i$ is disclosed. 
Our proof also relies on the following observation  whose proof is \pjmargincomment{Kept the observation proof in the appendix so it does not interfere with reading the proof - it's still small so it can be added.}\pjreplace{deferred to the full version}{given in Appendix~\ref{app:obs1}}. 
\begin{observation} \label{intermobs} $v(S_j^\ell) \ge \frac{k+1}{k+2} v(Q_j)$  implies 
$v(S_j) \ge \frac{k+1}{k+2} v(OPT_{j})$.
\end{observation}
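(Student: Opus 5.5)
The plan is to split both $S_j$ and $OPT_j$ into their small and large parts and compare the two parts separately, with the large parts handled by the hypothesis and the small parts handled by the structure of the algorithm. Write $c:=\frac{k+1}{k+2}$.

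The key structural fact comes from the remark made just before the invariant. In the regime considered, $v(S_j) < c$, since otherwise $v(S_j)\ge c \ge c\, v(OPT_j)$ and there is nothing to prove. In this regime every small item disclosed among $o_1,\ldots,o_j$ belongs to $S_j$. The reason is that small items have size at most $\frac{1}{k+2}$ and so always fit when $v(S)<c$ (line \ref{lineincl2_gen}). I should also check that small items are never discarded outside the threshold-reaching cases. Line \ref{linetrim3_gen} removes small items only when the resulting solution has value at least $c$, and line \ref{line2396_gen} keeps every small item seen so far. Line \ref{linetef2_gen} may drop small items, but it immediately yields value at least $c$, and the remark at the start of the proof shows $S$ then never falls back below $c$. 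Hence $v(S_j^s)=\sum_{t\le j,\ o_t \text{ small}} |o_t|$.

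With this in hand I decompose $OPT_j = OPT_j^\ell \cup OPT_j^s$ into large and small items and bound each part.
\begin{itemize}
\item $OPT_j^\ell$ is a set of large items from $\{o_1,\ldots,o_j\}$ with $v(OPT_j^\ell)\le 1$, so by the maximality of $Q_j$ we get $v(OPT_j^\ell)\le v(Q_j)$.
\item $OPT_j^s$ is a subset of all small items seen so far, so $v(OPT_j^s)\le v(S_j^s)$.
\end{itemize}
Combining these with the hypothesis and using $c\le 1$,
\[
v(S_j)=v(S_j^\ell)+v(S_j^s)\ge c\,v(Q_j)+v(S_j^s)\ge c\,v(OPT_j^\ell)+c\,v(OPT_j^s)=c\,v(OPT_j).
\]

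The only step needing care is the structural claim that all small items remain in $S_j$. This requires checking each branch of Algorithm~\ref{algorec4}, as outlined above, to confirm that small items are never lost except when the solution value already reaches $c$. Once that is settled, the inequality chain is immediate.
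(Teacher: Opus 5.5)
Your proposal is correct and follows the paper's proof. Both arguments split $S_j$ and $OPT_j$ into small and large parts, bound $v(OPT_j^\ell)\le v(Q_j)$ by the maximality of $Q_j$, and bound $v(OPT_j^s)\le v(S_j^s)$ because every small item seen so far is kept while $v(S)<\frac{k+1}{k+2}$, then chain the inequalities. Your explicit case split on $v(S_j)\ge\frac{k+1}{k+2}$ and your line-by-line check that small items are lost only once the threshold is reached make rigorous what the paper states in a single sentence.
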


At line \ref{lineincl2_gen} of Algorithm \ref{algorec4}, $o_i$ is put into the current solution because it fits. If $o_i$ is small, then $S_i^\ell=S_{i-1}^\ell$, and $Q_i=Q_{i-1}$, so $v(S_i^\ell) = v(S_{i-1}^\ell)  \ge \frac{k+1}{k+2} v(Q_{i-1})= \frac{k+1}{k+2} v(Q_{i})$.  If $o_i$ is large, then $S_i^\ell=S_{i-1}^\ell \cup \{o_i\}$, and $v(Q_i) \le v(Q_{i-1}) + |o_i|$, so  
$v(S_i^\ell) = v(S_{i-1}^\ell) + |o_i| \ge \frac{k+1}{k+2} v(Q_{i-1}) + |o_i| \ge \frac{k+1}{k+2} v(Q_{i})$. In both cases, the insertion  of $o_i$ maintains the property of being $\frac{k+1}{k+2}$-competitive.

The last case to be analyzed is line \ref{line2396_gen}. In this context we have $v(S_{i-1})<\frac{k+1}{k+2}$, $v(S_{i-1} \cup \{o_i\})>1$ and $v(S^\ell_{i-1} \cup \{o_i\})>1$, implying that $o_i$ is a large item.

Suppose that at some point $i$, $\{o_1,\ldots, o_i\}$ contains for the first time a set $Y$ of $k+1$ large items such that $v(Y)\le 1$. This means that $o_i \in Y$. Since Algorithm \ref{algorec4} maintains the $\lceil \frac{k}{2}\rceil$ smallest large items seen so far, it can exploit $o_i$ and the $\lfloor \frac{k}{2}\rfloor$ recourse operations to produce a set $Y'$ of $k+1$ large items such that $v(Y') \le v(Y) \le 1$. Since every large item has size at least $\frac{1}{k+2}$, it holds that  $v(Y') \ge \frac{k+1}{k+2}$. Thus, the algorithm is able to produce a feasible solution of value at least  $\frac{k+1}{k+2}$ as soon as 
a feasible set of $k+1$ large items exists. 

It remains to consider the case where every set $Y$ of large items such that $v(Y)\le 1$ satisfies $|Y| \le k$.  This is true, in particular, for $S_i^\ell$. Moreover,  $S_i^\ell$ cannot contain less than $k$ large items because the test of line \ref{lineL2_gen} is false. Indeed, any set of $k$ large items has size at most 1 because every item has size at most $1/k$. One cannot take a large item of the buffer, say $\hat o \notin S_i^\ell$, and have  $v(S^\ell_{i} \cup \{\hat o\}) \le 1$, because every set $Y$ of large items such that $v(Y)\le 1$ satisfies $|Y| \le k$. This means that $v(S^\ell_{i} \cup \{\hat o\}) > 1 \pjreplace{\Leftrightarrow}{\Rightarrow} v(S^\ell_{i}) > 1-\frac{1}{k}=\frac{k-1}{k}$. 
If $v(Q_i) \le \frac{k+1}{k+2}$, then $\frac{v(S^\ell_{i})}{v(Q_i)} > \frac{k-1}{k} \cdot \frac{k+2}{k+1} \ge \frac{k+1}{k+2}$ when $k \ge 3$. 

The final case is when $v(Q_i) > \frac{k+1}{k+2}$. By the definition of $Q_i$, $v(Q_i)$ is non-decreasing with $i$. Let us suppose that $i$ is the first moment that $v(Q_i) > \frac{k+1}{k+2}$. This means that $Q_i$ contains $o_i$. 
Let $\{o'_1, \ldots,o'_{\lceil \frac{k}{2}\rceil}\} \cup \{p_1,  \ldots,p_{\lfloor \frac{k}{2}\rfloor}\}$ denote the $k$ (large) items of $Q_{i-1}$ where we assume w.l.o.g. that 
$|o'_1| \le |o'_2|  \le \ldots \le |o'_{\lceil \frac{k}{2}\rceil}| \le |p_1| \le |p_2| \le \ldots \le |p_{\lfloor \frac{k}{2}\rfloor}|$. 
We have $Q_i=Q_{i-1} \Delta \{o'_1,o_i\}$. 

Since $S_{i-1}^\ell$ keeps the $\lfloor \frac{k}{2}\rfloor$ largest items seen so far, we can assume that 
$\{p_1,  \ldots,p_{\lfloor \frac{k}{2}\rfloor}\} \subseteq S_{i-1}^\ell$. 
In other words, at most $\lfloor \frac{k}{2}\rfloor$ recourse operations are needed to produce $Q_i$ from $S_{i-1}^\ell$ 
because $\{p_1,  \ldots,p_{\lfloor \frac{k}{2}\rfloor}\}$ and $o_i$ are accessible without any recourse. Thus, $Q_i \in {\cal T}$ at line \ref{linecalT_gen} of Algorithm \ref{algorec4}, meaning that the algorithm can produce a solution of value at least  $\frac{k+1}{k+2}$.


Finally,  the following Observation~\ref{obs_correctness}, shows that Algorithm \ref{algorec4} is correct:  it produces a feasible solution and uses at most 
$\lfloor k/2 \rfloor$ recourse operations . \qed

\pjedit{

\begin{observation} \label{obs_correctness} Algorithm \ref{algorec4} is correct.  
\end{observation}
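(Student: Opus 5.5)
To prove Observation~\ref{obs_correctness}, I will check the two claims of correctness separately: every solution $S$ the algorithm maintains is feasible, i.e.\ $v(S)\le 1$, and the total number of recourse operations over the whole run is at most $\lfloor k/2\rfloor$.

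\textbf{Feasibility.} I go through the branches one by one.
\begin{itemize}
\item Line~\ref{lineraf2_gen} leaves $S$ unchanged.
\item Line~\ref{lineincl2_gen} inserts $o_i$ only when $v(S\cup\{o_i\})\le 1$.
\item In the branch of line~\ref{lineL2_gen}, the large items of $S$ together with $o_i$ have total size at most $1$. The while loop deletes small items, so it stops with $v(S)\le 1$ at the latest once no small item remains. The footnote's bound also holds: the last item removed has size at most $\frac{1}{k+2}$, so the final value exceeds $1-\frac{1}{k+2}$.
\item Line~\ref{linetef2_gen} chooses $T$ with $v(T)\le 1$ by definition.
\end{itemize}
The remaining case is line~\ref{line2396_gen}. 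There I must show that the kept set, namely the $\lceil k/2\rceil$ smallest and $\lfloor k/2\rfloor$ largest large items seen so far plus all small items, fits in the knapsack.
\begin{itemize}
\item The kept large items number at most $k$, each of size at most $1/k$, so their total is at most $1$.
\item For the small items I use the invariant established in the proof of Theorem~\ref{thmk1k2}. While $v(S)<\frac{k+1}{k+2}$, every small item seen so far is in $S$. Moreover, once the algorithm reaches a solution of value at least $\frac{k+1}{k+2}$, it freezes. So at line~\ref{line2396_gen}, all small items are already in $S_{i-1}$ and $v(S_{i-1})<\frac{k+1}{k+2}$.
\item The difficulty is that the large part of the new set may differ from $S^\ell_{i-1}$. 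I would argue that the large part of $S$ is always, by induction, exactly the set of $\lceil k/2\rceil$ smallest and $\lfloor k/2\rfloor$ largest large items seen so far (or all large items seen so far, if there are at most $k$), and that this set has at most $k$ items. When $o_i$ is large and rejected by the tests, the new selection is at least as light in total as it would be with $o_i$ in place of some element. Concretely, I would show that the new total is at most $v(S^\ell_{i-1})+\frac1k$ minus the size of the item it displaces.
\item Then I would bound the sum of the small part and this large part by $1$. As a fallback, if this exchange argument is not tight, I would use $v(S_{i-1})<\frac{k+1}{k+2}$ together with the fact that replacing one large item by another changes the value by less than $\frac1k-\frac1{k+2}$. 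I expect this step to be the main obstacle: it needs the invariant on the shape of $S^\ell$ and a careful exchange inequality.
\end{itemize}

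\textbf{Recourse count.} Recourse operations occur only at line~\ref{linetef2_gen}, by the definition of ${\cal T}$, and there they number at most $\lfloor k/2\rfloor$. Line~\ref{line2396_gen} uses none, which I will check by showing that every item it keeps is either already in $S_{i-1}$ or is $o_i$; this follows from the shape invariant above, since the new extremes differ from the old ones only through $o_i$. Line~\ref{linetef2_gen} is executed at most once, because afterwards $v(S)\ge\frac{k+1}{k+2}$ and every later item is handled by line~\ref{lineraf2_gen}. Combining these gives the total bound of $\lfloor k/2\rfloor$.
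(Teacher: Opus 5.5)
Your proposal is correct. The recourse count is argued exactly as in the paper. Recourse occurs only at line~\ref{linetef2_gen}, in a single round after which the solution is frozen. Line~\ref{line2396_gen} needs none, because $S_{i-1}^\ell$ consists of the $\lceil k/2\rceil$ smallest and $\lfloor k/2\rfloor$ largest large items seen so far, so the new selection is $S_{i-1}$ with at most one large item exchanged for $o_i$.

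The step you flag as the main obstacle, feasibility at line~\ref{line2396_gen}, is already finished by your fallback. The small part is unchanged, so either $S_i=S_{i-1}$ or $v(S_i)=v(S_{i-1})+|o_i|-|x|$ for some large $x$. Hence $v(S_i)<\frac{k+1}{k+2}+\frac1k-\frac1{k+2}=\frac{k}{k+2}+\frac1k\le 1$, and the last inequality is equivalent to $k\ge 2$. Your ``primary'' exchange bound, once added to $v(S_{i-1}^s)=v(S_{i-1})-v(S_{i-1}^\ell)$, is the same computation.

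The paper instead bounds the two parts of $S_i$ separately:
\begin{itemize}
\item $v(S_i^\ell)<\frac{k+1}{k+2}$, because $S_i^\ell$ is a set of $k$ non-buffered items. It is therefore a member of ${\cal T}$ of value at most $1$, and the failed search forces it below the target.
\item $v(S_i^s)=v(S_{i-1}^s)<\frac1k-\frac1{k+2}$, because the failed test at line~\ref{lineL2_gen} gives $v(S_{i-1}^\ell)>1-|o_i|\ge 1-\frac1k$.
\end{itemize}
Both routes reach the same bound $\frac{k}{k+2}+\frac1k$. Yours needs only the non-frozen bound $v(S_{i-1})<\frac{k+1}{k+2}$ and the width of the interval $(\frac1{k+2},\frac1k]$ of large sizes. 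The paper's route uses both failed tests and additionally shows that the small items held at that moment are negligible.

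In both routes the single-swap structure rests on $|S_{i-1}^\ell|=k$. You should state this explicitly in your shape invariant, which holds only while $v(S)<\frac{k+1}{k+2}$. The failed test at line~\ref{lineL2_gen} gives $|S_{i-1}^\ell|\ge k$, since any $k$ large items fit. The bound $v(S_{i-1})<\frac{k+1}{k+2}$ excludes $k+1$. Without this, line~\ref{line2396_gen} might drop nothing, and the swap argument would not apply.
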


\begin{proof}


Let us first consider how many times the recourse is used. By construction, recourse is made for possibly multiple items but during a single round, and for each recourse a solution whose value is between $\frac{k+1}{k+2}$ and $1$ is produced.  

At line \ref{linecalT_gen}, at most $\lfloor k/2\rfloor$ items are taken from the buffer. 
No recourse is necessary for building $S_i$ from $S_{i-1}$ and $o_i$ at line \ref{line2396_gen}. Indeed, we have seen in the proof of Theorem \ref{thmk1k2} that $S_{i-1}$ contains exactly $k$ large items. More precisely,  
$S_{i-1}$ contains the 
$\lceil \frac{k}{2}\rceil$ smallest large items, say $\{a_1, \ldots,a_{\lceil \frac{k}{2}\rceil}\}$, and the 
$\lfloor \frac{k}{2}\rfloor$ largest large items, say $\{b_1, \ldots,b_{\lfloor \frac{k}{2}\rfloor}\}$, of $\{o_1, \ldots,o_{i-1}\}$, 
where $|a_1| \le \ldots \le |a_{\lceil \frac{k}{2}\rceil}| \le |b_1| \le \ldots \le |b_{\lfloor \frac{k}{2}\rfloor}|$. If $|o_i| < |a_{\lceil \frac{k}{2}\rceil}|$, then $S_{i}=S_{i-1} \Delta \{o_i,a_{\lceil \frac{k}{2}\rceil}\}$. If $|o_i| > |b_{1}|$, then $S_{i}=S_{i-1} \Delta \{o_i,b_1\}$. If $|a_{\lceil \frac{k}{2}\rceil}| \le |o_i| \le |b_{1}|$, then $S_{i}=S_{i-1}$. 

It remains to prove that the algorithm maintains a solution whose value does not exceed 1. It is immediate at lines \ref{lineincl2_gen}, \ref{linetrim3_gen} and \ref{linetef2_gen}. Let us show that $v(S_i) \le 1$ holds at line \ref{line2396_gen}. We have $v(S_i^\ell) < \frac{k+1}{k+2}$ because $S_i^\ell \in {\cal T}$. We also know that $v(S_i^s)\le \frac{1}{k}-\frac{1}{k+2}$ because $v(S_{i}^s)=v(S_{i-1}^s) = v(S_{i-1})-v(S_{i-1}^\ell) < \frac{k+1}{k+2} - v(S_{i-1}^\ell)$, and  
$v(S_{i-1}^\ell)>1-|o_i|\ge 1-1/k$.  Thus, it holds that $v(S_i) = v(S_i^s) + v(S_i^\ell) \le \frac{k+1}{k+2}+\frac{1}{k}-\frac{1}{k+2} \le 1$ when $k \ge 2$. \qed
\end{proof}}

\end{proof}

Extending Theorem \ref{thmk1k2} to $k=2$ yields a {\cra} of $3/4$ using at most one recourse operation when $u\le 1/2$. In fact, a modified version of Algorithm \ref{algorec4}, \pjedit{Algorithm~\ref{algorec5},} using four classes of items   (small, medium, large and extra-large)  
gives the following stronger result. 
\begin{theorem} \label{thmk34} There exists a deterministic $\frac{3}{4}$-competitive algorithm using at most one recourse operation when $u \le 5/8$.
\end{theorem}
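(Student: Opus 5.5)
We design an algorithm in the spirit of Algorithm~\ref{algorec4} with target value $\frac34$ and a single recourse operation, now with four size classes adapted to $u=5/8$: small items in $(0,\frac14]$, medium in $(\frac14,\frac38]$, large in $(\frac38,\frac12]$, and extra-large in $(\frac12,\frac58]$. The thresholds are chosen so that any two items that do not fit together must be bigger than the gap $1-\frac34=\frac14$ left by the target; also, a medium or larger item together with any item of size at least $\frac38$ already reaches $\frac34$ if the pair fits.

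The algorithm keeps three rules. It does not touch $S$ once $v(S)\ge \frac34$. It greedily inserts items that fit. When a small item blocks insertion, it removes small items, which keeps the value $\ge 1-\frac14=\frac34$, exactly as in the while loop of Algorithm~\ref{algorec4}. Otherwise it tries to build a set $T$ of value in $[\frac34,1]$ from $o_i$, the current items and at most one buffered item. If this fails, it keeps a canonical configuration: all small items, plus one or two carefully chosen non-small items. Natural candidates are the smallest extra-large item, the smallest large item, and the largest and smallest medium items.

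The analysis mirrors the proof of Theorem~\ref{thmk1k2}. As long as $v(S)<\frac34$, every small item is in $S$ and the buffer holds only non-small items. An analogue of Observation~\ref{intermobs} reduces competitiveness to comparing the non-small part of $S$ with $Q_i$, the best feasible set of non-small items. Since non-small items exceed $\frac14$, any feasible set of them has at most three elements, and three of them always exceed $\frac34$. So the bad configurations, where $v(S)<\frac34$ persists, are those in which $OPT$'s non-small part has at most two items or is a triple of mediums. I would enumerate the types of $Q_i$: XL alone, L alone, M alone, XL+M, L+L, L+M, M+M, M+M+M, and XL+S-type fillers. For each type I would check one of two things. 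Either the canonical configuration already yields ratio $\ge\frac34$, typically because the retained item is the largest of its class and the small items are common to both solutions. Or, at the first moment $Q_i$ exceeds $\frac34$, $Q_i$ equals $o_i$ plus items, at most one of which is buffered, so that $Q_i\in\mathcal T$ and the recourse step succeeds.

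For example, M+M+M works like the $k+1$ smallest case of Theorem~\ref{thmk1k2}. Keeping the two smallest mediums lets $o_i$ complete a triple with no recourse. For pairs, keeping one item of each relevant kind means the partner of $o_i$ in $Q_i$ is either in $S$ or is a single buffered item, which costs one recourse.

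The main obstacle is designing the keep-rule at the failure step. A bounded set $S$ with $v(S)\le 1$ must simultaneously (a) hold, for every possible future item $o_i$, a complement reachable with one recourse, and (b) be $\frac34$-competitive right now. Conflicts arise, e.g., between keeping the smallest medium for triples and keeping the largest items for pairs, and also with the small items, since XL $+$ smalls may already be near $1$. I would first try to show that in every failure state $S$ can contain at most two non-small items, and then argue case by case that the right two exist. Here the bound $u\le\frac58$ should be exactly tight: XL+XL never fits, and an XL ($\le\frac58$) plus a medium ($\le\frac38$) always fits. I would also verify that the single recourse, once used, leads to a final state with value $\ge\frac34$, so that no further recourse is ever needed.
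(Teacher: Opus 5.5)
Your framework matches the paper's. You use four size classes and reduce, via the analogue of Observation~\ref{intermobs}, to the invariant $v(S_i^\ell)\ge\frac34 v(Q_i)$ on non-small items. You note that a failure state holds at most two non-small items, and you search $\mathcal T$ with one buffered item. But the proposal stops where the proof starts. You never fix what to keep when no target set exists, and the case analysis is left as something you ``would try'', so the theorem is not proved.

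That keep rule is the whole difficulty. The paper's rule is:
\begin{itemize}
\item if $S^\ell\cup\{o_i\}$ contains two extra-large items, keep the smaller;
\item if it contains an extra-large and a large item, keep the extra-large;
\item if it consists of three medium/large items, keep the smallest and the largest and buffer the median.
\end{itemize}
The paper then shows that at this step $S^\ell_{i-1}$ is 1 XL, 1 L, 2 M, or M+L. It also shows that arrivals yielding an M+XL or L+L pair cannot reach this step, since such pairs always lie in $[\frac34,1]$ and need at most one recourse. Keeping the smallest item $o_a$ handles a feasible triple $\{o_k,o_j,o_i\}$ with one recourse: $\{o_a,o_i,o_j\}$ fits because $|o_a|\le|o_k|$, and it exceeds $\frac34$ as three non-small items. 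Keeping the largest item gives ratio at least $\frac{1/4+3/8}{3/4}=\frac56$ against M+M or M+L optima of value below $\frac34$.

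Several of your intermediate claims are wrong in ways that matter here.
\begin{itemize}
\item A medium item plus an item of size at least $3/8$ need not reach $\frac34$ (e.g.\ $0.26+0.4$). Only M+XL and L+L pairs are guaranteed targets. M+L is exactly the pair whose value must be protected by retaining the large item.
\item Your list of types for $Q_i$ omits M+M+L and L+XL.
\item Resolving the conflict you identify in favour of triples fails. Suppose you keep the two smallest mediums so that triples need no recourse. On the input $0.261, 0.27, 0.47$, no subset has value in $[\frac34,1]$, so the keep step is reached. Keeping $\{0.261,0.27\}$ yields $0.531$ against an optimum of $0.74$, a ratio below $0.72$.
\end{itemize}
The single recourse has to be spent on completing triples, so that the largest item can be retained for the ratio.
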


\pjmargincomment{Moved this from the appendix.}
\pjedit{

\begin{algorithm}
\caption{\label{algorec5} at most 1  recourse operation and $u \le 5/8$}
\begin{algorithmic}[1] 
\State   $S \gets \emptyset$,  
\Comment{\emph{Small, medium, large and extra-large items have size in $(0,\frac{1}{4}]$, $(\frac{1}{4},\frac{3}{8}]$, $(\frac{3}{8},\frac{1}{2})$, and $[\frac{1}{2},\frac{5}{8}]$, respectively}}

\While{a new item $o_i$ arrives}
    \If{$v(S) \ge \frac{3}{4}$}
         \State   \label{lineraf58} Put $o_i$ in the buffer
    \Else
    
    \If{$v(S \cup \{o_i\})  \leq 1$}
    
    \State \label{lineincl58} $S \gets S \cup \{o_i\}$
 
 \Else

\State \label{lineL_58} Let $S^\ell=\{o \in S: |o| > \frac{1}{4} \}$ 

\If{\label{lineL2_58} $v(S^\ell \cup \{o_i\} ) \leq 1$}
\State  $S \gets S \cup \{o_i\}$

\While{$v(S)>1$}
\State  \label{linetrim_58} Remove a small item from $S$ (chosen arbitrarily)

\EndWhile

\Else

\State \label{linecalT_58} ${\cal T} \gets$ collection of all sets of at most $3$ elements of $\{o_1, \ldots, o_i\}$ 

\Statex \hspace{28.2mm} which contain at most $1$ item of the buffer

\If{\label{lineExists_58} $\exists T \in {\cal T}$ such that $1 \ge v(T) \ge \frac{3}{4}$}

\State   \label{lineSet_58} $S \gets T$ 

\Else 

\State    \label{lineTamb_58} If $S^\ell \cup \{o_i\}$ contains two extra-large items, then place the

\Statex \hspace{25mm} largest one into the buffer, and keep the smallest one.

\Statex \hspace{25mm} Otherwise, if $S^\ell \cup \{o_i\}$ contains an extra large item and a

\Statex \hspace{25mm} large item, then place the large item into the buffer and

\Statex \hspace{25mm} keep the extra-large item. 
Otherwise, if $S^\ell \cup \{o_i\}$ contains 

\Statex \hspace{25mm} 3 items (i.e., 3 medium items, or 2 medium items and 1

\Statex \hspace{25mm} large item), then place the median one into the buffer, and 

\Statex \hspace{25mm} keep the remaining 2 items (i.e., the smallest and the 

\Statex \hspace{25mm} largest)  

\EndIf

\EndIf

\EndIf 
\EndIf 

\EndWhile

\end{algorithmic}
\end{algorithm}

\pjreplace{We}{For the proof, we} are going to analyze Algorithm \ref{algorec5}  
which tries to produce a solution whose value is between $3/4$ and $1$. Such a solution is called a \emph{target} solution. We are particularly interested in \emph{non-small items}, i.e., items whose size is strictly larger than $1/4$. 
More specifically, we say that an item $o$ is small, medium, large, and extra-large when $|o| \in (0,\frac{1}{4}] $, 
$|o| \in (\frac{1}{4}, \frac{3}{8}]$, $|o| \in (\frac{3}{8}, \frac{1}{2})$, and $|o| \in [\frac{1}{2},\frac{5}{8}]$, respectively.  Hence, non-small items are either medium, large or extra-large.   At line \ref{lineTamb_58}, we use the notion of \emph{median} item. Given three items $\{o_a,o_b,o_c\}$, we say that $o_b$ is a median item if $|o_a| \le |o_b| \le |o_c|$. 

\medskip

\pjdelete{\noindent {\bf Theorem \ref{thmk34}.} \emph{ There exists a deterministic $\frac{3}{4}$-competitive algorithm using at most one recourse operation when $u \le 5/8$.}}

\begin{proof} Algorithm \ref{algorec5} follows the same line as Algorithm \ref{algorec4}. The same goes for their analyses. Indeed, using the same notations and arguments as in the proof of Theorem \ref{thmk1k2} it is not difficult to prove that Algorithm \ref{algorec5} is $3/4$-competitive, except when line \ref{lineTamb_58} is executed.  This case happens when 
\begin{equation} \label{aretenir581}
v(S^\ell_{i-1} \cup \{o_i\}) > 1
\end{equation}
and 
\begin{equation} \label{aretenir582}
v(S^\ell_{i-1}) \le v(S_{i-1}) < 3/4
\end{equation}
hold, where $S^\ell_{i-1}$ designates $S^\ell$ right before $o_i$ is disclosed. By construction, $S_{i-1}$ contains all the small items disclosed so far.  It follows from (\ref{aretenir581}) that $o_i$ is non-small. 


Since $u<3/4$ , and because non-small items have size strictly larger than $1/4$, a combination of non-small items whose value lies in the interval $[3/4,1]$ must contain two or three elements. Table \ref{tabcombin2} provides the list of combinations of two or three non-small items, and the possibility that they satisfy the lower and upper bounds for being a target solution. 

\begin{table}
\caption{\label{tabcombin2} Eight combinations of non-small items and comments on their total value. For example, ``always $\ge 3/4$'' means that the total value of the corresponding combination is always above $3/4$, whereas ``possibly $>1$'' indicates that the total value may exceed 1.}
\begin{center}\begin{tabular}{|c|l|c|c|}
\hline
\#1&2 medium items&possibly $< 3/4$&always $\le 1$\\
\#2&1 medium item + 1 large item&possibly $< 3/4$&always $\le 1$\\
\#3&1 medium item + 1 extra large item&always $\ge 3/4$&always $\le 1$\\
\#4&2 large items&always $\ge 3/4$&always $\le 1$\\
\#5&3 medium items&always $\ge 3/4$&possibly $>1$\\
\#6&1 large item + 1 extra large item&always $\ge 3/4$&possibly $>1$\\
\#7&2 extra large items&always $\ge 3/4$&possibly $>1$\\
\#8&2 medium items + 1 large item&always $\ge 3/4$&possibly $>1$\\
\hline
\end{tabular}
\end{center}
\end{table}

For every other combination of non-small items it is guaranteed that its value is out of the interval $[3/4,1]$.   
Combinations {\#3} and {\#4} are particularly interesting because they are guaranteed to be target solutions.

Note that $|S_{i-1}^\ell| \in \{1,2\}$ because $|S_{i-1}^\ell| =0$ and $|S_{i-1}^\ell| \ge 3$ would contradict  (\ref{aretenir581}) and (\ref{aretenir582}), respectively.

Having just a medium item in $S_{i-1}^\ell$ would contradict (\ref{aretenir581}). Inequality (\ref{aretenir582}) would be contradicted if $S_{i-1}^\ell$ contained one of the following combinations of two non-small items: 1 extra-large item plus another non-small item, or 2 large items. Thus, the only possibilities for the composition of $S_{i-1}^\ell$ are: 1 large item, 1 extra-large item, 2 medium items, and 1 medium plus 1 large items. 
All the possible cases are listed in Table \ref{atct}.

\begin{table}
\caption{\label{atct}All the possible cases at line \ref{lineTamb_58}.}
\begin{center}
\begin{tabular}{|c|c|c|c|}
\hline
composition of $S_{i-1}^\ell$&$o_i$ is extra-large&$o_i$ is large& $o_i$ is medium\\
\hline
1 extra-large item&case $1$&case $2$&case $3$\\
1 large item&case $4$&case $5$&case $6$\\
2 medium items&case $7$&case $8$&case $9$\\
1 medium and 1 large items&case $10$&case $11$&case $12$\\
\hline
\end{tabular}
\end{center}
\end{table}

For cases 3, 5, 7, 10 and 11, a target solution similar to combinations \#3 or \#4 can be created, so they are incompatible with the execution of line \ref{lineTamb_58}.  Case 6 does not occur because it contradicts (\ref{aretenir581}). Thus, line \ref{lineTamb_58} covers all the remaining cases (1,2, 4, 8, 9, and 12), meaning that Algorithm \ref{algorec5} is correct.

Let $Q_j$ be a set of non-small items of largest value within $\{o_1, \ldots,o_j\}$, without exceeding 1, for $j=1..i$. As shown in the proof of Theorem \ref{thmk1k2}, it suffices to prove $v(S_i^\ell) \ge \frac{3}{4} v(Q_i)$ to get that Algorithm \ref{algorec5} is $\frac{3}{4}$-competitive at step $i$. And as in the proof of Theorem \ref{thmk1k2}, we make the induction hypothesis that $v(S_{i-1}^\ell) \ge \frac{3}{4} v(Q_{i-1})$.

Consider the cases 1, 2 and 4. By construction, the algorithm maintains one extra-large item in $S^\ell_i$, so $v(S^\ell_i) \ge 1/2$. If $Q_i$ consists of a single non-small item, then $v(Q_i) \le 5/8$. We deduce that  $\frac{v(S_i^\ell)}{v(Q_i)} \ge \frac{1}{2} \cdot \frac{8}{5}=\frac{4}{5} > \frac{3}{4}$. Suppose $Q_i$ consists of two non-small items. If there is a medium item in $Q_i$, then it is possible to produce a combination \#3 with a single recourse (use the extra-large item of $S_i^\ell$), and reach the value of at least $3/4$: contradiction with the fact that no such combination exists in ${\cal T}$. Every pair of non-small items which does not include any medium item has value at least $3/4$, so we again reach a contradiction. Finally, suppose that $Q_i$ consists of three non-small items. Then, $Q_i$ must include a medium item, implying that it is possible to produce a combination \#3 with a single recourse: contradiction.

Consider the cases 8, 9 and 12.  By construction, the algorithm maintains two non-small items in $S^\ell_i$, the smallest and the largest, 
$S^\ell_i$ is either a pair of medium items or a medium and a large items, and $v(S^\ell_i) > 2\cdot\frac{1}{4}=\frac{1}{2}$. If $Q_i$ consists of a single non-small item, then $v(Q_i) \le 5/8$, and we deduce that  $\frac{v(S_i^\ell)}{v(Q_i)} \ge \frac{1}{2} \cdot \frac{8}{5} > \frac{3}{4}$.  

Suppose $Q_i$ consists of two non-small items. If there is an extra-large item in $Q_i$, then it is possible to produce a combination \#3 with a single recourse (use a medium item of $S_i^\ell$), and reach the value of at least $3/4$: contradiction with the fact that no such combination exists in ${\cal T}$. 

Suppose $Q_i$ consists of two medium items. \pjreplace{Because $v(Q_i)\geq v(S_i^\ell)$}{Since $Q_i$ has maximum value and a medium and a large item always fit together}, $S_i^\ell$ also contains two medium items.  
Since, by construction, the algorithm keeps the smallest and the largest items, we have $v(S^\ell_i)=x+y$ where $\frac{1}{4} < x \le y \le \frac{3}{8}$, and $v(Q_i)\le 2y$, so $\frac{v(S^\ell_i)}{v(Q_i)} \ge \frac{1/4+3/8}{2\cdot 3/8}=\frac{5}{6} >\frac{3}{4}$.  

Suppose $Q_i$ consists of a medium and a large items. Since, by construction, the algorithm keeps the smallest and the largest non-small items, $S_i^\ell$ also contains a medium and a large items. Moreover, the large item of $S_i^\ell$ is at least as large as the large item of $Q_i$. We have $v(S^\ell_i)=x+y$ where $\frac{1}{4} < x  \le \frac{3}{8} < y \le \frac{1}{2}$, and $v(Q_i)\le \frac{3}{8} +y$, so $\frac{v(S^\ell_i)}{v(Q_i)} \ge  \frac{1/4+y}{3/8+y} \ge  \frac{1/4+3/8}{2\cdot 3/8}=\frac{5}{6} >\frac{3}{4}$.  

Suppose $Q_i$ consists of two large items, corresponding to combination \#4 whose value is always in $[3/4,1]$. Such a target solution belongs to ${\cal T}$ because at most one recourse is needed to produce it: contradiction.  

The remaining case is when $Q_i$ consists of three items, say $\{o_k,o_j,o_i\}$ where $k<j<i$. 
The set $Q_i$ satisfies $v(Q_i) \ge 3/4$, and it must contain three medium items, or two medium items plus one large item, otherwise $v(Q_i)>1$. 
We claim that a contradiction is reached because a single recourse is enough for constructing a solution $T \in {\cal T}$ such that $v(T) \in [\frac{3}{4},1]$ with the use of $o_i$ and the non-small items of $S_{i-1}^\ell$.  


Indeed, by construction, the algorithm keeps the smallest non-small item, so we have 
$S^\ell_{i-1}=\{o_a,o_b\}$ where $o_i \notin \{o_a,o_b\}$, $|o_a| \le |o_b|$, 
and $|o_a| \le \min \{|o_k| , |o_j|\}$. Suppose without loss of generality that $o_a \neq o_j$ (otherwise, swap $o_j$ and $o_k$ in our reasoning). One can use (if necessary) the recourse to get $o_j$ and produce the set $\{o_a,o_i,o_j\}$ such that $v(\{o_a,o_i,o_j\}) \le v(\{o_k,o_i,o_j\}) \le 1$.   Moreover, $v(\{o_a,o_i,o_j\}) \ge 3/4$ because every member of $\{o_a,o_i,o_j\}$ is non-small. \qed
\end{proof}

}

Compared with \cite{BKMRSW26}, for identical competitive ratios, our results use fewer recourse operations, albeit under the assumption of bounded item sizes. Likewise, compared with \cite{GourvesP24}, for identical bounds on item sizes, we achieve better competitive ratios thanks to the use of recourse.

\subsection{Upper Bounds on the Best Competitive Ratio}

It is known from  \cite[Thm. 1]{BKMRSW26} that deterministic algorithms that use at most $r$ recourse operations have a {\cra} of at most $\frac{r+1}{r+2}$. This bound is established via 
instances in which the largest item has size $\frac{r+1}{r+2}+\varepsilon$. Consequently, both the size of the largest item and the upper bound on the {\cra} increase with $r$. However, it is also known that the optimal {\cra} increases as $u$ decreases \cite{GourvesP24}. This motivates the need for novel upper bounds on the {\cra} that reflect this dependence on 
$u$.

\begin{lemma} \label{lemmaub1rec2}
For every integer $k \ge 3$, there exists an upper bound of $\frac{2k-1}{2k}$ on the  competitive ratio of deterministic algorithms with at most $k-2$ recourse operations when  $\frac{1}{k}< u \le 1$.
\end{lemma}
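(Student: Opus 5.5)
The plan is an adaptive adversary argument in the spirit of \cite[Thm.~1]{BKMRSW26}, with the item sizes rescaled so that every item has size at most $\frac{1}{k}+\varepsilon \le u$ for a small $\varepsilon>0$. Fix a deterministic algorithm ALG that uses at most $k-2$ recourse operations, and suppose its competitive ratio exceeds $\frac{2k-1}{2k}$. The adversary's goal is an instance where, at some round, ALG's solution has value at most $\frac{2k-1}{2k}+O(\varepsilon)$ while OPT is about $1$. The key mechanism is that the only near-optimal solutions available at that point need $k-1$ previously rejected items, one more than ALG's recourse budget.

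\emph{Phase 1 (forcing a commitment).} Release $k-1$ items $b_1,\dots,b_{k-1}$ of size $\frac{1}{k}+\varepsilon$. Their total is $\frac{k-1}{k}+(k-1)\varepsilon$, and they fit together. If ALG does not hold all of them when the sequence could stop, its value is at most $\frac{k-2}{k}+O(\varepsilon)<\frac{2k-1}{2k}$, so the adversary stops. Hence ALG holds every $b_j$.

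\emph{Phase 2 (half-size items).} Release items $c_1,c_2,\dots$ of size about $\frac{1}{2k}$. Their sizes are perturbed so that two $c$'s together slightly exceed the residual capacity $\frac{1}{k}-(k-1)\varepsilon$ left beside all the $b_j$. The sizes are also chosen so that the $c$'s combine with $k-2$, but not $k-1$, of the $b_j$ to fill the knapsack almost exactly. The intended consequence is that any solution of value $\approx 1$ consists of roughly $k-2$ big items plus two or three $c$'s, or of many $c$'s. At each step, one of two things happens. Either ALG keeps all $k-1$ big items and at most one $c$, which gives value $\approx \frac{k-1}{k}+\frac{1}{2k}=\frac{2k-1}{2k}$ while OPT $\to 1$, and the adversary stops. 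Or ALG drops some $b_j$ to make room, which moves $b_j$ to the buffer.

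\emph{Phase 3 (exhausting recourse).} After ALG drops big items, the adversary releases further items sized so that the completions of ALG's current set that reach value above $\frac{2k-1}{2k}$ all require reinserting previously discarded items. The $\varepsilon$-perturbations of the $c$'s will be chosen in decreasing order, as in the constructions of \cite{IT02,BKMRSW26}, so that any near-optimal combination must use items that ALG has already buffered. Each recourse spent brings ALG back to a configuration of the same type as before. After at most $k-2$ recourses, ALG can no longer reach above $\frac{2k-1}{2k}+O(\varepsilon)$, and the adversary stops. Letting $\varepsilon\to 0$ then gives the bound.

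The main obstacle is calibrating the sizes of the $c$'s, and of the later items, so that two properties hold together. First, every combination worth more than $\frac{2k-1}{2k}$ must genuinely need $k-1$ buffered items. Second, the sizes must stay below $\frac{1}{k}+\varepsilon$, which is what makes the hypothesis $u>\frac{1}{k}$ both sufficient and necessary here. I would first try geometric perturbations $\varepsilon_t=\varepsilon/2^t$ and verify the two properties by a counting argument on how many items of size above $\frac{1}{k}$ fit in the knapsack, namely at most $k-1$.
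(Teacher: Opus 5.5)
Your Phase~1 matches the paper's opening. Phases~2 and~3, however, have a genuine gap: the construction that is supposed to do the work is never specified, and the parts you do specify point the wrong way.

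Choosing the $c$'s so that they ``combine with $k-2$ of the $b_j$ to fill the knapsack almost exactly'' is self-defeating. Take all $c$'s of size $\frac{1}{2k}-\frac{(k-2)\varepsilon}{4}$. This satisfies both of your requirements: two of them overflow beside all $b_j$, and $k-2$ big items plus four of them have value exactly $1$. Then ALG keeps $b_1,\dots,b_{k-1},c_1$ and buffers $c_2,c_3$. When $c_4$ arrives, it drops one $b_j$ and recalls $c_2,c_3$, reaching value $1$ with two recourses, which is within budget for $k\ge 4$. For $k=3$, it recalls $c_2$ when $c_3$ arrives and then simply inserts $c_4$. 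Once ALG holds value $\approx 1$, nothing the adversary does later can hurt it.

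Two related statements are also wrong. First, $k-2$ big items plus two or three $c$'s are worth about $\frac{k-1}{k}$ or $\frac{2k-1}{2k}$, not about $1$. Second, while ALG holds all $b_j$ plus one $c$, the optimum is \emph{not} close to $1$; in the right construction it stays at $\frac{2k-1}{2k}+O(\varepsilon)$ until the very last item, so the adversary cannot stop there. Finally, Phase~3 states the desired outcome (``each recourse brings ALG back to a configuration of the same type'') without giving an instance that forces it.

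What is missing is the counting mechanism, and it needs no perturbations and no later items. After the $k-1$ items of size $\frac1k+\varepsilon$, release $2k$ items of size exactly $\frac{1}{2k}$.
\begin{itemize}
\item A feasible set with $m\ge 1$ big and $j$ small items must satisfy $2m+j\le 2k-1$, because $2km\varepsilon>0$. Its value is therefore at most $\frac{2k-1}{2k}+(k-1)\varepsilon$.
\item Hence the optimum stays at that level until the $2k$-th small item arrives. Only then does it become $1$, and only the all-small set attains it.
\item Any solution of ALG with $2m+j\le 2k-2$ gives ratio about $\frac{2k-2}{2k-1}<\frac{2k-1}{2k}$, and the adversary stops. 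So from the first small item on, ALG must hold $m$ big and $2k-1-2m$ small items in every round.
\item Only one new item arrives per round. Lowering $m$ by $d$ in a round raises $j$ by at least $2d$, so it needs at least $2d-1\ge d$ recalled items. Every discarded big item thus costs a recourse.
\item With $k-2$ recourses, ALG still holds a big item when the last small item arrives, so its ratio is at most $\frac{2k-1}{2k}+O(\varepsilon)$.
\end{itemize}
This is the paper's proof. Note that the shortfall of ``one more than the budget'' accumulates over $k-1$ separate conversions. It does not come from a single round in which $k-1$ buffered items are needed at once, as your framing suggests.
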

\begin{proof} Take any online algorithm and suppose that an adversary reveals the following instance:  $k-1$ \emph{large} items of size $\frac{1}{k}+\varepsilon$ followed by  $2k$ \emph{small} items of size $\frac{1}{2k}$, where $ \frac{1}{2k} \gg \varepsilon >0$.  
The instance can be interrupted at any moment if the algorithm's solution is (asymptotically) $(1-\frac{1}{2k})$-approximate, or worse. 

The first $k$ items fit into the knapsack. If the algorithm discards any of them, then the {\cra} is at most $1-\frac{1}{2k-1}$ as $\varepsilon \rightarrow 0$. Indeed, the optimal solution has value $(k-1)(\frac{1}{k}+\varepsilon)+\frac{1}{2k}$ whereas the algorithm's solution has value at most  $(k-1)(\frac{1}{k}+\varepsilon)$. Henceforth, we assume that the first $k$ items belong to the solution.

When the $(k+1)$-th item is revealed, it does not fit into the current solution, while the optimal value remains $(k-1)(\frac{1}{k}+\varepsilon)+\frac{1}{2k} = 1-\frac{1}{2k} +(k-1)\varepsilon$. The optimal value increases to 1 only upon the arrival of the final item, when the optimum consists of 
$2k$ small items. 
At this point, the algorithm may 
$(i)$ replace 1 large item with 1 small item, or
$(ii)$ buffer a small item,  or $(iii)$  replace 1 large item with 2 small items. 


Action $(i)$ decreases the value of the current solution by $\frac{1}{2k}+\varepsilon$, yielding a {\cra} of $1-\frac{1}{2k-1}$; hence, the adversary can terminate the instance. Actions $(ii)$ and $(iii)$  do not significantly affect the solution value, but action $(iii)$ can be performed at most  $k-2$ times, as each use requires a recourse operation.

Since the instance contains $k-1$ large items, the algorithm must either perform action $(i)$, or reach the end of the instance, where the optimal value is 1 
while the algorithm's solution has value at most $(k-1)(\frac{1}{k}+\varepsilon)+\frac{1}{2k}= 1-\frac{1}{2k} +(k-1)\varepsilon$. In any case, the {\cra} is at most $1-\frac{1}{2k}$ as $\varepsilon \rightarrow 0$.  \qed
\end{proof}

In the same spirit as Lemma \ref{lemmaub1rec2}, the following upper bounds can be derived.

\begin{lemma} \label{lemmaub1rec3} For every integer $k \ge 2$, there exists an upper bound of $\frac{2k}{2k+1}$ on the {\cra}  of deterministic algorithms with at most $k-1$ recourse operations when $\frac{2}{2k+1} < u \le 1$.
\end{lemma}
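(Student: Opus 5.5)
The plan is to mimic the adversary of Lemma~\ref{lemmaub1rec2}, with the item sizes rescaled so that $2k+1$ small items exactly fill the knapsack and each large item "costs" two small slots plus a negligible $\varepsilon$. Fix $\varepsilon$ with $0<\varepsilon\ll\frac{1}{2k+1}$. The adversary first reveals $k$ \emph{large} items of size $\frac{2}{2k+1}+\varepsilon$. This is admissible since $\frac{2}{2k+1}<u$. It then reveals up to $2k+1$ \emph{small} items of size $\frac{1}{2k+1}$, interrupting the instance as soon as the algorithm's ratio is asymptotically at most $\frac{2k}{2k+1}$.

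\emph{Phase 1 (large items).} All $k$ large items fit together, since $k(\frac{2}{2k+1}+\varepsilon)\le 1$ for small $\varepsilon$. If the algorithm fails to hold all $m$ large items revealed so far, for some $m\le k$, the adversary stops. The ratio is then at most $\frac{m-1}{m}\le\frac{k-1}{k}<\frac{2k}{2k+1}$. So we may assume that after phase 1 the solution consists of the $k$ large items, of value $\frac{2k}{2k+1}+k\varepsilon$. This already equals the optimum, and no small item fits.

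\emph{Key counting observation.} A feasible set with $a$ large and $b$ small items satisfies $2a+b+(2k+1)a\varepsilon\le 2k+1$. Hence if $a\ge1$ then $2a+b\le 2k$, and the value is at most $\frac{2k}{2k+1}+k\varepsilon$. The value $1$ is reachable only with $a=0$ and $b=2k+1$. During phase 2, before the last small item, the optimum stays $\frac{2k}{2k+1}+k\varepsilon$. After the last one it becomes $1$. So at the end the algorithm beats ratio $\frac{2k}{2k+1}$ only if it has discarded all $k$ large items.

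\emph{Phase 2 (small items).} Measure the algorithm's state by the potential $2a+b$, which starts at $2k$. Whenever this potential drops below $2k$, the value is at most $\frac{2k-1}{2k+1}+O(\varepsilon)$. The ratio against the current optimum ($\ge\frac{2k}{2k+1}$) is then at most $\frac{2k-1}{2k}<\frac{2k}{2k+1}$, and the adversary stops. So we may assume the potential stays $2k$. Consider any step that removes large items while keeping the potential. Each removed large item must be replaced by two small items. Upon arrival of one small item, at most one small item enters without recourse (the current one). Therefore each net reduction of $a$ by one requires at least one small item taken from the buffer, that is, one recourse operation. Here I must handle carefully the case where several large items are removed in one round together with several buffered smalls; the per-round accounting (number of buffered items inserted $\ge$ net decrease of $a$) still gives the bound. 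With at most $k-1$ recourses, at least one large item remains at the end. The solution value is then at most $\frac{2k}{2k+1}+k\varepsilon$ against $OPT=1$, giving the bound $\frac{2k}{2k+1}$ as $\varepsilon\to0$.

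The main obstacle is making this per-round recourse accounting rigorous. I would try to prove it through the invariant ``potential $=2k$ and (number of recourses used) $\ge k-a$''. I would also have to rule out strategies that temporarily lose value and later recover, which the immediate-stop rule handles.
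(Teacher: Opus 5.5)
Your proposal is correct and follows the paper's proof. It uses the same adversary ($k$ items of size $\frac{2}{2k+1}+\varepsilon$, then $2k+1$ items of size $\frac{1}{2k+1}$, stopped as soon as the ratio falls to about $\frac{2k}{2k+1}$), the same phase-1 argument, and the same key fact that a large item can only be traded for two small items at the price of one recourse. Your potential $2a+b$ and per-round count (at least $2d-1\ge d$ buffered insertions when $a$ drops by $d$) make rigorous what the paper argues informally by listing three possible actions, and they also cover rounds that move several items at once and the final round.
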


\pjmargincomment{Moved from appendix.}
\pjedit{
\begin{proof} Take any online algorithm and suppose that an adversary discloses the following instance:  $k$ \emph{large} items of size $\frac{2}{2k+1}+\varepsilon$ followed by  $2k+1$ \emph{small} items of size $\frac{1}{2k+1}$. We suppose that $ \frac{1}{2k+1} \gg \varepsilon >0$. The size of the largest item is $\frac{2}{2k+1}+\varepsilon$.
The instance can be interrupted at any moment by the adversary when the solution maintained by the algorithm is (asymptotically) $(1-\frac{1}{2k+1})$-approximate, or worse. 

The first $k$ items (which are identical) fit into the knapsack. If one of these items is not kept by the algorithm, then the competitive ratio is at most $1-\frac{1}{k}$ when $\varepsilon$ tends to 0. From now on we suppose that the $k$ first items of the instance belong to the solution. 

When the $(k+1)$-th item of the instance is disclosed, it does not fit into the solution. However, the value of an optimal solution \pjreplace{($k*(\frac{2}{2k+1}+\varepsilon$)}{$k(\frac{2}{2k+1}+\varepsilon)$} does not change. The optimal value changes only if the very last item is disclosed (i.e., it becomes 1 with $2k+1$ small items).

In between, the algorithm can 
$(1)$ replace one large item with one small item, or
$(2)$ buffer a small item,  or $(3)$  replace one large item with two small items. 

Note that the first action decreases the value of the current solution by $\frac{1}{2k+1}+\varepsilon$, giving a competitive ratio of $1-\frac{1}{2k}$ when $\varepsilon$ tends to zero. Therefore, the adversary can stop the instance if the first action is done. The second and third actions do not significantly change the value of the current solution (no decrease or a decrease by $\varepsilon$). However, the third action can be performed at most  $k-1$ times because each of them requires a recourse. 

Since the instance comprises $k$ large items, the algorithm is forced, either to do the first action, or to reach the very end of the instance where the value of the optimum becomes 1 and the algorithm's solution has value at most $k(\frac{2}{2k+1}+\varepsilon)= 1-\frac{1}{2k+1} +k\varepsilon$. In any case, 
the competitive ratio is at best $1-\frac{1}{2k+1}$ when $\varepsilon$ goes to zero. \qed
\end{proof}
}

We now present a complementary upper bound on the {\cra} for the case of a single recourse operation.

\begin{lemma} \label{lemmaub1rec1} The competitive ratio of any algorithm with at most one recourse operation is upper-bounded by $\sqrt{\frac{9-\sqrt{17}}{8}} \approx 0.78$ when $u\ge \frac{9-\sqrt{17}}{8} \approx 0.61$.
\end{lemma}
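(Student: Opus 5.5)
The plan is an adaptive adversary argument in the spirit of the upper bounds of \cite{IT02} and of Lemmas~\ref{lemmaub1rec2}--\ref{lemmaub1rec3}. Set $a:=\frac{9-\sqrt{17}}{8}$, so that $a$ is the smaller root of $4x^2-9x+4=0$, and set $c:=\sqrt{a}$. The value $c$ being a square root suggests the mechanism. At some point the algorithm must choose between keeping a solution of value $a$ while the optimum has value $s$, giving ratio $a/s$, and keeping a solution of value about $s$ while the adversary can push the optimum to $1$, giving ratio $s$. The adversary will steer the instance so that $s\approx\sqrt a$, so both options give ratio $\sqrt a$. The extra constraint $4a^2-9a+4=0$ should come from neutralising the single recourse operation.

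The construction I would try is as follows. First reveal an item $L$ of size $a$; this is allowed since $u\ge a$. Then reveal items whose sizes are tuned so that each one fits with $L$ only if $L$ is removed, pairs of them nearly complement each other, and their running best packing creeps up towards $\sqrt a$. If the algorithm keeps $L$ too long, the optimum reaches $\sqrt a$ without $L$, and the adversary stops with ratio at most $a/\sqrt a=\sqrt a$. Even using the one recourse, the algorithm cannot assemble value above $\sqrt a$ while keeping $L$, because of how the sizes are chosen. If the algorithm drops $L$ (buffering it) and switches to the other items, the adversary continues with items that complement $L$: sizes close to $1-a$ and to the gap $1-\sqrt a$, chosen so that the optimum becomes $1$ only in combination with $L$ or with a specific item the algorithm has discarded.

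The crux is the recourse. The algorithm can recover $L$ or one discarded item exactly once. The adversary must therefore be able to make a second "forcing" move after the recourse has been spent. I would make the second phase repeat the first dichotomy at a smaller scale, and choose the sizes so that the ratio obtained there is also $\sqrt a$. I expect that writing this balance down (the item sizes in the second phase expressed through $a$ and $\sqrt a$) will produce exactly $4a^2-9a+4=0$. I would carry out the steps in this order. First, fix the sizes as functions of $a$ and a vanishing $\varepsilon$. Second, enumerate the algorithm's possible configurations after each item, as in Table~\ref{atct}: which items it holds and whether the recourse has been used. Third, show that in each configuration either the adversary stops with ratio at most $\sqrt a+O(\varepsilon)$, or the algorithm is forced into the next configuration.

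The main obstacle is the case analysis around the moment the recourse is spent. Here one must check that no clever combination of a buffered item with the current item gives value in $(\sqrt a,1]$. I would handle this by making all second-phase sizes strictly exceed the relevant complements by $\varepsilon$, so that every feasible combination is either of value at most $\sqrt a$ or needs two buffered items. Finally I let $\varepsilon\to0$.
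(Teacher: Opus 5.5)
There is a genuine gap. Your top-level mechanism is right: two branches with ratios $a/s$ and $s$, balanced at $s=\sqrt a$. But you never fix an instance, and the step you call the crux rests on a misdiagnosis.

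The missing observation is that the switch away from $L$ itself consumes the recourse. The paper's four-item instance reveals $1-a+\varepsilon$ and then $a$, which leads to the same state as starting from $L$:
\begin{itemize}
\item Hold $L$ of size $a$. An item $x_1$ of size $1-a+\varepsilon$ does not fit with $L$. It must be buffered, since keeping it instead of $L$ gives ratio $\to(1-a)/a<\sqrt a$.
\item A second item $x_2$ of the same size raises the optimum to $2(1-a+\varepsilon)$.
\item The algorithm either keeps $L$, with ratio $\to a/(2(1-a))$, or forms $\{x_1,x_2\}$. Forming that pair requires spending its only recourse on $x_1$; taking $x_2$ alone gives ratio $1/2$.
\item In the latter case, one item $y$ of size exactly $1-a$ makes $\{L,y\}$ an optimum of value $1$. $L$ is now unrecoverable, and since $3(1-a)>1$ the algorithm is stuck at value $2(1-a)+O(\varepsilon)$.
\end{itemize}
So $s$ is not a free parameter to be steered: it equals $2(1-a)$. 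The equation $4a^2-9a+4=0$ is exactly the balance $a/(2(1-a))=2(1-a)$, i.e.\ $2(1-a)=\sqrt a$.

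Because you do not see that the recourse is already spent at the switch, you plan a second phase ``repeating the dichotomy at a smaller scale'' to neutralise a later recovery of $L$, and you expect the quadratic to come from it. No such phase is needed, your sketch never specifies it, and balancing an extra phase would in general give a different constant.

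Your proposed invariant is also off. The optimum $\{L,y\}$ needs only one buffered item; the algorithm cannot reach it because its recourse is gone, not because two buffered items would be needed.

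Two smaller cautions follow from this:
\begin{itemize}
\item Your rule of making every later size exceed its complement by $\varepsilon$ must not be applied to $y$. It has to be exactly $1-a$ so that it fits with $L$.
\item Items near $1-\sqrt a$ are unnecessary, and risky: one of size $1-\sqrt a-2\varepsilon$ would let the algorithm complete $\{x_1,x_2\}$ to value $1$.
\end{itemize}
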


\pjmargincomment{Moved from appendix}
\pjedit{
\begin{proof} Let $s=\frac{9-\sqrt{17}}{8} \approx 0.61$.  
The first items $o_1$ and $o_2$ have size $1-s+\varepsilon$ and $s$, respectively.\footnote{We suppose that $\varepsilon>0$ and $s \gg \varepsilon$ hold.} If  $o_2$ does not replace $o_1$, then the competitive ratio is  $\frac{1-s}{s}$ when $\varepsilon$ goes to $0$. From now on, suppose that $\{o_2\}$ is the current solution and $o_1$ is in the buffer. 
The third item $o_3$ has size $1-s+\varepsilon$.  
If the solution remains $\{o_2\}$, then we get a ratio of $\frac{s}{2(1-s+\varepsilon)}$ which tends to $\frac{s}{2(1-s)}=\sqrt{s}$ when $\varepsilon$ goes to $0$. Otherwise, one can use the recourse to produce the solution $\{o_1,o_3\}$. Then, a fourth item $o_4$ of size $1-s$ is disclosed. The optimum $\{o_2,o_4\}$ has size one, and the size of the current solution is at most $2(1-s+\varepsilon)$, so the competitive ratio \pjreplace{is $2(1-s)$ which tends to $\sqrt{s}$}{tends to $2(1-s)=\sqrt{s}$} when $\varepsilon$ goes to $0$.   
Given that $s>1-s+\varepsilon$ and $\frac{1-s}{s} < \sqrt{s}$, we get an upper bound of $\sqrt{s} \approx 0.78$ when $u \ge s$. \qed
\end{proof}
}

Let us summarize the bounds on the {\cra} when a \emph{single} recourse operation is possible (see Figure \ref{fig4}). There exist lower bounds of ${4}/{5}$ when $u \le 1/3$ (Theorem \ref{thmk1k2}), ${3}/{4}$ when $u \le  5/8$ (Theorem \ref{thmk34}), and $2/3$ when $u \le 1$ (\cite[Thm. 2]{BKMRSW26}). There exist upper bounds of ${5}/{6}$ when $u >1/3$ (Lemma \ref{lemmaub1rec2}, $k=3$), 
${4}/{5}$ when $u >2/5$ (Lemma \ref{lemmaub1rec3}, $k=2$), 
$((9-\sqrt{17})/8)^{1/2}$ when $u \ge (9-\sqrt{17})/8$ (Lemma \ref{lemmaub1rec1}), 
and ${2}/{3}$ when $u >2/3$ (\cite[Thm. 1]{BKMRSW26}).

In this section, we have proposed upper and lower bounds on the {\cra} for certain combinations of upper bounds on object sizes, along with a limit on the number of allowable recourse operations. It would be interesting either to provide results for all possible combinations or to show that, for some combinations, recourse does not yield any significant improvement in the optimal {\cra}. For example, what would be the best {\cra} for all $u$ with at most one recourse? Our results provide answers for $u \geq 1/3$, and we believe that a single recourse can only marginally improve the situation when $u$ is very close to zero. Resolving this conjecture would be of great interest.

\section{Adaptive Algorithms} \label{sec:adaptive}


\subsection{An Adaptive Algorithm with Removal Only}

Algorithms that are aware of an upper bound on item sizes and whose {\cra} is lower bounded by $\frac{1}{\phi}$, but can exceed $\frac{1}{\phi}$ for most size upper bounds, exist \cite{GourvesP24}. Importantly, the smaller the size bound, the stronger the \cra\ guarantees they provide (almost tight per upper bound). This raises the question:

\begin{quote}
\centering\textit{Can we design algorithms whose competitive 
ratio (inversely) depends on the maximum item size, without any prior information?}
\end{quote}
Crucially, we want our adaptive algorithms' \cra\ to still be lower bounded by $\frac{1}{\phi}$, in the worst case. 

We begin by stating some useful definitions and theorems 
for the case of removable items without recourse. Consider an upper bound $u \in (0,1]$ on the size of every item. \pjedit{In this section, we explicitly consider $u$ to be the maximum item size in the instance (unknown to the algorithm).} Then, in~\cite{GourvesP24} the following is shown:

\begin{theorem}[\cite{GourvesP24}] \label{thm11} For all integers $k>0$, there exists a $\gamma_k$-competitive algorithm for a known  $u \in (0, \frac{1-\gamma_k}{\gamma_k^2}]$, where $\gamma_k$ is a constant defined as $\gamma_k=\frac{k-2+\sqrt{k^2+4}}{2k}$. In addition, for all integers $k \ge 2$, the same algorithm is  $\frac{1-u}{(k-1)u}$-competitive when $u \in (\frac{1-\gamma_k}{\gamma_k^2}, \frac{1}{k-1}]$.
\end{theorem}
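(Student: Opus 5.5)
The plan is to give an explicit online algorithm parameterized by $k$ and the known bound $u$, in the same style as Algorithm~\ref{algorec4} but without recourse. It maintains a solution $S$ by three rules. If $v(S)\ge\gamma_k$, leave $S$ untouched. Otherwise insert $o_i$ whenever it fits. If it does not fit, first try to reach a feasible set of value in $[\gamma_k,1]$ using only $S\cup\{o_i\}$: add $o_i$ and discard small items, or discard one large item. If that fails, keep a canonical set of large items. The canonical set should be the smallest ones, to leave room for future combinations, but it must also retain enough value.

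I would call an item \emph{small} if $|o|\le 1-\gamma_k$ and \emph{large} otherwise. The reason is that whenever $v(S\cup\{o_i\})>1$, removing small items one at a time stops at a value $>1-(1-\gamma_k)=\gamma_k$, exactly as in the footnote to Theorem~\ref{thmk1k2}. The identity I expect to use repeatedly is that $\gamma=\gamma_k$ is the positive root of
\[
k\gamma^2-(k-2)\gamma-1=0
\]
(for $k=1$ this gives $\gamma=\phi^{-1}$, recovering \cite{IT02}).

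The analysis copies the structure of the proof of Theorem~\ref{thmk1k2}. As long as $v(S)<\gamma_k$, every small item seen so far is in $S$. So, by the argument of Observation~\ref{intermobs}, it suffices to maintain the invariant $v(S^\ell)\ge\gamma_k\, v(Q)$, where $Q$ is a maximum-value feasible set of large items seen so far. Insertions at the ``fits'' step preserve the invariant trivially, by the same computation as before. The work is in the case where $o_i$ is large, $v(S^\ell\cup\{o_i\})>1$ and $v(S)<\gamma_k$. Large items exceed $1-\gamma_k$, so any feasible set of them has bounded cardinality, and I will do a case analysis on $|S^\ell|$ and on the composition of $Q$. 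Whenever $Q$ could be obtained from $S^\ell\cup\{o_i\}$ by discarding items, the algorithm reaches value $\ge\gamma_k$ and we are done. Otherwise I will compare $v(S^\ell)$ with $v(Q)$ by an exchange argument: the kept items are the smallest ones, and $Q$ differs from them in a single large item of size at most $u$.

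This leads to two inequalities. The first is the case where the algorithm holds one large item of size $x$ while $Q$ would need two items of size $\le u$ that cannot coexist with it. We then need $x\ge \gamma_k v(Q)$ together with $x+$(another item)$>1$, and the binding constraint becomes $\gamma_k^2 u\le 1-\gamma_k$, i.e.\ $u\le\frac{1-\gamma_k}{\gamma_k^2}$. This explains the threshold in the first statement. The second inequality covers $u\in(\frac{1-\gamma_k}{\gamma_k^2},\frac{1}{k-1}]$. The same worst configuration gives ratio $\frac{1-u}{(k-1)u}$: the algorithm is stuck with value just above $1-u$ (one more item of size $\le u$ does not fit), while the competing optimum consists of $k-1$ items of size $u$. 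Monotonicity in $u$, and the fact that this expression equals $\gamma_k$ at $u=\frac{1-\gamma_k}{\gamma_k^2}$ (to be checked with the quadratic above), glue the two regimes together.

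The main obstacle is choosing the canonical set of large items kept in the fallback step so that two requirements hold simultaneously. Future target combinations must remain reachable without recourse, which pushes toward keeping small large items. Yet the current value must already be $\gamma_k$-competitive against $Q$, which pushes toward keeping big ones. My first attempt would be to keep the $k-1$ smallest large items plus the newest one if it beats the smallest. If the case analysis breaks, I would fall back to keeping a prefix of large items sorted by size, chosen so that their total is maximal below $1$.
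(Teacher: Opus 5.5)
\textbf{There is a genuine gap: the exchange argument is false without recourse, and it misses an equally tight case.}

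What is right: your reduction to an invariant $v(S^\ell)\ge\rho\,v(Q)$ is sound, because every small item stays in $S$ while $v(S)<\gamma_k$. Your heuristic also locates the threshold correctly. Combining $v(S^\ell)>1-u$ with $v(Q)\le(k-1)u$ gives $\frac{1-u}{(k-1)u}$, which equals $\gamma_k$ exactly at $u=\frac{1-\gamma_k}{\gamma_k^2}$ by \eqref{gammakeq5}. (For general $k$ the competing set has $k-1$ items of size $u$, not two.)

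What fails: you compare $v(S^\ell)$ with $v(Q)$ through an exchange in which $Q$ differs from the retained items in a single item. That structure drives Theorem~\ref{thmk1k2}, but there it is realized by recourse. Without recourse, $Q$ can be disjoint from the retained set. Take $k=2$ with the smallest-items rule you describe:
\begin{itemize}
\item Two items of size $1-\gamma_2+\delta$ arrive.
\item Then several items arrive with sizes in $(2\gamma_2-1-2\delta,\,2\gamma_2-1-\delta)$, where $2\gamma_2-1=\frac{1-\gamma_2}{\gamma_2}\approx0.414$.
\item Each later item overflows together with the two retained ones, yet creates no subset of value in $[\gamma_2,1]$, so it is discarded.
\item The algorithm keeps value $\approx2(1-\gamma_2)$ while $v(Q)\approx2(2\gamma_2-1)$, a ratio tending to $\gamma_2$.
\end{itemize}
Here $|Q|=k$ and $v(Q)\approx0.83>(k-1)u$ for every $u$ in the first regime. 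So this tight configuration is covered neither by your single-exchange claim nor by your ``worst configuration''. Moreover, $v(S^\ell)>1-u$ alone yields only about $1/2$ near $u=\frac{1-\gamma_2}{\gamma_2^2}$. Your planned case analysis never isolates this case $|Q|=k$, nor the tool it requires.

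\textbf{What the paper relies on.} The paper takes this theorem from \cite{GourvesP24}. It states the algorithm as Algorithm~\ref{algo7} and records the identities \eqref{gammakeq}--\eqref{gammakeq5} on which the analysis rests. The fallback there is simply: if no subset of $B$ is a target, remove the single largest item. Your second candidate amounts to this. You need to commit to it, because the proof runs on the structure it induces:
\begin{itemize}
\item The retained non-small items $L$ are always the $|L|$ smallest seen so far.
\item $L$ together with the smallest discarded item exceeds $1$.
\item Hence $v(L)>1-u$, every feasible set of non-small items has at most $|L|$ elements, and $|L|\le k$ by \eqref{gammakeq2}.
\end{itemize}

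\textbf{How the two cases go.} If $|L|\le k-1$, your bound $\frac{1-u}{(k-1)u}$ follows.

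If $|L|=k$, you need the defining identity \eqref{gammakeq}: $k-1$ non-small items plus one item of size at least $\frac{1-\gamma_k}{\gamma_k}$ exceed $\gamma_k$. In the first regime with $k\ge2$ we have $u\le2(1-\gamma_k)$. So $|L|$ cannot grow after an overflow without the solution reaching $\gamma_k$. Consequently, an item that large, at the moment it was discarded, would have formed a feasible target inside $B$ with $k-1$ retained items. Hence no such item is ever discarded. For $k=1$, such an item is a target by itself. Also, $v(L)<\gamma_k$ forces the retained items below $\frac{1-\gamma_k}{\gamma_k}$. Together these give $v(Q)<k\frac{1-\gamma_k}{\gamma_k}<v(L)/\gamma_k$.

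Until you fix the rule and supply this second case (which the second regime needs as well), the proposal describes the tight instances rather than proving the bounds.
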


For all positive integers $k$, $\gamma_k \in (0,1]$ increases with $k$, and $\gamma_1=\frac{\sqrt{5}-1}{2}=\frac{1}{\phi}$. We will refer to this algorithm of \cite{GourvesP24} as ``\nonadapt '' here. \pjedit{Other useful theorems and lemmata from \cite{GourvesP24} are included in Appendix~\ref{ap:DetailsFromGP}.}

\nonadapt\ can be adapted to get improved guarantees (compared to $\frac{1}{\phi}$ \cite{IT02}) without knowing the item size upper bound  $u$ in advance, at the cost of a small penalty in the \cra . Our adaptive version, Algorithm~\ref{algo_adaptive} (a.k.a. \adapt), follows \nonadapt\ with the addition of lines \ref{lineMaxitem}, \ref{lineKcurr} and \ref{lineKover}. These lines are used to dynamically update the value of $k$ whereas \nonadapt\ directly uses a fixed value, optimal for the given item size upper bound $u$.

The adaptive version updates an over-estimation of the upper bound (setting $k'= \max\{k_{curr}-2, 1\}$ in line~\ref{lineKover}) whenever a larger item arrives and continues running with the corresponding threshold. We show that \pjreplace{its behavior matches that}{it preserves the guarantee} of \nonadapt\ tuned from the start to the final over-estimation. Since the true bound is smaller, the corresponding approximation guarantee is preserved.

Informally, \nonadapt\ defines a threshold $1-\gamma_k$ and classifies items as small or non-small. When a new item does not fit into the knapsack and the guarantee is not met, items are removed based on the total sum of non-small item sizes. Specifically, when the sum of non-small item sizes is infeasible, it removes the largest item, otherwise it removes small items until it gets a feasible solution.


\begin{definition}[$k$-small items]
An item $o$ is said to be $k$-small if $|o|\leq 1-\gamma_k$.
\end{definition}

The challenge is to identify the case where \adapt\ might remove an important item that \nonadapt\ would not. In this case \adapt\ would delete its current largest item: For a current estimation of $k$, say $k'$, the $k'$-non-small items seen-so-far do not fit in the knapsack, but for a later estimation, say $k''$, some of these items could be considered $k''$-small by \nonadapt\ - allowing the rest to fit in the knapsack, not triggering the large item deletion. However, the over-estimation of the upper bound raises the non-small size threshold. Moreover, we target a lower \cra\ which is easier to satisfy. Thus, as non-small items accumulate, the desired \cra\ is achieved before they exceed the knapsack capacity.

\pjmargincomment{Also kept those in the appendix.}
\pjreplace{For the following, all missing proofs of this section are included in \pareplace{Appendix~\ref{ap:omittedAdaptiveRemoval}. Furthermore any properties used from \cite{GourvesP24} appear in Appendix~\ref{ap:DetailsFromGP}.}{the full version of the paper.}}{The proof of Observation~\ref{obs:optimal_k} is in Appendix~\ref{ap:omittedAdaptiveRemoval}.}

\begin{observation}
\label{obs:optimal_k}
    Consider that the item upper bound is $u$. Then the optimal parameter $k$ for \nonadapt\ to guarantee the best possible approximation is: 
    \[
    k^* =  \max\Big\{k\in \mathbb{N}\text{ } \big | \text{ } \frac{2}{k-1+\sqrt{(k-1)^2+4}}\geq u \Big\}
    \]
\end{observation}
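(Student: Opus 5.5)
Observation~\ref{obs:optimal_k} reduces to rewriting the hypothesis of Theorem~\ref{thm11} as a condition on $u$ and then maximizing the guarantee. The plan has four steps.

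\textbf{Step 1: simplify the threshold.} Put $t_k:=\frac{1-\gamma_k}{\gamma_k^2}$. From $\gamma_k=\frac{k-2+\sqrt{k^2+4}}{2k}$ one checks that $\gamma_k$ is the positive root of
\[
k\gamma^2-(k-2)\gamma-1=0,
\]
equivalently $(k-1)\gamma^2+\gamma-1=\gamma(1-\gamma)-\gamma^2+\ldots$; I would verify directly that this gives
\[
\frac{1-\gamma_k}{\gamma_k^2}=\frac{1}{k-1+\ldots}.
\]
A cleaner route is to set $x=t_k$ and verify that $x=\frac{2}{k-1+\sqrt{(k-1)^2+4}}$ is the positive root of $(k-1)x+x^2\cdot\ldots=1$. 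Concretely, $\frac{2}{k-1+\sqrt{(k-1)^2+4}}$ is the positive root of $x^2+(k-1)x-1=0$. So it suffices to show that $t_k$ satisfies $t_k^2+(k-1)t_k=1$. This is a routine algebraic substitution using the quadratic for $\gamma_k$, and it yields the closed form
\[
t_k=\frac{2}{k-1+\sqrt{(k-1)^2+4}}.
\]

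\textbf{Step 2: the admissible set and monotonicity.} By Theorem~\ref{thm11}, \nonadapt\ with parameter $k$ is $\gamma_k$-competitive exactly when $u\le t_k$. The sequence $t_k$ is strictly decreasing in $k$, since the denominator increases, and $t_k\to 0$. Hence for every $u>0$ the set $\{k : t_k\ge u\}$ is a finite initial segment of $\mathbb N$. It is nonempty because $t_1=1\ge u$, so its maximum $k^*$ is well defined.

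\textbf{Step 3: optimality of $k^*$.} Among the parameters with $k\le k^*$, all give the full guarantee $\gamma_k$, and $\gamma_k$ is increasing in $k$, so $k^*$ gives the best of them.

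For $k>k^*$ we have $u>t_k$, so the guarantee of Theorem~\ref{thm11} is at most $\frac{1-u}{(k-1)u}$, or nothing if $u>\frac1{k-1}$. Now $u>t_k$ means $u^2+(k-1)u>1$, that is,
\[
\frac{1-u}{(k-1)u}<\ldots.
\]
To finish I need to compare this with $\gamma_{k^*}$. The cleanest comparison: from $u\le t_{k^*}$ and $k\ge k^*+1$,
\[
\frac{1-u}{(k-1)u}\le\frac{1-u}{k^*u},
\]
and it remains to show $\frac{1-u}{k^*u}\le\gamma_{k^*}$ whenever $u>t_{k^*+1}$.

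\textbf{Step 4: the key inequality (main obstacle).} The inequality $\frac{1-u}{k^*u}\le\gamma_{k^*}$ holds exactly when $u\ge\frac{1}{1+k^*\gamma_{k^*}}$. So I would prove the identity
\[
t_{k+1}=\frac{1}{1+k\gamma_k},
\]
that is, the function $u\mapsto\frac{1-u}{ku}$ reaches the level $\gamma_k$ precisely at the threshold $t_{k+1}$. This is consistent with the continuity of the curve in Figure~\ref{fig4}. I expect it to follow from the two quadratics (for $\gamma_k$ and for $t_{k+1}$) by substitution; this identity is the step most likely to need care. Granting it, any $k>k^*$ yields a guarantee no better than $\gamma_{k^*}$, so $k^*$ is optimal.
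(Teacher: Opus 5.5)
\textbf{There is a genuine gap: your Step~1 identity is false.} For $k\ge 2$, $\frac{1-\gamma_k}{\gamma_k^2}$ is strictly smaller than $\frac{2}{k-1+\sqrt{(k-1)^2+4}}$, so the ``routine substitution'' cannot succeed. For $k=2$ the two numbers are $2-\sqrt2\approx 0.586$ and $\frac{2}{1+\sqrt5}\approx 0.618$. In Figure~\ref{fig4}, the flat segment at $\gamma_2$ ends at $0.586$ and the sloped segment $\frac{1-u}{u}$ continues to $0.618$. The two expressions agree only at $k=1$, where both equal $1$.

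The threshold in the observation is the end of the \emph{sloped} regime of parameter $k$. Since $1+(k-1)\gamma_{k-1}=\frac{k-1+\sqrt{(k-1)^2+4}}{2}$, it equals $\frac{1}{1+(k-1)\gamma_{k-1}}$. This is the point where the Theorem~\ref{thm11} guarantee $\frac{1-u}{(k-1)u}$ drops to $\gamma_{k-1}$. That is your Step~4 identity, shifted by one and read with the closed form, and it is exactly the fact the paper's proof rests on. Your remark that Step~4 is ``consistent with continuity'' is actually the warning sign. By \eqref{gammakeq5}, $\frac{1-u}{ku}=\gamma_{k+1}$ at $u=\frac{1-\gamma_{k+1}}{\gamma_{k+1}^2}$, so Steps~1 and~4 together would force $\gamma_{k+1}=\gamma_k$.

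\textbf{Consequences for Steps~2--3.} They characterize the wrong set and miss a case. With the true $t_k=\frac{1-\gamma_k}{\gamma_k^2}$, the quantity $\max\{k: t_k\ge u\}$ is not optimal. At $u=0.6$ it equals $1$, with guarantee $\gamma_1\approx 0.618$. Parameter $2$, however, guarantees $\frac{1-u}{u}=\frac23$, and the observation indeed gives $k^*=2$. If you use the closed-form set instead, your Step~3 claim that $k^*$ achieves $\gamma_{k^*}$ fails whenever $u\in(\frac{1-\gamma_{k^*}}{\gamma_{k^*}^2},\frac{2}{k^*-1+\sqrt{(k^*-1)^2+4}}]$. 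There $k^*$ only guarantees $\frac{1-u}{(k^*-1)u}<\gamma_{k^*}$, and you never show that it still beats $k^*-1$.

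\textbf{How to repair it.}
\begin{enumerate}
\item Drop Step~1 and prove the closed form $\frac{2}{k-1+\sqrt{(k-1)^2+4}}=\frac{1}{1+(k-1)\gamma_{k-1}}$ directly.
\item Note that this threshold lies in $(\frac{1-\gamma_k}{\gamma_k^2},\frac{1}{k-1}]$. The lower end holds because $\frac{1-u}{(k-1)u}$ decreases from $\gamma_k$ at $\frac{1-\gamma_k}{\gamma_k^2}$, by \eqref{gammakeq5}, to $\gamma_{k-1}$ at the threshold. The upper end is Lemma~\ref{lm:largeItemsBound}.
\item \emph{Flat case}, $u\le\frac{1-\gamma_{k^*}}{\gamma_{k^*}^2}$. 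Your Steps~3--4 go through, because $k^*+1$ failing the condition means $u>\frac{1}{1+k^*\gamma_{k^*}}$.
\item \emph{Sloped case}. Here $k^*$ guarantees $\frac{1-u}{(k^*-1)u}\ge\gamma_{k^*-1}$, which is at least what any smaller parameter guarantees. Any larger parameter guarantees at most $\frac{1-u}{k^*u}<\frac{1-u}{(k^*-1)u}$, or nothing.
\end{enumerate}
This comparison of adjacent parameters at the crossing point is precisely the paper's argument.
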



\begin{theorem}
    \label{thm:adaptiveMain}
    For all positive integers $k$, Algorithm~\ref{algo_adaptive}, a.k.a. \adapt, is \pjreplace{$\gamma_{\hat{k}}$}{$\gamma_{k'}$}-competitive when $u\leq \frac{2}{k-1+\sqrt{(k-1)^2+4}}$ where $\pjreplace{\hat{k}}{k'}= \max\{\pjreplace{k^*}{k}-2, 1\}$. \pjedit{In particular, \adapt\ is $\gamma_{\hat{k}}$-competitive where  $\hat{k} = \max\{k^*-2, 1\}$.} 
    \pjmargincomment{Think this makes it a bit clearer.}
\end{theorem}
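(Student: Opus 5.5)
We prove the statement by simulation. We compare \adapt\ with a run of \nonadapt\ that is tuned from the start to the \emph{final} over-estimate $k'=\max\{k-2,1\}$. The condition $u\le \frac{2}{k-1+\sqrt{(k-1)^2+4}}$ is exactly the condition of Observation~\ref{obs:optimal_k} with $k\le k^*$. A short computation shows that $\frac{2}{k-1+\sqrt{(k-1)^2+4}}$ equals $\frac{1-\gamma_k}{\gamma_k^2}$. So Theorem~\ref{thm11} applies to \nonadapt\ with any parameter at most $k$. Since $\gamma_j$ increases with $j$, proving $\gamma_{k'}$-competitiveness for $k'=\max\{k-2,1\}$ with $k\le k^*$ gives the ``in particular'' clause by taking $k=k^*$.

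\textbf{Step 1: the estimate is monotone.} Whenever a larger item arrives, \adapt\ recomputes $k_{curr}$ from the current maximum size via Observation~\ref{obs:optimal_k}. It then sets the working parameter to $\max\{k_{curr}-2,1\}$. The maximum seen size only grows, so $k_{curr}$ only decreases. Hence the working parameter is non-increasing, the $k$-small threshold $1-\gamma$ is non-decreasing, and the target ratio $\gamma$ is non-increasing. At every round the working parameter is at least the final $k'$. The target at every round is therefore at least $\gamma_{k'}$, while the current maximum is at most $u$.

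\textbf{Step 2: phase invariant.} Split the execution into phases during which the working parameter $j$ is constant. The invariant to prove is that at the end of every round, either $v(S)\ge \gamma_j$, or $S$ satisfies the structural invariants of \nonadapt\ from \cite{GourvesP24} (Appendix~\ref{ap:DetailsFromGP}) for parameter $j$. Those invariants are roughly that all $j$-small items seen so far are kept and that the non-small part of $S$ is near-optimal among the non-small items. Inside a phase the rules of \adapt\ coincide with those of \nonadapt$(j)$, so the lemmata of \cite{GourvesP24} carry the invariant forward. The competitive ratio then follows from the same argument as in Theorem~\ref{thm11}.

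\textbf{Step 3: phase transitions (the main obstacle).} Moving from $j$ to a smaller $j''$ reclassifies some non-small items as small. Earlier, \adapt\ may have deleted its largest item because the $j$-non-small items did not fit, whereas \nonadapt$(j'')$ would never have made that deletion. We must show that such a deletion never happens while the ratio is still below target, or that it is harmless.

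We plan to argue by contradiction, using the slack of two levels in the definition of $k'$. Suppose the $j$-non-small items seen so far exceed capacity while $v(S)<\gamma_j$. Each of these items has size greater than $1-\gamma_j$, and all items have size at most the current maximum. Because the estimate is two levels pessimistic, the current maximum is at most $\frac{1-\gamma_{j+2}}{\gamma_{j+2}^2}$, which is much smaller than the sizes needed to produce the conflict. We expect that before non-small items can exceed capacity, some subset of them (together with the kept small items) already reaches value $\gamma_j$. This would contradict $v(S)<\gamma_j$.

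The core of the proof is a computation comparing $\gamma_j$, $1-\gamma_j$ and $\frac{1-\gamma_{j+2}}{\gamma_{j+2}^2}$, which we carry out case by case on the number of non-small items (between $1$ and $j+1$). If the inequality fails for small $j$, the clamp at $1$ covers that case, since $\gamma_1=1/\phi$ is attained by the algorithm of \cite{IT02}.
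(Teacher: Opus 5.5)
Your skeleton matches the paper's proof. You compare with \nonadapt\ tuned to the final over-estimate, use the two-level slack to rule out the ``remove the largest item'' branch while $k'\ge 2$ (your Step~3 is Claim~\ref{cl:noBigItemsDeleted}), and treat $k'=1$ separately. However, the computation underneath is wrong, and the decisive step is only announced, not proved.

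\textbf{The identity is false.} The identity $\frac{2}{k-1+\sqrt{(k-1)^2+4}}=\frac{1-\gamma_k}{\gamma_k^2}$ fails for every $k\ge2$; for $k=2$ the two sides are $\gamma_1\approx0.618$ and $2-\sqrt2\approx0.586$. By \eqref{gammakeq5}, $\frac{1-\gamma_k}{\gamma_k^2}=\frac{1}{1+(k-1)\gamma_k}$. By contrast, $\frac{2}{k-1+\sqrt{(k-1)^2+4}}=\frac{1}{1+(k-1)\gamma_{k-1}}$ is the larger crossover point, where the sloped guarantee $\frac{1-u}{(k-1)u}$ of parameter $k$ has already dropped to $\gamma_{k-1}$. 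So under this bound, Theorem~\ref{thm11} does not give $\gamma_k$ for parameter $k$. It gives the plateau only for parameters at most $k-1$, via $\frac{2}{k-1+\sqrt{(k-1)^2+4}}\le\frac{1}{k-1}\le\frac{1-\gamma_{k-1}}{\gamma_{k-1}^2}$. This is what the paper uses, and it suffices for parameter $k-2$.

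The same error reappears in Step~3. With $j=k_{curr}-2$, the current maximum is bounded only by $\frac{2}{j+1+\sqrt{(j+1)^2+4}}$, not by $\frac{1-\gamma_{j+2}}{\gamma_{j+2}^2}$. Fortunately the correct bound is still at most $\frac{1}{j+1}$ by Lemma~\ref{lm:largeItemsBound}, and that is all you need.

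\textbf{The core computation is one line.} With the correct bound, no case analysis and no small items are needed. If $v(B)>1$ then $|B|\ge j+2$. Any $j+1$ items of $B$ fit and have value greater than $(j+1)(1-\gamma_j)\ge\gamma_j$ by \eqref{gammakeq2}. So line~\ref{adline291} fires before line~\ref{adline1044}.

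Add two further facts:
\begin{itemize}
\item A small-item deletion leaves value greater than $\gamma_j$: just before the last deletion the value exceeds $1$, and the deleted item has size at most $1-\gamma_j$.
\item Such a solution is frozen, because the targets only decrease.
\end{itemize}
Then, while $k'\ge2$, no item is removed at all before the target is met. So $S$ is exactly the set of all items seen, which is optimal if the input stops. Your Step~2 appeal to internal invariants of \cite{GourvesP24} therefore becomes unnecessary.

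\textbf{The clamp at $k'=1$.} The algorithm of \cite{IT02} does not justify this case; it is a different algorithm and says nothing about \adapt. The justification is the fact just established. When $k'$ becomes $1$, \adapt\ holds every item seen so far unless its target is already met. This is the same state as \nonadapt\ with parameter $1$ run from the start, so Theorem~\ref{thm11} with $k=1$ yields $\gamma_1$.
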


\begin{algorithm}
\caption{: \adapt
\label{algo_adaptive}}
\begin{algorithmic}[1] 
\State $S \gets \emptyset$, $MaxItem \gets 0$, $B \gets \emptyset$

\While{a new item $o_i$ arrives}
\State $MaxItem \gets \max\{MaxItem, |o_i|\}$\label{lineMaxitem}

\State $k_{curr}\gets \max\Big\{k\in \mathbb{N}\text{ } \big | \text{ } \frac{2}{k-1+\sqrt{(k-1)^2+4}}\geq MaxItem \Big\}$\label{lineKcurr} \LineComment{{\em Optimal $k$  considering $MaxItem$ as the upper bound}}

\State $k'\gets \max\{k_{curr}-2, 1\}$\label{lineKover}

    \If{ \label{adline1024} $v(S) \ge \gamma_{k'}$}
        \State \label{adline1025}  Reject $o_i$ \Comment{{\em $S$ is not changed afterwards}}

\Else
\State \label{adline1027} $S \gets S \cup \{o_i\}$
    
\If{\label{adline1030} $v(S) > 1$}

\State Let $B=\{o \in S : |o| > 1-\gamma_{k'} \}$
\label{algStep:smallRemoval} \Comment{Reset B}

    \If{$B$ contains a subset $\hat S$ 
such that $1 \ge v(\hat S) \ge \gamma_{k'}$}

\State \label{adline291} $S \gets \hat S$ \Comment{{\em $S$ is not changed afterwards}}

\ElsIf{  $v(B) > 1$}

\State \label{adline1044} Remove the largest item of $S$ 

\Else

\While{$v(S)>1$ \label{adline1050}}
\State \label{adline1051} Remove from $S$ one element of $S \setminus B$ \Comment{{\em chosen arbitrarily}}
\EndWhile

\EndIf

\EndIf 
\EndIf 

\EndWhile

\end{algorithmic}
\end{algorithm}

\begin{proof}
    \pjedit{It suffices to consider $k=k^*$, since $\gamma_k$ is increasing.}
    By Theorem~\ref{thm11} we have that \nonadapt\ is guaranteed to achieve approximation $\gamma_{\hat{k}}$ whenever $u\leq \frac{2}{k-1+\sqrt{(k-1)^2+4}}$. That is since for any $k\geq 3$, $\frac{2}{k-1+\sqrt{(k-1)^2+4}}\leq \frac{1}{k-1}\leq \frac{1-\gamma_{k-1}}{\gamma_{k-1}^2}
    \leq \frac{1-\gamma_{\hat{k}}}{\gamma_{\hat{k}}^2}$ (from properties in \cite{GourvesP24}\pareplace{, \eqref{gammakeq3} in Appendix~\ref{ap:DetailsFromGP}}{} and $\gamma_{k-2}<\gamma_{k-1}$). Finally for $k<3$, $\hat{k} = 1$ and \nonadapt\ gives \cra\ $\gamma_1$ for any $u$.

\pjmargincomment{Correction: deletion}\pjdelete{    Thus, it suffices to show that  {\adapt} will behave exactly like \nonadapt\ when run with parameter $\hat{k}$.}

As \adapt\ runs, it computes the optimal parameter $k_{curr}$ for \nonadapt\ based on the largest item seen so far, using Observation~\ref{obs:optimal_k}, and sets $k'=\max\{k_{curr}-2,1\}$ (i.e., \pjreplace{after an item of the maximum size $u$ is revealed}{when the largest item size revealed is $u$}, we have $k'=\hat{k}$). The algorithm then updates the set $B$ of non-small items according to \pjreplace{$\hat{k}$}{$k'$} and follows the same thresholds and removal rules as \nonadapt\ with parameter $k'$.

\pjmargincomment{Big proof simplification (next two paragraphs also)!!}\pjreplace{For the proof, let $k'_i$, for $i\in \mathbb{N}^*$, 
denote the value of $k'$ after the $i$th iteration (i.e., after receiving item $o_i$ and updating $k'$). We further denote by $B_i$ and $B_{k,i}$ the sets $B$ for \adapt\ and \nonadapt\ respectively, after the $i$th iteration, where the $k$ of $B_{k,i}$ is the parameter used in this instance of \nonadapt .

    Using the following claim, we show that for any $k'_i$, the set $B$ of non-small items in \adapt\ coincides with the corresponding set in \nonadapt\ run with parameter $k'_i$ on the same input. That is, $B_i = B_{k'_i,i}$:
    
}{We use the following claim. Who's proof is provided right after.}

    \begin{ourClaim}
\label{cl:noBigItemsDeleted}
    If the largest item size seen so far satisfies  $MaxItem\leq \frac{2}{4-1+\sqrt{(4-1)^2+4}}$ \armargincomment{why not $\frac{2}{3+\sqrt{13}}$ ?} (i.e., $k\geq 4$), then \adapt\ has never removed any item in set $B$ \pjedit{i.e., a non-small item.\footnote{That is, no item is deleted while being a member of $B$ i.e., considered non-small at the current phase. An item that has been a member of $B$ but is in a later phase re-classified as small (i.e., member of $S$ but not member  of $B$) can get deleted.}}
\end{ourClaim}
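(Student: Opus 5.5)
The plan is to look at the only place where \adapt\ can delete an element of $B$ and show that, under the hypothesis, the algorithm never reaches it. In Algorithm~\ref{algo_adaptive}, line~\ref{adline1051} only deletes elements of $S\setminus B$, and line~\ref{adline291} replaces $S$ by a subset $\hat S\subseteq B$, after which $S$ is never changed, so nothing in $B$ is discarded there in any harmful sense. Hence an item of $B$ can only be removed at line~\ref{adline1044}. That line is executed only if $v(B)>1$ and $B$ contains no subset $\hat S$ with $1\ge v(\hat S)\ge\gamma_{k'}$. So it suffices to prove the following: whenever $MaxItem\le \frac{2}{3+\sqrt{13}}$ and $v(B)>1$, the set $B$ does contain such a subset.

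Fix an iteration $i$ and write $K=k_{curr}$ and $k'=K-2$. Since $MaxItem$ is non-decreasing, the value $MaxItem$ at every earlier iteration is also at most $\frac{2}{3+\sqrt{13}}$. Hence $K\ge 4$ at every iteration up to the current one, so the argument applies uniformly to all past phases, and $k'=K-2\ge 2$. By the definition of $k_{curr}$ (line~\ref{lineKcurr}) and the inequality $\frac{2}{K-1+\sqrt{(K-1)^2+4}}\le\frac{1}{K-1}$ already used in the proof of Theorem~\ref{thm:adaptiveMain}, every item seen so far has size at most $\frac{1}{K-1}=\frac{1}{k'+1}$. Every item of $B$ has size in $(1-\gamma_{k'},\frac{1}{k'+1}]$.

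Set $j=k'+1$. Since $v(B)>1$ and each item of $B$ has size at most $\frac1{k'+1}$, we get $|B|>k'+1$, so $B$ has at least $j$ elements. Take any $j$ of them and call this set $\hat S$. The upper bound is immediate: $v(\hat S)\le j\cdot\frac{1}{k'+1}=1$. For the lower bound we have $v(\hat S)>(k'+1)(1-\gamma_{k'})$, so it remains to show
\[
(k'+1)(1-\gamma_{k'})\ge \gamma_{k'}, \quad\text{i.e.}\quad 1-\gamma_{k'}\ge \frac{1}{k'+2}.
\]
Once this holds, $\hat S\subseteq B$ satisfies $1\ge v(\hat S)\ge\gamma_{k'}$. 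Then the test before line~\ref{adline291} succeeds, line~\ref{adline1044} is never executed, and the claim follows.

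The main obstacle is this last inequality; it is only a calculation, but it is the single point where the hypothesis $k'\ge 2$ enters. Using $1-\gamma_k=\frac{k+2-\sqrt{k^2+4}}{2k}$, the inequality becomes $k+2-\sqrt{k^2+4}\ge\frac{2k}{k+2}$, that is, $\frac{k^2+2k+4}{k+2}\ge\sqrt{k^2+4}$. Both sides are positive, so we may square. Expanding, $(k^2+2k+4)^2-(k^2+4)(k+2)^2=k^4+4k^3+12k^2+16k+16-(k^4+4k^3+8k^2+16k+16)=4k^2$. This is positive for all $k\ge1$, so the inequality holds for every $k'\ge 2$ (indeed for every $k'\ge1$). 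I would double-check it numerically at $k'=2$, where $\gamma_2=\frac{\sqrt2}{2}\approx 0.707\le\frac34$.
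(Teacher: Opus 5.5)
Your proof is correct and follows the paper's argument: $k_{curr}-1=k'+1$ items of $B$ fit because each has size at most $\frac{2}{k_{curr}-1+\sqrt{(k_{curr}-1)^2+4}}\le\frac{1}{k_{curr}-1}$ (Lemma~\ref{lm:largeItemsBound}), and together they exceed $\gamma_{k'}$, so the subset test succeeds and line~\ref{adline1044} is never reached. The only difference is that you check $(k'+1)(1-\gamma_{k'})\ge\gamma_{k'}$ by a direct computation, whereas the paper invokes the stronger Lemma~\ref{lm:sufficientNonSmallItems}, which has $\gamma_{k-1}$ on the right and is reused later for the recourse variant. Note also that the hypothesis $k_{curr}\ge 4$ is not used in that inequality, which holds for all $k'\ge1$; it is used in the identification $k'=k_{curr}-2$ (that is, $k'+1=k_{curr}-1$), which would fail for $k_{curr}\le 2$.
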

    
\pjreplace{    We can now complete the proof of the theorem by induction on $k'_i$: 
    
        \noindent\textbf{Base Case, $i=1$:} Since $k'_1$ is the first value for $k'$ selected by \adapt, the first iteration is exactly the same as for \nonadapt\ when running for $k'_1$.
        \textbf{Induction Step, $i>1$:} Suppose that the hypothesis is true for all $k'_j$, $1\leq j<i$. \newline
        \textbf{Case I, $k'_i=k'_{i-1}$:} That is the simplest case. Since we have $B_{i-1} = B_{k'_{i-1}, i-1}$ and $k'$ doesn't change at step $i$,  then \adapt\ and \nonadapt\ behave exactly identical with regards to non-small items. That is, $B_{k'_i,i} = B_{k'_{i-1},i} = B_i$. \\
        \textbf{Case II, $k'_i\neq k'_{i-1}$:} We identify the following sub-cases:

                \textbf{Case II.a, $k'_i \neq 1$:} We have $k_{curr}\geq 4$ since it must be $k'\geq 2$. By Claim~\ref{cl:noBigItemsDeleted}, after iteration $i$, \adapt\ has either terminated with approximation at least $\gamma_{k'}$ or has not removed any item from $B_i$. Items in $B_{i-1}$ are either reclassified as small at the start of iteration $i$ (and possibly deleted after) or remain in $B_i$. Thus, every item seen so far that is not $k'_i$-small remains in $B_i$.

We now consider \nonadapt\ and the set $B_{k',i}$. Since all non-$k'_i$-small items fit in the knapsack, the condition to remove such items is not met, and $B_{k',i} = B_i$.

                \textbf{Case II.b, $k'_i = 1$:} The case $k'_{i-1} = 1$ is covered by \textbf{Case I}, so we consider only $k'_{i-1} \neq 1$. By definition, this is the first iteration in which an item large enough to set $k' = 1$ appears. By induction, we have $B_{i-1} = B_{k'_{i-1}, i-1}$. After item $o_i$ arrives, \adapt\ updates $k'$ to $k'_i = 1$ (equal to $\hat{k}$, as it will not change again) and resets $B_i$ in Step~\ref{algStep:smallRemoval} to include all non-$1$-small items seen so far, none of which have been deleted by Claim~\ref{cl:noBigItemsDeleted} and induction. It then handles item $o_i$.

Finally, it suffices to note that $B_{1,i-1}$ contains exactly the same items as the set $B$ of \adapt\ immediately after the reset in Step~\ref{algStep:smallRemoval}. Since both algorithms then process the same item in iteration $i$, we obtain $B_{1,i} = B_i$.

\smallskip
\noindent This concludes the proof since the order of deletion of small items is arbitrary and therefore not important for the results guaranteed by \nonadapt. \qed
}{While $k'\ge2$, Claim~\ref{cl:noBigItemsDeleted} shows that no non-small item is removed before the desired \cra\ is reached. If the algorithm removes small items to restore feasibility, then just before the last removal the value is greater than $1$, and the removed item has size at most $1-\gamma_{k'}$. The remaining value is therefore greater than $\gamma_{k'}$. Since $k'$ can only decrease, this solution satisfies every subsequent threshold and is not changed afterwards. The same holds if line~\ref{adline291} produces a solution of value at least $\gamma_{k'}$.

Consequently, either the desired \cra\ has already been reached, or no item has been removed while $k'\ge2$. If the input ends in the latter case, the solution contains all the items and is optimal. If $k'$ becomes $1$, the solution still contains all the previously seen items, unless the desired \cra\ has already been reached. If its value is at least $\gamma_1$, the algorithm does not change it. Otherwise, \nonadapt\ run with parameter $1$ from the beginning would have kept the same items, and Algorithm~\ref{algo_adaptive} subsequently follows its rules, with the possible earlier selection of a subset of $B$ of value at least $\gamma_1$. Theorem~\ref{thm11} therefore gives \cra\ $\gamma_1$ in this case. In every case, the \cra\ is at least $\gamma_{\hat k}$. \qed
}
\end{proof}

\pjmargincomment{Also brought this here without lemmas 5 and 6, but I think they are pretty clear when mentione din the proof.}
\pjedit{
\begin{proof}[Of Claim~\ref{cl:noBigItemsDeleted}]
    We prove the claim by showing that because we are checking for a set $S$ satisfying an approximation ratio of $\gamma_{k'}$, instead of $\gamma_{k_{curr}}$\footnote{Where $k_{curr}$ the optimal $k$ for the items bounded by the highest value seen so far.}, the sum of items in $B$ will always become larger than $\gamma_{k'}$ before becoming infeasible (i.e., larger than 1). We show this by using Lemmas \ref{lm:sufficientNonSmallItems} and \ref{lm:largeItemsBound}, that are provided and proven in Appendix~\ref{ap:omittedAdaptiveRemoval}.
    
        Specifically, by definition, at any point, the algorithm has seen items of size up to \[
    \frac{2}{k_{curr}-1+\sqrt{(k_{curr}-1)^2+4}}.
    \]
    
    By step~\ref{algStep:smallRemoval} of the algorithm\pjreplace{ - removing all non-small items from $B$ whenever $k'$ is updated}{, resetting $B$ at an overflow}, all items in $B$ \pjreplace{at any point}{at that point} must be of size \pjreplace{at least}{strictly greater than} $1-\gamma_{k'} = 1-\gamma_{k_{curr}-2}$.
    
    By Lemma~\ref{lm:sufficientNonSmallItems} we have that 
    
    \[
    (k-1)(1-\gamma_{k-2})\geq \gamma_{k-2}\Rightarrow (k_{curr}-1)(1-\gamma_{k_{curr}-2})\pjreplace{>}{\ge}\gamma_{k_{curr}-2}=\gamma_{k'}
    \].

This means that whenever there are $k_{curr}-1$ items in $B$, there is a set $S\subseteq B$ that satisfies \pjreplace{$|S|\geq \gamma_{k'}$}{$v(S)>\gamma_{k'}$}.

Finally, by Lemma \ref{lm:largeItemsBound}, we have: 
\[
(k_{curr}-1)\frac{2}{k_{curr}-1+\sqrt{(k_{curr}-1)^2+4}}\leq 1.
\]

The right part of the product on the left side of the inequality is an upper bound on the largest possible item size seen so far. Thus, whenever there are $k_{curr}-1$ items in $B$, they can fit in the knapsack and would be returned as the algorithm's solution. \qed
\end{proof}
}

\pjreplace{\subsubsection{Learning-Augmented Variant}}{\subsection{Learning-Augmented Variant}}

 The adaptive algorithm also allows to create a \emph{learning-augmented via predictions version} (cf.  \cite{site:ALPS,predictions}).

The goal is to design algorithms that leverage predictive input, motivated by modern ML systems’ ability to provide such insight. We are interested in two performance measures: (a) \textbf{Consistency}, the worst-case \cra\ when predictions are perfect, and (b) \textbf{Robustness}, the worst-case \cra\ when predictions are arbitrarily incorrect. In our setting, the prediction concerns the true item-size upper bound.

For our version, we identify the following key property, whose proof is provided in Appendix~\ref{ap:predictions}:

\begin{lemma}
\label{lemma:k-sizes}
Items of size up to $\frac{2}{k-1+\sqrt{(k-1)^2+4}}$ are $\{k-c\}-small$, for $k > c\geq 3$. 
\end{lemma}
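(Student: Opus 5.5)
The plan is to compute both sides of the required inequality in closed form and compare them. By the definition of $k$-small items, the statement amounts to
\[
\frac{2}{k-1+\sqrt{(k-1)^2+4}} \;\le\; 1-\gamma_{k-c}\qquad\text{for all integers } k>c\ge 3 .
\]
Since $k>c$, the index $m:=k-c$ is a positive integer. Hence $\gamma_m$ is well defined by the formula of Theorem~\ref{thm11}.

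\textbf{Step 1: rewrite $1-\gamma_m$.} From $\gamma_m=\frac{m-2+\sqrt{m^2+4}}{2m}$ we get
\[
1-\gamma_m=\frac{m+2-\sqrt{m^2+4}}{2m}.
\]
Multiplying numerator and denominator by $m+2+\sqrt{m^2+4}$, the numerator becomes $(m+2)^2-(m^2+4)=4m$. This gives
\[
1-\gamma_m=\frac{2}{m+2+\sqrt{m^2+4}} .
\]
This form matches the shape of the left-hand side.

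\textbf{Step 2: compare denominators.} Both sides are $2$ divided by a positive quantity. So it suffices to show
\[
m+2+\sqrt{m^2+4}\;\le\; k-1+\sqrt{(k-1)^2+4}.
\]
Because $c\ge 3$, we have $m+2=k-c+2\le k-1$. Also $0<m\le k-1$, so $\sqrt{m^2+4}\le\sqrt{(k-1)^2+4}$ since $x\mapsto\sqrt{x^2+4}$ is increasing on $x\ge 0$. Adding these two inequalities gives the claim.

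\textbf{Step 3: conclude.} Every item $o$ with $|o|\le \frac{2}{k-1+\sqrt{(k-1)^2+4}}$ therefore satisfies $|o|\le 1-\gamma_{k-c}$, i.e.\ $o$ is $(k-c)$-small.

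I expect no real obstacle. The only step needing care is the rationalization in Step 1, which turns the comparison into a simple monotonicity check. The hypothesis $c\ge 3$ is used exactly once, to get $m+2\le k-1$, and $k>c$ only guarantees that $\gamma_{k-c}$ is defined.
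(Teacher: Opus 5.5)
Your proof is correct and takes essentially the same route as the paper: you rationalize $1-\gamma_{k-c}$ into the form $2/(\cdot)$ and compare denominators using monotonicity of $x\mapsto\sqrt{x^2+4}$. The only difference is that the paper first reduces to $c=3$ via monotonicity of $\gamma_k$ and then shows a difference $E$ is nonnegative, whereas you handle general $m=k-c$ in one step, which avoids citing that monotonicity.
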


This implies that the deletion order of previously seen items is irrelevant when targeting a \pjreplace{$\gamma_{k'-3}$}{$\gamma_{k-3}$} approximation. This allows us to follow the non-adaptive strategy of \cite{GourvesP24} and, if the prediction is invalidated, switch to \adapt’s strategy. 

Specifically, the algorithm needs to follow the following basic steps:
\begin{enumerate}
    \item For the prediction of the item-size upper bound  $u_p\pjedit{\in(0,1]}$, let $k_p$ be the optimal value of $k$,  according to Observation~\ref{obs:optimal_k}.
    \armargincomment{slightly rephrased 1.}
    \item If $k_p<4$, then run \adapt\ until completion, else continue.
    \item Start by running algorithm \adapt .\label{step:adapt} 
    \item If at any step it holds that \pjreplace{$k' \in \{k_p, k_{p-1}, k_{p-2}\}$}{$k' \in \{k_p, k_p-1, k_p-2\}$}, then keep the current knapsack state as is and start running \nonadapt\ \pjedit{with parameter $k_p$} from there on.\label{step:nonAdapt}
    \item If \pjreplace{during step \ref{step:nonAdapt}}{at any step} an item $o$ arrives such that \pjreplace{$o>u_p$}{$|o|>u_p$}, then recalculate the new $k_{curr}$ (as in \pjreplace{step 6 of \adapt}{line~\ref{lineKcurr} of \adapt}), keep the current knapsack state as is, and start running \adapt\ setting  \pjreplace{$k' = \min\{k_{p-3}, k_{curr-2}\}$}{\[k'=\max\{1,\min\{k_p-3,k_{curr}-2\}\}.\]}\label{step:reAdapt}
\pjedit{Use this assignment in every subsequent iteration and do not use the prediction again.}
\end{enumerate}
\pjedit{The tests in steps~\ref{step:nonAdapt} and~\ref{step:reAdapt} are performed after updating $MaxItem$ and $k_{curr}$, but before processing the arriving item. Step~\ref{step:reAdapt} takes precedence over step~\ref{step:nonAdapt}.}

    \begin{theorem}
\label{thm:augmented}
There exists an algorithm that, given a prediction of the item-size upper bound, achieves \emph{Consistency} \pjreplace{equal to the \cra}{at least the guarantee}\ of \nonadapt\ from \cite{GourvesP24} \pjreplace{(i.e., approximately $\gamma_{k^*}$)}{with parameter $k^*$} and \emph{Robustness} of $\gamma_{k^*-3}$ for all $k^* \geq 4$. For $1 \leq k^* < 4$, it guarantees both \emph{Consistency} and \emph{Robustness} of $\gamma_{\hat{k}}$, where $\hat{k}=\max\{k^*-2,1\}$.
\end{theorem}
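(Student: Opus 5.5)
We take as our algorithm the five-step procedure described just before Theorem~\ref{thm:augmented}, and treat consistency and robustness separately. Throughout, $k^*$ is the optimal parameter for the true maximum size $u$ (Observation~\ref{obs:optimal_k}). Since $k_{curr}$ is non-increasing and equals $k^*$ once the largest item has arrived, every intermediate value of $k_{curr}$ is at least $k^*$. The case $k_p<4$ is immediate: the algorithm is exactly \adapt. Theorem~\ref{thm:adaptiveMain} then gives $\gamma_{\hat k}$ whatever the prediction. When $k^*<4$ but $k_p\ge 4$, the prediction is wrong, and the robustness argument below applies. There the bound is $\gamma_{\max\{1,\min\{k_p-3,k^*-2\}\}}=\gamma_{\hat k}$ because $k^*-2\le 1$. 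This settles the second sentence of the theorem.

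\emph{Consistency.} Assume $u\le u_p$, so $k_p\le k^*$ and step~\ref{step:reAdapt} never fires. During the initial \adapt\ phase we have $k_{curr}\ge k^*\ge k_p\ge 4$, hence $k'\ge 2$. So, exactly as in the proof of Theorem~\ref{thm:adaptiveMain}, Claim~\ref{cl:noBigItemsDeleted} applies. Either a solution of value at least $\gamma_{k'}$ was frozen earlier, or no item in $B$ has been removed.

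\begin{itemize}
\item If the phase ends because $k'$ drops into $\{k_p,k_p-1,k_p-2\}$, we must relate the knapsack state to what \nonadapt\ with parameter $k_p$ would hold. Here Lemma~\ref{lemma:k-sizes} is the tool. Every item removed so far had size at most the bound attached to the $k_{curr}$ of its time. Since $k_{curr}\ge k_p+2$ while $k'>k_p$, such items are $(k_{curr}-3)$-small, hence $k_p$-small because $\gamma_k$ is increasing. Thus every item deleted before the switch is $k_p$-small. \nonadapt\ treats small items interchangeably, deleting them arbitrarily. So the current state is one that \nonadapt\ with parameter $k_p$ could have reached: it contains all $k_p$-non-small items and a subset of small items of no smaller value.
\item If a solution was frozen during the \adapt\ phase, its value is at least $\gamma_{k'}\ge\gamma_{k_p}$.
\end{itemize}

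In either case, continuing with \nonadapt\ and invoking Theorem~\ref{thm11} yields the guarantee of \nonadapt\ with parameter $k_p$, which is $k^*$ when the prediction is exact.

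\emph{Robustness.} Assume an item with $|o|>u_p$ arrives, triggering step~\ref{step:reAdapt} (if it never triggers, we are in the consistent case). From then on the algorithm runs \adapt\ with $k'=\max\{1,\min\{k_p-3,k_{curr}-2\}\}$. This threshold is at most $k_p-3$, and the only items possibly deleted earlier by the $k_p$-phase were $k_p$-small, hence also $k'$-small. Items removed in the earlier \adapt\ phase are likewise $(k_{curr}-3)$-small by Lemma~\ref{lemma:k-sizes}. So no item that is non-small for the new $k'$ has been lost, unless a frozen solution of value at least the current threshold already exists. We then replay the proof of Theorem~\ref{thm:adaptiveMain} from this point. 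Claim~\ref{cl:noBigItemsDeleted} still holds because $k'\le k_{curr}-2$: its proof only uses $k'\le k_{curr}-2$ together with Lemmas~\ref{lm:sufficientNonSmallItems} and~\ref{lm:largeItemsBound}, which remain valid for smaller $k'$ since $\gamma$ is monotone. The final guarantee is $\gamma_{\min\{k_p-3,k^*-2\}}$. When $k_p\ge k^*$ this is at least $\gamma_{k^*-3}$, and when $k_p<k^*$ no violation can occur. Combining the two regimes gives robustness $\gamma_{k^*-3}$.

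The main obstacle is the state-handover argument at the two switches. We must show that a knapsack built under one threshold rule is a legitimate state for the algorithm that takes over. Concretely, every previously deleted item must be small for the new parameter, and frozen solutions must exceed the new target. This is exactly where Lemma~\ref{lemma:k-sizes} and the offset of $3$ (rather than $2$) are needed.
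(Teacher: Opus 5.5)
Your route is the paper's: the same five-step procedure, the same two hand-overs justified by Lemma~\ref{lemma:k-sizes}, and the same bound $\gamma_{\max\{1,\min\{k_p-3,k^*-2\}\}}$ after an invalidation. However, the robustness claim has a real gap for predictions that overestimate the bound.

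You dismiss $k_p<k^*$ with ``no violation can occur'' and defer to your consistency analysis. That analysis only delivers the guarantee of \nonadapt\ with parameter $k_p$, and your frozen-solution bullet likewise gives only $\gamma_{k'}\ge\gamma_{k_p}$. For $k_p\le k^*-4$ this is $\gamma_{k_p}<\gamma_{k^*-3}$, so the claimed robustness does not follow.

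Moreover, in this regime the hand-over you analyze never happens. Since $k_{curr}\ge k^*$ always, we have $k'\ge k^*-2>k_p$ whenever $k_p\le k^*-3$. So step~\ref{step:nonAdapt} never fires, and if nothing gets frozen, neither of your bullets describes the run. What is missing is the paper's first sub-case: the run is then pure \adapt, and Theorem~\ref{thm:adaptiveMain} gives $\gamma_{\max\{k^*-2,1\}}$. Equivalently, a frozen solution has value at least $\gamma_{k'}\ge\gamma_{k^*-2}$, and without a freeze nothing was ever removed. For $k^*-2\le k_p<k^*$ the switch does occur, and you still have to observe that $\gamma_{k_p}\ge\gamma_{k^*-2}$.

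Second, in the hand-over after an invalidation, your premise that the $k_p$-phase deletes only $k_p$-small items is false. Line~\ref{line1044} of Algorithm~\ref{algo7} removes the largest item, which is $k_p$-non-small, when $v(B)>1$ and no subset of $B$ lies in $[\gamma_{k_p},1]$. For instance, take $k_p=4$ and four items of size $0.255\in(1-\gamma_4,\,2/(3+\sqrt{13})]$. Their total is $1.02>1$, yet no subset sum lies in $[\gamma_4,1]\approx[0.809,1]$, since $3\cdot 0.255=0.765$. So one of these non-small items gets deleted.

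The conclusion you need survives, but for a different reason. Every item preceding the invalidating one has size at most $u_p\le 2/(k_p-1+\sqrt{(k_p-1)^2+4})$, so it is $(k_p-3)$-small by Lemma~\ref{lemma:k-sizes}. Hence any earlier deletion leaves value greater than $\gamma_{k_p-3}\ge\gamma_{k'}$, and the new threshold is met immediately. This value argument, as the paper gives it, is what closes the hand-over. A bare ``replay'' of Theorem~\ref{thm:adaptiveMain} does not suffice, because that proof presumes that, short of the target, every item seen so far is still held.

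There is also a minor slip: $k'>k_p$ gives $k_{curr}\ge k_p+3$, not $k_p+2$. The $+3$ is exactly what you need to pass from $(k_{curr}-3)$-small to $k_p$-small.
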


\pjedit{
\begin{proof}
\pjreplace{For the second case of the theorem, $k<4$, we notice that whenever}{Whenever} $k_p<4$, we only run \adapt\ and ignore the prediction. Thus, we guarantee approximation of $\gamma_{\hat{k}}$\pjedit{, where $\hat{k}=\max\{k^*-2,1\}$,} by Theorem~\ref{thm:adaptiveMain} which bounds both the robustness and the consistency of the algorithm.

We now focus on the more interesting case where $k_p\geq 4$. If $k_{opt}$ is the optimal $k$ for the real upper bound $u$ according to Observation~\ref{obs:optimal_k}, we break the analysis into the following cases:

\begin{description}
    \item[Case $k_{opt}> k_p$.] In this case the prediction is not accurate and thus we need to show the robustness bound. We identify two sub-cases:
    \begin{description}
        \item[\pjreplace{$k_{opt-2}> k_{p}$:}{$k_{opt}-2> k_{p}$:}] Then, the algorithm will terminate during the run of \adapt\ in step \ref{step:adapt} guaranteeing approximation \pjreplace{$\gamma_{\hat{k}}>\gamma_{k-3}$}{$\gamma_{\max\{k_{opt}-2,1\}}>\gamma_{k_{opt}-3}$} (Theorem~\ref{thm:adaptiveMain}).
        \item[\pjreplace{$k_{opt-2}\leq k_{p}$:}{$k_{opt}-2\leq k_{p}$:}] \pjreplace{Then the algorithm will terminate during the run of Algorithm~\ref{algo7} at step\ref{step:nonAdapt}.}{Then the algorithm reaches step~\ref{step:nonAdapt}, unless the required guarantee has already been achieved.} Notice that our algorithm will immediately switch to Algorithm~\ref{algo7} as soon as $k'\leq k_{p}$.\pjedit{ This switch is performed before processing the arriving item. Before this arrival, $k_{curr}\geq k_p+3$, unless this is the first item.} That is, at the point of the switch, all items \pjreplace{seen so far}{preceding the arriving item} would be considered small, w.r.t to $k_p$, by Lemma~\ref{lemma:k-sizes}. Thus, using the same logic as in the proof of Theorem~\ref{thm:adaptiveMain}, all \pjreplace{items seen so far}{earlier items} would have been considered as $k_p-small$ by Algorithm~\ref{algo7} and their deletion order doesn't matter. \pjreplace{That is, until this point, the algorithm has behaved exactly as if we were running Algorithm~\ref{algo7} from the beginning.}{More precisely, by the proof of Theorem~\ref{thm:adaptiveMain}, any earlier deletion or rejection leaves value at least $\gamma_{k_p}$, since until the switch $k'>k_p$. Otherwise, all earlier items remain. Thus, unless the value already reaches $\gamma_{k_p}$, the current state is the same as for Algorithm~\ref{algo7} with parameter $k_p$ from the beginning. The arriving item is then processed by Algorithm~\ref{algo7}.} Since $k_p<k_{opt}$ \pjreplace{(i.e., $u_p>u$)}{(hence $u_p>u$)}, Algorithm~\ref{algo7} will still provide approximation $\gamma_{k_p}$, by Theorem~\ref{thm11} \cite{GourvesP24}. But we have \pjreplace{$k_p\geq k_{opt-2}$}{$k_p\geq k_{opt}-2$} which is better than the desired robustness.
    \end{description}
    \item[Case $k_{opt}=k_p$.] \pjreplace{I.e., the prediction agrees with the real upper-bound and we need to show the consistency bound.}{First suppose that the prediction is never invalidated.} This case follows \pjreplace{exactly the same logic}{the same switching argument} as the sub case \pjreplace{($k_{opt-2}\leq k_{p}$)}{($k_{opt}-2\leq k_{p}$)} of the previous case. All we have to notice is that since now $k_{opt}=k_p$, Algorithm~\ref{algo7} will guarantee its best possible competitive ratio as stated by Theorem~\ref{thm11} \pareplace{and Theorem~\ref{thm111}}  (mentioned in Section~\ref{sec:adaptive}) of \cite{GourvesP24}.\pjedit{ This proves consistency when $u_p=u$, and robustness otherwise, since this guarantee is at least $\gamma_{k_p-1}\geq\gamma_{k_p-3}$. If the prediction is invalidated, the argument in the following case applies as well.}
\item[\pjreplace{Case $k_{opt}>k_p$.}{Case $k_{opt}<k_p$.}] I.e., the prediction is not accurate and thus we need to show the robustness bound. \pjreplace{The only case where the procedure of \adapt\ would be disrupted, is if at any step, $k' \in \{k_p, k_{p-1}, k_{p-2}\}$, thus entering step \ref{step:nonAdapt} of the algorithm.}{An item invalidating the prediction must arrive. The following argument also applies when $k_{opt}=k_p$ and the prediction is invalidated.} \pjreplace{But, if at any point after that, an item}{When the first item} invalidating the prediction arrives (i.e., an item of size $|o|>u_p$)\pjreplace{, and the current termination bound isn't reached, then we immediately return to making an over-estimation of $k$, by setting $k'< k_{p-3}$.}{, we immediately return to running \adapt\ with $k'=\max\{1,\min\{k_p-3,k_{curr}-2\}\}$ at every subsequent iteration.} By Lemma~\ref{lemma:k-sizes}, this means that \pjreplace{all previously seen items are now $k_{p-3}$-small}{all items preceding this arrival are $\{k_p-3\}$-small}, so their deletion order doesn't matter when we are targeting an approximation ratio of at most \pjreplace{$\gamma_{k_{p-3}}$}{$\gamma_{k_p-3}$}.\pjedit{ More precisely, if an earlier deletion has occurred, the last deletion restoring feasibility leaves value greater than $\gamma_{k_p-3}$. Selection of a feasible subset or rejection at a termination condition also leaves at least this value. This already suffices, since $k_{opt}\leq k_p$. Otherwise, all earlier items remain. The proof of Theorem~\ref{thm:adaptiveMain} applies to the subsequent run because $k'$ never increases and, whenever $k'>1$, satisfies $k'\leq k_{curr}-2$: by Lemmas~\ref{lm:sufficientNonSmallItems} and~\ref{lm:largeItemsBound}, $k'+1$ non-small items fit and already reach $\gamma_{k'}$, so no non-small item is removed before reaching the target; deleting small items to restore feasibility reaches the target. The case $k'=1$ is handled as in that proof. The final guarantee is therefore $\gamma_{\max\{1,\min\{k_p-3,k_{opt}-2\}\}}$, which is at least $\gamma_{k_{opt}-3}$ when $k_{opt}\geq4$, and at least $\gamma_1$ otherwise.} Thus the desired guarantee directly follows from the analysis of \adapt\ (i.e., proof of Theorem~\ref{thm:adaptiveMain}).
\end{description}
This concludes the proof. \qed

\end{proof}

}

\subsection{Adaptive Algorithm with Removal and Recourse}
\label{sec:adaptRecourse}

In this section, we extend the previous result by turning our attention to adaptive algorithms (with unknown item size upper bound) that can also use a recourse.

\begin{algorithm}[t]
\caption{\adaptRec 
\label{algo_adaptive_recourse}}
\begin{algorithmic}[1] 
\State $S \gets \emptyset$
\State $MaxItem \gets 0$
\State $B \gets \emptyset$
\While{a new item $o_i$ arrives}
\State $MaxItem \gets \max\{MaxItem, |o_i|\}$

    
\State $k_{curr}\gets \max\Big\{k\in \mathbb{N}\text{ } \big | \text{ } \frac{2}{k-1+\sqrt{(k-1)^2+4}}\geq MaxItem \Big\}$ \Comment{Optimal $k$ considering $MaxItem$ as the upper bound.}

\State $k'\gets \max\{k_{curr}-2, 1\}$
\State $k''\gets \max\{k_{curr}-1, 1\}$



    \If{ \label{adRecline1024} $v(S) \ge \gamma_{k''}$}
        \State \label{adRecline1025}  Reject $o_i$ \Comment{{\em $S$ is not changed afterwards}}

\Else
\State \label{adRecline1027} $S \gets S \cup \{o_i\}$
    
\If{\label{adRecline1030} $v(S) > 1$}


\State Let $B=\{o \in S : |o| > 1-\gamma_{k'} \}$
\label{algStep:smallRemovalRec} \Comment{Reset B}

    \If{$S$ contains a subset $\hat S$ 
such that $1 \ge v(\hat S) \ge \gamma_{k''}$}


\State \label{adRecline291} $S \gets \hat S$ \Comment{{\em $S$ is not changed afterwards}}
\BREAK
\EndIf

\If{  $v(B) > 1$ \pjcomment{Had missed that!}\pjedit{\textbf{or} $B=\varnothing$}}

\State \label{adRecline1044} Remove the largest item of $S$ (send to the buffer) 


\ElsIf{$|B|=1$ ${\wedge}$ $\exists\ o'\in \textit{Deleted Items}$ s.t.: \pjreplace{$S\bigcup o' \setminus B $}{$(S\cup\{o'\})\setminus B$} contains a 
\Statex \hspace{18mm}subset $\hat S$ such that $1 \ge v(\hat S) \ge \gamma_{k''}$}

\State Use the recourse to bring item $o'$ back and set:

\State \label{adRecline291b} $S \gets \hat S$ \Comment{{\em $S$ is not changed afterwards}}
\BREAK

\Else
\While{$v(S)>1$ \label{adRecline1050}}
\State \label{adRecline1051} Remove from $S$ the smallest element of $S \setminus B$ (send to the \Statex \hspace{25mm} buffer)
\EndWhile

\EndIf 

\EndIf 
\EndIf 

\EndWhile

\end{algorithmic}
\end{algorithm}

 The goal is to incorporate the recourse into a new version of the adaptive algorithm so that we can create an equivalent of Theorem~\ref{thm:adaptiveMain}, that guarantees a {\cra} even closer to the best possible that the non adaptive algorithm can achieve. Specifically, we aim for a {\cra} 
 of at least $\gamma_{\hat{k}+1}$ \pjreplace{(in the general case)}{when $k^*\geq 3$}. The main logic is as follows:

Maintain the same strategy as \adapt\ and keep considering items of size \pjedit{larger than} $1-\gamma_{k'}$ to be non-small, where $k' = \max\{\pjreplace{k_{curr-2}}{k_{curr}-2}, 1\}$, but only terminate if the sum of collected items gives \cra\ (at least) \pjreplace{$\gamma_{k'+1}$}{$\gamma_{k''}$, where $k''=\max\{k_{curr}-1,1\}$}. That is, run the algorithm as if we were trying to get \cra\ $\gamma_{\hat{k}}$, but only accept the solution if a \cra\ of $\gamma_{\hat{k}+1}$ is satisfied instead. While doing so, also keep track of the items of size larger than \pjreplace{$1-\gamma_{k'+1}$}{$1-\gamma_{k''}$}, i.e., \emph{``non-very-small items''}. I.e., the items we should have considered as ``non-small'' if we were aiming for \cra\ of $\gamma_{\hat{k}+1}$, by the logic of \nonadapt . 



Unlike non-small items in the proof of Theorem~\ref{thm:adaptiveMain}, which always additively guarantee the desired {\cra} 
before becoming infeasible, non-very-small items lack this property. As a result, the adaptive algorithm may remove items that it could later combine with new arrivals to achieve the target ratio. With careful analysis, we identify the unique case where this occurs and show that the algorithm can \pjreplace{immediately detect and reverse it with a single recourse}{detect it and use a single recourse when needed}.


\pjmargincomment{Updated for missed case}\pjreplace{Refining the algorithm to delete small items in increasing order—without loss of generality, since arbitrary deletion sufficed before—we observe the following:



\emph{If any subset of the items seen so far is feasible and can yield an approximation of at least $\gamma_{k'+1}$ (for any current value of $k'$), then either there is a feasible subset among the items currently in the solution that satisfies this, or it can be achieved by a single swap between a rejected item and an item in the solution.}
}{When $B$ is empty, the algorithm removes the largest item, as
\nonadapt\ would do. Otherwise, it deletes small items in increasing
order. If the two algorithms delete different items before reaching the target,
Algorithm~\ref{algo_adaptive_recourse} retains at least as much value as \nonadapt. Upon the next
non-very-small item, either a target solution is available among the current
items, or it can be obtained by a single recourse.}
\pjdelete{\pjreplace{\pareplace{We state the main theorem here and provide the related algorithm, {\adaptRec}  (Algo.~\ref{algo_adaptive_recourse}), as well as the technical analysis in Appendix~\ref{ap:adaptRec}}{The technical analysis is deferred to the full version of the paper.}}{The technical analysis is given in Appendix~\ref{ap:adaptRec}.}}

\begin{theorem}
    \label{Thm:Adaptive_recourse}
        For all positive integers $k$, Algorithm~\ref{algo_adaptive_recourse} is $\gamma_{\tilde{k}}$-competitive when $u\leq \frac{2}{k-1+\sqrt{(k-1)^2+4}}$, where \pjmargincomment{Same as previous theorem statement (!or not?!!)}$\tilde{k}= \max\{\pjreplace{k^*}{k}-1, 1\}$, \pareplace{}{using at most one recourse}.  
\end{theorem}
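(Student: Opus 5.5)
We will follow the structure of the proof of Theorem~\ref{thm:adaptiveMain}, with target $\gamma_{k''}$ in place of $\gamma_{k'}$, where $k''=\max\{k_{curr}-1,1\}$. As there, it suffices to take $k=k^*$, since $\gamma_k$ is increasing. When the largest item seen has size $u$, we have $k''=\tilde k$.

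\textbf{Step 1 (termination states).} If the algorithm stops at line~\ref{adRecline1025}, \ref{adRecline291} or \ref{adRecline291b}, the value is at least $\gamma_{k''}$. Since $k_{curr}$ only decreases, $k''$ only decreases, so this value satisfies all later thresholds. The same holds for the small-item deletion loop at line~\ref{adRecline1051}. Just before the last deletion the value exceeds $1$, and the deleted item is non-very-small-or-smaller in the following sense: it lies in $S\setminus B$, so its size is at most $1-\gamma_{k'}$. If every element of $S\setminus B$ also has size at most $1-\gamma_{k''}$, the remaining value exceeds $\gamma_{k''}$ and we are done. This leaves the delicate case in which a deleted element of $S\setminus B$ has size in $(1-\gamma_{k''},1-\gamma_{k'}]$, i.e.\ it is non-very-small but small.

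\textbf{Step 2 (non-small items are never lost, $k'\ge 2$).} We re-run the argument of Claim~\ref{cl:noBigItemsDeleted}. By Lemma~\ref{lm:sufficientNonSmallItems}, $k_{curr}-1$ items of $B$ already have value at least $\gamma_{k'}$. We need the stronger conclusion with $\gamma_{k''}$, which we obtain from $k_{curr}$ such items. By Lemma~\ref{lm:largeItemsBound} together with the bound on \emph{MaxItem}, $k_{curr}-1$ items fit; $k_{curr}$ items may not. So the only situation where line~\ref{adRecline1044} could remove a useful non-small item is exactly when $v(B)>1$ while no subset reaches $\gamma_{k''}$. We will argue via Lemma~\ref{lm:sufficientNonSmallItems} applied at index $k_{curr}-1$ that this cannot happen. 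In the case $B=\varnothing$, removing the largest item coincides with \nonadapt. When $k'=1$ (i.e.\ $k_{curr}\le 3$), we reduce, as in the previous proof, to \nonadapt\ with parameter $\tilde k\in\{1,2\}$. This uses Theorem~\ref{thm11} and the fact that all earlier items are retained unless the target was already met.

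\textbf{Step 3 (main obstacle: the single recourse).} The hard case is when small-in-increasing-order deletions remove non-very-small items that \nonadapt\ with parameter $k''$ would have kept as non-small. We will show the following structural statement. Suppose some subset of the items seen so far is feasible with value at least $\gamma_{k''}$. Then either such a subset already lies in $S$ (caught at line~\ref{adRecline291}), or $|B|=1$ and one deleted item $o'$ can be exchanged in (line~\ref{adRecline291b}).

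The plan is as follows. Non-very-small items have size greater than $1-\gamma_{k''}$, so few of them suffice for the target, while $B$ (size greater than $1-\gamma_{k'}$) fits up to $k_{curr}-1$ items. If $|B|\ge 2$, we would argue that the larger non-small items together with the retained small items already reach $\gamma_{k''}$. If $|B|=1$, deleting smallest-first means the retained part of $S\setminus B$ dominates any alternative \nonadapt\ choice in value. Thus the only deficit relative to an optimal target set is a single deleted non-very-small item, which the recourse restores. I expect this exchange argument to be the main difficulty, in particular proving that only one deleted item is ever missing. I would attempt it by comparing, item by item, the sorted retained list with the \nonadapt\ solution at parameter $k''$.

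\textbf{Step 4 (recourse count).} Recourse is used only at line~\ref{adRecline291b}, after which the algorithm stops modifying $S$. Hence at most one recourse is used, and combining Steps 1--3 gives the ratio $\gamma_{\tilde k}$.
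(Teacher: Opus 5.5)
\textbf{The structural claim in Step~3 is false.} You assert that whenever some feasible subset of the items seen so far reaches $\gamma_{k''}$, it either lies in $S$ or can be obtained by one exchange while $|B|=1$. Take $k_{curr}=4$, so $k''=3$, the target is $\gamma_3\approx 0.7676$, and non-small means size $>1-\gamma_2\approx 0.2929$. Feed the items $0.25,0.25,0.25,0.26,0.26$. At each arrival of $0.26$ the knapsack overflows to $1.01$. The best subset of $S$ of value at most $1$ is worth $0.76<\gamma_3$, and $B=\varnothing$, so the new $0.26$ is sent to the buffer as the largest item. After the fifth item, $\{0.25,0.26,0.26\}$ is feasible with value $0.77\ge\gamma_3$. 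Yet it is not in $S$, and no recourse is even attempted. The ratio is still fine ($0.75$ versus $\mathrm{OPT}=0.77$), but only by a \emph{value} comparison, not because a target is found. The invariant you need is therefore not ``every existing target is reachable with one exchange''. It is that the algorithm's solution is worth at least as much as that of \nonadapt\ run with parameter $k''$, whose ratio is given by Theorem~\ref{thm11}. This holds up to a single divergence, which the recourse then closes. Your Steps~1--3 also say nothing about inputs that end with no target subset at all, where only such a comparison gives the ratio.

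\textbf{Missing ingredients.} Proving this needs three things your plan lacks.
\begin{itemize}
\item A failed test at line~15 forces $v(C)>1$. Then $C$ consists of exactly $k$ items, at most one of them non-small, by Lemmas~\ref{lm:nonVerSmallBound} and~\ref{lm:largeItemsBound} and Corollary~\ref{col:itemsCombination}. Only then does removing the largest item when $B=\varnothing$ coincide with \nonadapt.
\item At the first deletion with $|B|=1$, the algorithm keeps $z+R$, while \nonadapt\ keeps $d_0+R+w$, including everything outside $C$. Smallest-first deletion alone does not give $z+R\ge d_0+R+w$. You need $z+R+w<\beta$ (from the failed test) together with $2a+(k-3)c\ge\beta$ (Lemma~\ref{lm:itemStructure}) to get $w<z-d_0$.
\item At the next overflow caused by an item of $C$, restoring $d_0$ and dropping $z$ gives $k$ items of $C$ of total $<k(\beta-a)/(k-2)\le 1$. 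This quantitative bound is what makes the single recourse succeed.
\end{itemize}

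\textbf{Step~2 errors.} Step~2 wrongly reduces $k_{curr}=3$ to \nonadapt\ with parameter~$2$. There $k'=1\neq k''=2$, so the algorithm genuinely diverges from \nonadapt\ and may need the recourse. With items $0.39,0.305,0.31,0.3001$, it drops $0.305$ where \nonadapt\ drops $0.39$, and at the fourth item it recovers $0.305$ by recourse. So $k=3$ must go through the recourse argument; the paper checks $a+c>2/3$ separately for it. Also, the right index in Step~2 is $k_{curr}$, not $k_{curr}-1$. Lemma~\ref{lm:sufficientNonSmallItems} at $k=k_{curr}$ already shows that $k_{curr}-1$ items of $B$, which fit, reach $\gamma_{k''}$.
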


\pjedit{
 For the analysis, apart from the previous definition of the set of non-small items, $B$ (and its instance $B_i$ at step $i$), we also use $C$ for the set of non-very-small items and $C_i$ for its instance at step $i$. By definition, $B_i \subseteq C_i$ at any $i$. Even though Algorithm~\ref{algo_adaptive_recourse} still defines the set of non-small items to generally be $k-2$-non-small items, in this case we want to compare its behavior to the non-adaptive algorithm, Algorithm~\ref{algo7}, run with parameter $k-1$ (since we are trying to guarantee $\gamma_{k-1}$ approximation in this case).
\pjedit{Here $k=k_{curr}$ within the phase under consideration, and $B_i$ and
$C_i$ refer to the corresponding subsets of the currently held items.}

\begin{observation}
\label{obs:noDeletionofNonSmall}
    By Lemma~\ref{lm:sufficientNonSmallItems}, we have that the main condition used in the analysis of Algorithm~\ref{algo_adaptive}, i.e., that no item in the current set of non-small items $B_i$ will be deleted, also stands for Algorithm~\ref{algo_adaptive_recourse} \pjedit{before reaching the target, when $k_{curr}\geq3$}, where we are trying to reach $\gamma_{k-1}$ approximation.
\end{observation}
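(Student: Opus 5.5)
The statement to justify is Observation~\ref{obs:noDeletionofNonSmall}: while $k_{curr}\ge 3$ and the target $\gamma_{k''}=\gamma_{k_{curr}-1}$ has not been reached, Algorithm~\ref{algo_adaptive_recourse} never deletes an item that currently belongs to $B$. I would transfer the argument of Claim~\ref{cl:noBigItemsDeleted}, checking that it still works with the higher acceptance threshold $\gamma_{k''}$ in place of $\gamma_{k'}$. The non-small threshold is still $1-\gamma_{k'}$ with $k'=k_{curr}-2$.

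The argument rests on two quantitative facts. Write $k=k_{curr}$.

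\textbf{Fit.} Every item seen so far has size at most $\frac{2}{k-1+\sqrt{(k-1)^2+4}}$, by the definition of $k_{curr}$. By Lemma~\ref{lm:largeItemsBound}, any $k-1$ such items have total size at most $1$.

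\textbf{Target.} Every item of $B$ has size $>1-\gamma_{k-2}$. I would invoke Lemma~\ref{lm:sufficientNonSmallItems} in the form needed for the higher target, namely $(k-1)(1-\gamma_{k-2})\ge \gamma_{k-1}$. That lemma, as used in Claim~\ref{cl:noBigItemsDeleted}, gives $(k-1)(1-\gamma_{k-2})\ge\gamma_{k-2}$, which is weaker than what we need. So I would verify the stronger inequality directly from $\gamma_j=\frac{j-2+\sqrt{j^2+4}}{2j}$. For large $k$, $1-\gamma_{k-2}\approx \frac{1}{k-2}$, so the left side is about $\frac{k-1}{k-2}>1>\gamma_{k-1}$. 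For small $k$ (say $k=3,4,5$) I would check numerically. For $k=3$: $2(1-\gamma_1)\approx 0.764\ge\gamma_2\approx0.707$.

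\textbf{Main step.} From these two facts, whenever $|B|\ge k-1$ at an overflow, any $k-1$ items of $B$ form a feasible set of value $>\gamma_{k''}$. This set is a subset of $S$, so the test at line~\ref{adRecline291} succeeds and the algorithm terminates before any deletion. Hence the branch at line~\ref{adRecline1044} can fire only when $|B|\le k-2$. In that case $v(B)\le (k-2)\cdot\frac{2}{k-1+\sqrt{(k-1)^2+4}}<1$, so the condition $v(B)>1$ fails. That branch is then reached only through $B=\varnothing$, where removing the largest item of $S$ removes nothing from $B$. The recourse branch and the small-item loop at line~\ref{adRecline1051} remove only elements of $S\setminus B$.

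\textbf{Remaining care.} First, $B$ is reset at each overflow, and $k'$ only decreases, so membership only shrinks across phases. An item reclassified as small may legitimately be deleted, matching the footnote to Claim~\ref{cl:noBigItemsDeleted}. Second, I need feasibility of the small-item loop: removing small items eventually brings $v(S)\le1$ because $v(B)\le 1$.

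\textbf{Main obstacle.} I expect the inequality $(k-1)(1-\gamma_{k-2})\ge\gamma_{k-1}$ at $k=3$ to be the delicate point, together with the boundary between phases when $k_{curr}$ drops to $3$. If the numeric check fails for some small $k$, I would fall back on restricting the observation to $k_{curr}\ge4$ and treating $k_{curr}=3$ separately via Theorem~\ref{thm11} with $k''=2$.
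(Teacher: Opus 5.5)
Your route matches the paper's: the paper reruns the argument of Claim~\ref{cl:noBigItemsDeleted} with the stronger inequality $(k-1)(1-\gamma_{k-2})\ge\gamma_{k-1}$. It also notes that the extra largest-item deletion at line~18 happens only when $B=\varnothing$. Two of your side remarks are wrong, however, and should be fixed.

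\textbf{The key inequality.} You do not need to re-verify it. Lemma~\ref{lm:sufficientNonSmallItems} already states $(k-1)(1-\gamma_{k-2})\ge\gamma_{k-1}\ge\gamma_{k-2}$; the Claim merely used the weaker consequence.

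Your large-$k$ heuristic is false. We have $1-\gamma_{k-2}=\frac{2}{k+\sqrt{(k-2)^2+4}}<\frac{1}{k-1}$, so $(k-1)(1-\gamma_{k-2})<1$, not about $\frac{k-1}{k-2}>1$. It must be below $1$, because by your own Fit step any $k-1$ items of $B$ fit.

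The inequality holds for a different reason. Using $\sqrt{x^2+4}\le x+2/x$, the left side is at least $\frac{(k-1)(k-2)}{(k-1)(k-2)+1}$. For $k\ge4$ this is at least $\frac{k}{k+1}$. Moreover $\gamma_{k-1}<\frac{k}{k+1}$, since $\sqrt{(k-1)^2+4}<k+1$. The case $k=3$ is your numerical check, so no fallback for $k_{curr}=3$ is needed.

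\textbf{The recourse branch.} It does not remove only elements of $S\setminus B$. Since $\hat S\subseteq(S\cup\{o'\})\setminus B$, it discards the unique element of $B$. This is harmless only because that same step reaches the target, so it falls under the ``before reaching the target'' exception, exactly as the selection at line~15 does.
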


The proof is identical to Claim~\ref{cl:noBigItemsDeleted}. All we have to notice is that in that specific proof we didn't use  Lemma~\ref{lm:sufficientNonSmallItems} tightly: We only needed that  $(k-1)(1-\gamma_{k-2})\geq \gamma_{k-2}$ when it is also $(k-1)(1-\gamma_{k-2})\geq \gamma_{k-1}$.
\pjedit{The additional largest-item deletion in line~18 is performed only when
$B=\varnothing$, so it does not remove a non-small item.}

The following lemma is key to understanding the structure of non-small items. To maintain the narrative, we state the Lemma here and provide its proof in Appendix~\ref{ap:adaptRec}:

\begin{lemma}
    \label{lm:itemStructure}
    For all integers $k\geq 3$ and $0\leq n < k-2$,
    \[
    1>(k-n-1)(1-\gamma_{k-2})+n(1-\gamma_{k-1})\geq \gamma_{k-1}.
    \]
\end{lemma}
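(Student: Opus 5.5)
The plan is to use monotonicity in $n$ to reduce both inequalities to a single extreme value of $n$, and then to use the quadratic equation satisfied by $\gamma_m$. From $\gamma_m=\frac{m-2+\sqrt{m^2+4}}{2m}$, squaring $2m\gamma_m-(m-2)=\sqrt{m^2+4}$ shows that $\gamma_m$ is the positive root of
\[f_m(x)=mx^2-(m-2)x-1 .\]
This gives the identity $m\gamma_m(1-\gamma_m)=2\gamma_m-1$. Since $f_m(0)=-1<0$, it also gives the comparison principle $\gamma_m<x \iff f_m(x)>0$ for $x>0$. We use that $\gamma_{k-1}>\gamma_{k-2}$, so $1-\gamma_{k-1}<1-\gamma_{k-2}$. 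Hence $g(n)=(k-n-1)(1-\gamma_{k-2})+n(1-\gamma_{k-1})$ is strictly decreasing in $n$. It therefore suffices to prove the upper bound at $n=0$ and the lower bound at $n=k-3$. For $k=3$ the latter coincides with $n=0$.

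\emph{Upper bound.} Set $m=k-2\ge 1$. We need $(m+1)(1-\gamma_m)<1$, i.e.\ $\gamma_m<\frac{m+1}{m+2}$. By the comparison principle this amounts to $f_m\bigl(\frac{m+1}{m+2}\bigr)>0$. Clearing the denominator $(m+2)^2$, the numerator is
\[m(m+1)^2-(m-2)(m+1)(m+2)-(m+2)^2=m>0 .\]
So this half is routine.

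\emph{Lower bound.} We need
\[2(1-\gamma_{k-2})+(k-3)(1-\gamma_{k-1})\ \ge\ \gamma_{k-1}.\]
A first attempt would rewrite the left side as $(k-1)(1-\gamma_{k-1})+2(\gamma_{k-1}-\gamma_{k-2})$ and apply the identity with $m=k-1$. This turns the first term into $(2\gamma_{k-1}-1)/\gamma_{k-1}$, and the inequality becomes
\[2(\gamma_{k-1}-\gamma_{k-2})\ \ge\ \frac{(1-\gamma_{k-1})^2}{\gamma_{k-1}} .\]
So the gap-free part gives exactly the deficit $(1-\gamma)^2/\gamma$, and the whole proof rests on the growth $\gamma_{k-1}-\gamma_{k-2}$. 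Asymptotically, $1-\gamma_m=\frac1m-\frac1{m^2}+O(m^{-3})$. The left side is then about $2/k^2$ and the right side about $1/k^2$, so the inequality should hold with a constant-factor margin.

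To make this rigorous, I would derive two-sided rational bounds of the form
\[\frac1m-\frac1{m^2}\ \le\ 1-\gamma_m\ \le\ \frac1m-\frac1{m^2}+\frac{c}{m^3}\]
for a small explicit $c$. Each bound is again obtained by checking the sign of $f_m$ at the corresponding rational point. I would then substitute the upper bound for $1-\gamma_{k-1}$ on the right side and suitable bounds for $\gamma_{k-1}-\gamma_{k-2}$ on the left. Clearing denominators yields a polynomial inequality in $k$, which holds for all $k\ge k_0$. The finitely many cases $3\le k<k_0$ are checked directly from the closed form; for instance, $k=3$ reads $2(1-\gamma_1)\approx 0.764\ge\gamma_2\approx 0.707$.

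The main obstacle is this lower-bound step. The first-order terms cancel, so the estimates on $\gamma_m$ must be accurate to order $m^{-3}$, and the constant $c$ and the threshold $k_0$ have to be chosen with care. If that bookkeeping gets unwieldy, a fallback is to treat $m$ as a real variable. One would prove $\frac{d}{dm}\gamma_m\ge \frac{(1-\gamma_{m+1})^2}{2\gamma_{m+1}}$ via the mean value theorem, using the explicit derivative of the closed form.
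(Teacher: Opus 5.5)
\textbf{The upper-bound step proves the wrong inequality.} Your reduction to the two extreme values of $n$ matches the paper. The problem comes after it. The inequality $(m+1)(1-\gamma_m)<1$ means $1-\gamma_m<\frac{1}{m+1}$, i.e.\ $\gamma_m>\frac{m}{m+1}$; it is not equivalent to $\gamma_m<\frac{m+1}{m+2}$. What you verified, $f_m\bigl(\frac{m+1}{m+2}\bigr)=\frac{m}{(m+2)^2}>0$, shows $1-\gamma_m>\frac{1}{m+2}$. That is a \emph{lower} bound on $1-\gamma_m$, and it says nothing about $(k-1)(1-\gamma_{k-2})<1$.

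The repair is one line with your own comparison principle:
\[(m+1)^2 f_m\Bigl(\frac{m}{m+1}\Bigr)=m^3-(m-2)m(m+1)-(m+1)^2=-1<0,\]
so $\gamma_m>\frac{m}{m+1}$. The paper gets the same fact from $1-\gamma_{k-2}=\frac{2}{k+\sqrt{(k-2)^2+4}}<\frac{1}{k-1}$.

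\textbf{The lower-bound plan is viable but unexecuted, and harder than it needs to be.} It can be completed: $c=\frac12$ is valid for all $m\ge 1$, and after clearing denominators the resulting polynomial inequality holds for every $k\ge5$, leaving $k=3,4$ for direct evaluation. But you leave this computation undone. The obstacle you describe is also self-inflicted. Rewriting the claim as $2(\gamma_{k-1}-\gamma_{k-2})\ge(1-\gamma_{k-1})^2/\gamma_{k-1}$ puts the two unknowns into a difference, which forces two-sided third-order estimates.

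The paper instead writes $\gamma_{k-1}=1-(1-\gamma_{k-1})$, so the claim becomes
\[2(1-\gamma_{k-2})+(k-2)(1-\gamma_{k-1})\ge 1 .\]
Here both unknowns enter with positive coefficients, so only the one-sided bound $1-\gamma_m\ge\frac1m-\frac1{m^2}$ is needed. This is exactly the lower half of your proposed estimate, since $f_m\bigl(\frac{m^2-m+1}{m^2}\bigr)=\frac{1}{m^3}>0$. It is also equivalent to the paper's $\sqrt{x^2+4}\le x+\frac{2}{x}$. Substituting it gives
\[\frac{2(k-3)}{(k-2)^2}-\frac{2k-3}{(k-1)^2}=\frac{k^2-6k+6}{(k-1)^2(k-2)^2},\]
which is nonnegative for $k\ge5$, with $k=3,4$ checked numerically. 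After multiplying by $2(k-1)(k-2)$, this is precisely the paper's lower bound on $N(k)$.

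\textbf{The fallback has the index on the wrong side.} The mean value theorem gives a point $\xi\in(k-2,k-1)$. A lower bound on the derivative there by $(1-\gamma_{\xi+1})^2/(2\gamma_{\xi+1})$ is weaker than the needed $(1-\gamma_{k-1})^2/(2\gamma_{k-1})$, because $(1-x)^2/x$ is decreasing and $\xi+1>k-1$. You would need $\gamma_m$ rather than $\gamma_{m+1}$ on the right-hand side.
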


\begin{corollary}
\label{col:itemsCombination}
    Lemma~\ref{lm:itemStructure} in combination with Lemma~\ref{lm:largeItemsBound} show that whenever there are $k-1$ non-very-small-items such that at least two of them are non-small, then their sum is a feasible solution of at least $\gamma_{k-1}$ approximation ratio.
\end{corollary}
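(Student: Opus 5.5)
The plan is to split the $k-1$ items into their non-small and non-very-small-but-small parts, bound the total value from below by Lemma~\ref{lm:itemStructure}, and bound it from above by Lemma~\ref{lm:largeItemsBound}. Throughout, $k=k_{curr}\geq 3$ denotes the current phase parameter, as in Observation~\ref{obs:noDeletionofNonSmall}.

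Let $X$ be a set of $k-1$ non-very-small items, and suppose at least two of them are non-small. Recall that non-small means size strictly greater than $1-\gamma_{k-2}$, and non-very-small means size strictly greater than $1-\gamma_{k-1}$. Since $\gamma$ is increasing, $1-\gamma_{k-2} > 1-\gamma_{k-1}$, so every non-small item is also non-very-small. This is the inclusion $B_i\subseteq C_i$.

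\textbf{Step 1: partition $X$.} Let $n$ be the number of items of $X$ that are non-very-small but not non-small. The remaining $k-1-n$ items are non-small. The hypothesis of at least two non-small items gives $k-1-n\geq 2$, i.e.\ $0\le n\le k-3<k-2$. This is exactly the range of $n$ allowed in Lemma~\ref{lm:itemStructure}.

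\textbf{Step 2: lower bound.} Summing the strict lower bounds on the item sizes gives
\[
v(X) > (k-n-1)(1-\gamma_{k-2}) + n(1-\gamma_{k-1}) \geq \gamma_{k-1},
\]
where the last inequality is the right-hand part of Lemma~\ref{lm:itemStructure}. So $v(X)$ meets the target ratio $\gamma_{k-1}=\gamma_{k''}$.

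\textbf{Step 3: feasibility.} Every item seen so far has size at most $\frac{2}{k-1+\sqrt{(k-1)^2+4}}$, by the definition of $k_{curr}$ in Algorithm~\ref{algo_adaptive_recourse}. Lemma~\ref{lm:largeItemsBound} states that $(k-1)\cdot\frac{2}{k-1+\sqrt{(k-1)^2+4}}\le 1$. Hence any $k-1$ items seen so far fit together, so $v(X)\le 1$.

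Combining Steps 2 and 3, $1\ge v(X)\ge\gamma_{k-1}$, so $X$ is a feasible solution achieving approximation at least $\gamma_{k-1}$.

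The only delicate point is the index bookkeeping. One must check that the hypothesis of ``at least two non-small items'' really translates into $n<k-2$, and that $k\ge3$ holds, so that Lemma~\ref{lm:itemStructure} applies. The left-hand inequality of Lemma~\ref{lm:itemStructure}, namely $1>\cdots$, is not needed for feasibility, since the upper bound comes from the actual item sizes through Lemma~\ref{lm:largeItemsBound}.
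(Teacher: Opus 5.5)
Your proposal is correct and follows the argument the paper has in mind (the paper states the corollary without a separate proof). The lower bound $v(X)>\gamma_{k-1}$ comes from Lemma~\ref{lm:itemStructure} with $n\le k-3$ counting the items that are non-very-small but not non-small, feasibility comes from Lemma~\ref{lm:largeItemsBound} applied to the bound on $MaxItem$ built into $k_{curr}$, and you are right that the ``$1>$'' half of Lemma~\ref{lm:itemStructure} cannot supply feasibility, since actual item sizes may exceed the thresholds.
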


{\begin{observation}\label{obs:nonEmpty}
If Algorithm~\ref{algo_adaptive_recourse} deletes an item while $k_{\mathrm{curr}}=k\ge3$ and has not reached
its target, its solution after that iteration has value greater than
$\gamma_{k-2}$. If $k_{\mathrm{curr}}$ subsequently decreases, the new target is
already satisfied.
\end{observation}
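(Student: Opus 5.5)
We have to prove Observation~\ref{obs:nonEmpty}. The plan is to go through the deletion mechanisms of Algorithm~\ref{algo_adaptive_recourse} that can fire while $k_{\mathrm{curr}}=k\ge 3$ and the target $\gamma_{k''}=\gamma_{k-1}$ has not been reached. Deletions occur only after an overflow, so when a deletion happens we are in the branch $v(S)>1$ of line~\ref{adRecline1030}. In that branch, the subset test at line~\ref{adRecline291} has failed, since otherwise the target is reached and the algorithm breaks. Two deletion lines remain: line~\ref{adRecline1044} (removal of the largest item) and the while loop at line~\ref{adRecline1050} (removal of small items). For each, we will bound the value that remains after the iteration from below.

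\textbf{Case 1: the while loop at line~\ref{adRecline1050}.} We use the argument already given in the proof of Theorem~\ref{thm:adaptiveMain}. Just before the last removal, $v(S)>1$ holds. The removed element lies in $S\setminus B$, so its size is at most $1-\gamma_{k'}=1-\gamma_{k-2}$. After the removal the value is therefore greater than $1-(1-\gamma_{k-2})=\gamma_{k-2}$. This gives the claim directly.

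\textbf{Case 2: line~\ref{adRecline1044}.} Its condition is $v(B)>1$ or $B=\varnothing$.
\begin{itemize}
\item[(a)] By Observation~\ref{obs:noDeletionofNonSmall}, which relies on Lemmas~\ref{lm:sufficientNonSmallItems} and~\ref{lm:largeItemsBound} through Claim~\ref{cl:noBigItemsDeleted}, no non-small item is deleted before the target is reached. So $v(B)>1$ cannot occur without the subset test succeeding first: as soon as $B$ accumulates $k-1$ items, a feasible subset of value at least $\gamma_{k-1}$ exists.
\item[(b)] Hence the only remaining possibility is $B=\varnothing$. Then every item of $S$ has size at most $1-\gamma_{k-2}$ and $v(S)>1$. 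Removing any single item, in particular the largest one, leaves value greater than $\gamma_{k-2}$, exactly as in Case~1.
\end{itemize}
I expect this case to be the main obstacle. The difficulty is making rigorous that the condition $v(B)>1$ is unreachable under the assumption that the target has not been met. This requires checking that the counting from Claim~\ref{cl:noBigItemsDeleted} still applies with target $\gamma_{k-1}$ rather than $\gamma_{k-2}$, which is precisely the slack noted after Observation~\ref{obs:noDeletionofNonSmall}.

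\textbf{Rejections and the second sentence.} A rejection at line~\ref{adRecline1025} happens only when $v(S)\ge\gamma_{k''}$, which means the target has already been reached, so it falls outside the hypothesis. For the second sentence, $k_{\mathrm{curr}}$ only decreases as $MaxItem$ grows. If it decreases to some $\bar k<k$, the new target is $\gamma_{\max\{\bar k-1,1\}}\le\gamma_{k-2}$, using the monotonicity of $\gamma$. The subsequent operations do not reduce the held value below what was just established: either the new item fits, or the subset test at line~\ref{adRecline291} is applied to a set whose value already exceeds $\gamma_{k-2}\ge$ the new target. Thus the target is already satisfied, the rejection test at line~\ref{adRecline1024} fires from then on, and the solution is never changed again.
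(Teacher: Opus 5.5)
Your proof is correct and follows the paper's argument. Every item deleted before the target is reached has size at most $1-\gamma_{k-2}$: your Case~2(a) is precisely the point recorded after Observation~\ref{obs:noDeletionofNonSmall} that line~18 only fires with $B=\varnothing$, since $v(B)>1$ forces $k-1$ items of $B$ to pass the test at line~15. Hence the last deletion from a set of value above $1$ leaves more than $\gamma_{k-2}$, and monotonicity of $\gamma$ makes the test at line~9 fire once $k_{\mathrm{curr}}$ drops.

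The only step to tidy is persistence between the deletion and the drop of $k_{\mathrm{curr}}$. In that interval the test at line~15 still targets $\gamma_{k-1}$, not the new target, so the bound is not preserved because that test succeeds. It is preserved because your Cases~1 and~2 apply to every later deletion in the same phase, and the recourse branch at line~20 reaches the target outright.
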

\begin{proof}
By Observation~\ref{obs:noDeletionofNonSmall}, all deleted items have size at most
$1-\gamma_{k-2}$. Immediately before the last deletion that restores feasibility,
the solution has value greater than $1$. Therefore, afterwards its value is
greater than $\gamma_{k-2}$. Further iterations with the same parameter preserve
this bound. If the parameter decreases to $j\le k-1$, the new target is
$\gamma_{\max\{j-1,1\}}\le\gamma_{k-2}$, so line~9 applies before processing the
new item.
\end{proof}}

The following lemma completes the picture of the possible structures of non-small-items that could provide the approximation ratio. Its proof is provided in Appendix~\ref{ap:adaptRec}:

\begin{lemma}
    \label{lm:nonVerSmallBound}
    For any $k\geq 2$,
    \[
    1>k(1-\gamma_{k-1})\geq \gamma_{k-1}
    \]
\end{lemma}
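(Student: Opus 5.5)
The plan is to reduce both inequalities to explicit bounds on $\gamma_{k-1}$ and then verify them by squaring. Set $j=k-1\ge 1$, so the claim becomes
\[
1>(j+1)(1-\gamma_j)\ge \gamma_j .
\]
Rearranging linearly, the right inequality is equivalent to $\gamma_j\le \frac{j+1}{j+2}$. The left inequality is equivalent to $1-\gamma_j<\frac{1}{j+1}$, that is, $\gamma_j>\frac{j}{j+1}$. So the lemma amounts to the two-sided estimate
\[
\frac{j}{j+1}<\gamma_j\le\frac{j+1}{j+2}.
\]

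Next I substitute $\gamma_j=\frac{j-2+\sqrt{j^2+4}}{2j}$ from Theorem~\ref{thm11}, multiply by $2j>0$, and isolate the square root.

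For the lower bound, the condition becomes
\[
\sqrt{j^2+4}>\frac{2j^2-(j-2)(j+1)}{j+1}=\frac{j^2+j+2}{j+1}.
\]
Both sides are positive, so I may square. Expanding gives $(j^2+4)(j+1)^2=j^4+2j^3+5j^2+8j+4$ on the left and $(j^2+j+2)^2=j^4+2j^3+5j^2+4j+4$ on the right. Their difference is $4j>0$, so the strict inequality holds.

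For the upper bound, the condition becomes
\[
\sqrt{j^2+4}\le\frac{2j(j+1)-(j-2)(j+2)}{j+2}=\frac{j^2+2j+4}{j+2}.
\]
The right-hand side is again positive, so squaring is legitimate. We get $(j^2+4)(j+2)^2=j^4+4j^3+8j^2+16j+16$ against $(j^2+2j+4)^2=j^4+4j^3+12j^2+16j+16$. The difference is $4j^2\ge 0$, so the inequality holds.

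Alternatively, one could use the quadratic relation $j\gamma_j^2-(j-2)\gamma_j-1=0$, which follows by squaring the definition of $\gamma_j$. Evaluating the quadratic $q(x)=jx^2-(j-2)x-1$ at $x=\frac{j}{j+1}$ gives a negative value, and at $x=\frac{j+1}{j+2}$ a nonnegative value. Since $\gamma_j$ is the positive root of this upward-opening quadratic, it lies between the two points, which gives the same bounds. I expect no real obstacle. The only points needing care are checking positivity before each squaring and the algebraic rearrangement into the bounds on $\gamma_j$; the polynomial expansions are routine.
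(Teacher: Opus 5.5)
Your proof is correct and follows essentially the same route as the paper: set $n=k-1$, substitute the closed form of $\gamma_n$, isolate $\sqrt{n^2+4}$ and square, and your first inequality reduces exactly to the paper's $n+\frac{2}{n+1}<\sqrt{n^2+4}$. The only difference is in the second inequality, where you square $\sqrt{j^2+4}\le \frac{j^2+2j+4}{j+2}$ directly and obtain the difference $4j^2\ge 0$, whereas the paper uses the estimate $\sqrt{n^2+4}\le n+\frac{2}{n}$, which only works for $n\ge 2$, and so needs a separate check at $n=1$; your argument is uniform in $j\ge 1$.
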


That is, if $k$ non-very-small items fit in the knapsack, then their sum gives the required approximation. 

{The proof of the main theorem distinguishes the case where $B$ is empty
from the case where there is exactly one non-small item. In the former case,
Algorithm~\ref{algo_adaptive_recourse} removes the largest item, as Algorithm~\ref{algo7} would do. In the latter
case, we compare the values of the two solutions and show that the next
non-very-small item either gives a target solution directly or allows the use
of one recourse. By Observation~\ref{obs:nonEmpty}, \pjreplace{a change in $k_{\mathrm{curr}}$ before termination can occur only before any deletion} {any decrease in $k_{\mathrm{curr}}$ after a deletion immediately satisfies the new target}.}

We are now ready to conclude with the formal proof of Theorem~\ref{Thm:Adaptive_recourse}:

\begin{proof}[Of Theorem~\ref{Thm:Adaptive_recourse}]
Since $\gamma_k$ increases with $k$, it suffices to prove the guarantee for
$k=k^*$ of Observation~\ref{obs:optimal_k}. We first consider a phase during which
$k_{\mathrm{curr}}=k\ge3$. Write
\[
\beta=\gamma_{k-1},\qquad a=1-\gamma_{k-2},\qquad c=1-\beta.
\]
Thus $a>c$, $B=\{o\in S:|o|>a\}$ and $C=\{o\in S:|o|>c\}$.
Algorithm~\ref{algo7} run with parameter $k-1$ guarantees approximation $\beta$ for the
item-size upper bound of this phase, by Theorem~\ref{thm11} and the
same argument used in the proof of Theorem~\ref{thm:adaptiveMain}.

Suppose that an item causes $v(S)>1$, but the test in line~15 finds no subset
with value in $[\beta,1]$. Then $v(C)>1$. Otherwise, deleting items outside $C$,
each of size at most $c=1-\beta$, until feasibility is restored would leave a
subset of value greater than $\beta$, a contradiction. Just before the new item
arrived, the solution contained at most $k-1$ items in $C$: a feasible set of
$k$ such items would already reach the target by Lemma~\ref{lm:nonVerSmallBound}.
By Lemma~\ref{lm:largeItemsBound}, any $k-1$ items fit. Hence the current set $C$ has
exactly $k$ items. By Corollary~\ref{col:itemsCombination}, at most one of them belongs to
$B$, since otherwise a subset of $k-1$ items would reach the target.

If $B=\varnothing$, Algorithm~\ref{algo_adaptive_recourse} removes the largest item. This is exactly the
choice of Algorithm~\ref{algo7} when $v(C)>1$. It restores feasibility, since the removed
item is at least as large as the arriving item. Until the first deletion with
$B\ne\varnothing$, the two algorithms therefore have the same solution, unless
Algorithm~\ref{algo_adaptive_recourse} has already reached its target.

Consider that first deletion with $B\ne\varnothing$, and let $z>a$ be the size
of its unique non-small item. Let $d_0$ be the size of the smallest other item
in $C$, let $R$ be the total size of the remaining $k-2$ items in $C$, and let
$w$ be the total size of the items outside $C$. Algorithm~\ref{algo_adaptive_recourse} deletes all items
outside $C$ and then the item of size $d_0$, leaving value $z+R$; Algorithm~\ref{algo7}
removes the item of size $z$, leaving value $d_0+R+w$.
The former solution is feasible by Lemma~\ref{lm:largeItemsBound}. The latter is
feasible because $z$ is the largest item and the solution was feasible before
the arrival. Neither has value at least $\beta$, since both are subsets tested
in line~15.

In fact, the failed test also implies $z+R+w<\beta$. If this were false,
starting with the feasible set of value $z+R<\beta$ and adding items outside
$C$ until reaching $\beta$ would produce a subset of value at most
$\beta+c=1$. Since $d_0$ is the smallest of the $k-1$ items in $C\setminus B$,
we have $R-d_0\ge(k-3)c$. By Lemma~\ref{lm:itemStructure},
\[
2a+(k-3)c\ge\beta.
\]
Consequently,
\[
w<\beta-z-R\le z-d_0.
\]
Thus $z+R>d_0+R+w$: Algorithm~\ref{algo_adaptive_recourse} retains at least as much value as Algorithm~\ref{algo7}.
If the input ends now, its competitive ratio is at least $\beta$. Until another
item in $C$ arrives, items of size at most $c$ fit in both solutions while their
values are below $\beta$, and preserve this comparison.

Consider the next item of size greater than $c$. If a target subset exists,
Algorithm~\ref{algo_adaptive_recourse} finds it. Otherwise the preceding argument again gives exactly
$k$ items in $C$, with the same unique non-small item of size $z$. A second
non-small item would yield a target subset by Corollary~\ref{col:itemsCombination}.
Let $d_1$ be the smallest size among the other $k-1$ items in $C$, and let $R'$
be the sum of the remaining $k-2$ sizes. The set of value $z+R'$ is feasible
and has value less than $\beta$, so
\[
R'<\beta-z,\qquad d_1\le\frac{R'}{k-2}<\frac{\beta-z}{k-2}.
\]
The previous deletion similarly gives
$d_0\le R/(k-2)<(\beta-z)/(k-2)$. By restoring the item of size $d_0$ and
removing the non-small item, we obtain $k$ items in $C$ with total size
\begin{equation}\label{eq:recourse-corrected-sum}
R'+d_1+d_0<\frac{k(\beta-z)}{k-2}
<\frac{k(\beta-a)}{k-2}\le1.
\end{equation}
To verify the last inequality, Lemma~\ref{lm:itemStructure} gives
$2a+(k-2)c\ge1$. For $k\ge4$, using $a>c$, we obtain
\[
k(a+c)=2\bigl(2a+(k-2)c\bigr)+(k-4)(a-c)\ge2.
\]
Since $\beta-a=1-(a+c)$, the inequality follows. For $k=3$, it follows directly
from
\[
a+c=\frac{5-\sqrt5-\sqrt2}{2}>\frac23.
\]
By Lemma~\ref{lm:nonVerSmallBound}, the feasible set in~\eqref{eq:recourse-corrected-sum}
has value greater than $kc\ge\beta$. It is available in the test of line~20
by using the previously deleted item of size $d_0$. The algorithm therefore
uses one recourse and reaches its target. No further recourse is used.

It remains to consider changes in $k_{\mathrm{curr}}$. By
Observation~\ref{obs:nonEmpty}, a decrease after any deletion immediately
satisfies the new target. Every change before that has retained all items seen
so far, so the preceding comparison with Algorithm~\ref{algo7} begins from the same
solution. If $k_{\mathrm{curr}}\in\{1,2\}$, then $k'=k''=1$. After a failed target
subset test, $v(B)>1$ necessarily holds, by the same small-item deletion
argument. The algorithm then behaves as Algorithm~\ref{algo7} with parameter $1$ and
guarantees $\gamma_1$.

Finally, a solution that reaches its current target also meets every later
target, since $k_{\mathrm{curr}}$ can only decrease. If the input ends before
this happens, the comparison with Algorithm~\ref{algo7} in the final phase gives the
claimed ratio. This completes the proof.
\end{proof}

}

In Figure~\ref{figAlgosComp}, we show the performance of, {\adapt} and {\adaptRec}, compared to the upper-bound aware algorithm of \cite{GourvesP24}.

\begin{figure}[H]
\center
\begin{tikzpicture}[scale=6]

\draw[->] (-0.03,0.5)--(1.05,0.5);
\draw[->] (0,0.47)--(0,1.05);


\draw  (1,0.47)--(1,0.53);
\draw  (0.03,1)--(-0.03,1);
\draw (-0.03,0.618033989)--(0.03,0.618033989);
\draw (0.618033989,0.53)--(0.618033989,0.47);
\draw (0.5,0.53)--(0.5,0.47);



\node () at (1.05,0.45) {$u$};
\node [rotate=90] () at (-0.05,1.05) {{\cra}};
\node () at (-0.1,1) {$1$};
\node () at (0.618033989,0.4) {$\phi^{-1}$};
\node () at (1,0.4) {$1$};
\node () at (-0.18,0.618033989) {$\phi^{-1}\approx 0.618$};
\node () at (0.5,0.4) {$0.5$};
\node () at (-0.1,0.5) {$0.5$};

\draw plot [domain=0:0.076118446,variable=\t] (\t,{1/(1+\t)});

\draw plot [domain=0.585786438:0.618033989,variable=\t] (\t,{(1-\t)/\t});
\draw plot [domain=0.394448725:0.414213562,variable=\t] (\t,{(1-\t)/(2*\t)});
\draw plot [domain=0.291796068:0.302775638,variable=\t] (\t,{(1-\t)/(3*\t)});
\draw plot [domain=0.229670386:0.236067977,variable=\t] (\t,{(1-\t)/(4*\t)});
\draw plot [domain=0.192582404:0.188611699,variable=\t] (\t,{(1-\t)/(5*\t)});
\draw plot [domain=0.16227766:0.159670332,variable=\t] (\t,{(1-\t)/(6*\t)});
\draw plot [domain=0.140054945:0.138260621,variable=\t] (\t,{(1-\t)/(7*\t)});
\draw plot [domain=0.123105626:0.121822171,variable=\t] (\t,{(1-\t)/(8*\t)});
\draw plot [domain=0.108824378:0.109772229,variable=\t] (\t,{(1-\t)/(9*\t)});
\draw plot [domain=0.099019514:0.098300563,variable=\t] (\t,{(1-\t)/(10*\t)});
\draw plot [domain=0.090169944:0.089612167,variable=\t] (\t,{(1-\t)/(11*\t)});
\draw plot [domain=0.08276253:0.082321372,variable=\t] (\t,{(1-\t)/(12*\t)});
\draw plot [domain=0.076473219:0.076118446,variable=\t] (\t,{(1-\t)/(13*\t)});


\draw (0.618033989,0.618033989)--(1,0.618033989);
\draw (0.414213562,0.707106781)--(0.585786438,0.707106781);
\draw (0.394448725,0.767591879)--(0.302775638,0.767591879);
\draw (0.291796068,0.809016994)--(0.236067977,0.809016994);
\draw (0.229670386,0.838516481)--(0.192582404,0.838516481);
\draw (0.188611699,0.86037961)--(0.16227766,0.86037961);
\draw (0.188611699,0.86037961)--(0.16227766,0.86037961);

\draw (0.159670332,0.877150706)--(0.140054945,0.877150706);
\draw (0.138260621,0.890388203)--(0.123105626,0.890388203);
\draw (0.121822171,0.901085803)--(0.109772229,0.901085803);
\draw (0.108824378,0.909901951)--(0.099019514,0.909901951);
\draw (0.098300563,0.917288177)--(0.090169944,0.917288177);
\draw (0.089612167,0.923563544)--(0.08276253,0.923563544);
\draw (0.082321372,0.928959478)--(0.076473219,0.928959478);
\draw (0.076118446,0.933647701)--(0.071067812,0.933647701);


\draw[purple,densely dashed, domain=0:0.068,variable=\t] plot (\t,{1 - 1.15*\t)});
\draw[purple,densely dashed, line width=1.02] (0.302775638,0.618033989)--(1,0.618033989);
\draw[purple,densely dashed,] (0.302775638,0.707106781)--(0.229670386,0.707106781);
\draw[purple,densely dashed,] (0.188611699,0.767591879)--(0.229670386,0.767591879);
\draw[purple,densely dashed,] (0.159670332,0.809016994)--(0.188611699,0.809016994);
\draw[purple,densely dashed,] (0.138260621,0.838516481)--(0.159670332,0.838516481);
\draw[purple,densely dashed,] (0.121822171,0.86037961)--(0.138260621,0.86037961);
\draw[purple,densely dashed,] (0.108824378,0.877150706)--(0.121822171,0.877150706);
\draw[purple,densely dashed,] (0.098300563,0.890388203)--(0.108824378,0.890388203);
\draw[purple,densely dashed,] (0.089612167,0.901085803)--(0.098300563,0.901085803);
\draw[purple,densely dashed,] (0.0823213727,0.909901951)--(0.089612167,0.909901951);
\draw[purple,densely dashed,] (0.076118446,0.917288177)--(0.0823213727,0.917288177);

\draw[blue, dotted,line width=1.05pt, domain=0:0.068,variable=\t] plot (\t,{1 - 1.05*\t)});
\draw[blue, dotted, line width=1.3pt, ] (0.414213562,0.618033989)--(1,0.618033989);
\draw[blue, dotted,line width=1.1pt] (0.302775638,0.707106781)--(0.414213562,0.707106781);
\draw[blue, dotted,line width=1.05pt] (0.229670386,0.767591879)--(0.302775638,0.767591879);
\draw[blue, dotted,line width=1.05pt] (0.188611699,0.809016994)--(0.229670386,0.809016994);
\draw[blue, dotted,line width=1.05pt] (0.159670332,0.838516481)--(0.188611699,0.838516481);
\draw[blue, dotted,line width=1.05pt] (0.138260621,0.86037961)--(0.159670332,0.86037961);
\draw[blue, dotted,line width=1.05pt] (0.121822171,0.877150706)--(0.138260621,0.877150706);
\draw[blue, dotted,line width=1.05pt] (0.10882437,0.890388203)--(0.121822171,0.890388203);
\draw[blue, dotted,line width=1.02pt] (0.098300563,0.901085803)--(0.10882437,0.901085803);
\draw[blue, dotted,line width=1.02pt] (0.089612167,0.909901951)--(0.098300563,0.909901951);
\draw[blue, dotted,line width=1.02pt] (0.0823213727,0.917288177)--(0.089612167,0.917288177);

\end{tikzpicture}
\caption{The black continuous lines represent the \cra\ of \nonadapt , the red dashed the \cra\ of \adapt\ and the blue dotted the \cra\ of \adaptRec.\label{figAlgosComp}}
\end{figure}

\begin{credits}
\subsubsection{\ackname} 
This work has been partially supported by project MIS 5154714 of the National Recovery and Resilience Plan Greece 2.0 funded by the European Union under the NextGenerationEU Program. 
Panagiotis Patsilinakos has received support
under the program ``Investissements d’Avenir'' launched
by the French Government, project PSL JRC 2024 - N 2024-
488. This work received the support of ``Partenariat Hubert Curien France/Gr{\`e}ce -- ARISTOTE''. 
\end{credits}

%
%
%
 \bibliographystyle{splncs04}

\bibliography{references}

\pagebreak

\section{Appendix}

\subsection{Proof of Observation \ref{intermobs}}\label{app:obs1}

\noindent \textbf{Observation \ref{intermobs}} \emph{$v(S_j^\ell) \ge \frac{k+1}{k+2} v(Q_j)$  implies 
$v(S_j) \ge \frac{k+1}{k+2} v(OPT_{j})$.}
\begin{proof}
Let $OPT_i^s$ and $OPT_i^\ell$ be the restrictions of $OPT_i$ to small and large items, respectively. 
We have 
\begin{equation} \label{decomposition_gen}
v(OPT_j)=v(OPT_j^s)+v(OPT_j^\ell) \mbox{ and } v(S_j)=v(S_j^s)+v(S_j^\ell)
\end{equation}
for $j=1..i$. By the definition of $Q_i$, we know that 
\begin{equation} \label{ideal_opt_gen}
v(Q_j) \geq v(OPT_j^\ell) 
\end{equation}
holds for $j=1..i$. As previously mentioned, the algorithm puts every small item in the current solution until we possibly have $v(S_i) \ge \frac{k+1}{k+2}$, meaning that
\begin{equation} \label{sbtoptfsi_gen}
v(S_{j}^s) \ge v(OPT_{j}^s)
\end{equation} 
holds for $j=1..i-1$. 
Combining $(\ref{recinvariant_gen})$,  $(\ref{decomposition_gen})$,  $(\ref{ideal_opt_gen})$, and $(\ref{sbtoptfsi_gen})$ gives
$v(S_{j}) =v(S_{j}^s)+v(S_{j}^\ell) \ge v(OPT_{j}^s) + \frac{k+1}{k+2} \pjreplace{v(Q_{j}^\ell)}{v(Q_{j})}  \ge   v(OPT_{j}^s)+\frac{k+1}{k+2} v(OPT_{j}^\ell) \ge \frac{k+1}{k+2} v(OPT_{j}).$ \qed
\end{proof}

\pjdelete{
\subsection{Correctness of Algorithm \ref{algorec4}}\label{app:correct1}

In this section we show that Algorithm \ref{algorec4} always produces a solution of value at most~$1$, and that at most $\lfloor k/2\rfloor$ items are taken from the buffer.    

\begin{observation} \label{obs_correctness} Algorithm \ref{algorec4} is correct.  
\end{observation}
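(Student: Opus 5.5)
The plan is to check the two properties that together make up correctness of Algorithm \ref{algorec4}: every maintained solution has value at most $1$, and at most $\lfloor k/2\rfloor$ items are ever brought back from the buffer. I would go through the lines of the algorithm that modify $S$ (lines \ref{lineincl2_gen}, \ref{linetrim3_gen}, \ref{linetef2_gen}, \ref{line2396_gen}) one at a time, using the invariants established in the proof of Theorem \ref{thmk1k2}. The key invariant is that while $v(S)<\frac{k+1}{k+2}$, every small item seen so far lies in $S$ and every buffered item is large.

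For recourse, I would first observe that any modification producing a solution of value in $[\frac{k+1}{k+2},1]$ is terminal. After it, line \ref{lineraf2_gen} puts every later item in the buffer, so such a step can happen in at most one round. The only line that draws from the buffer in that round is line \ref{linetef2_gen}, and by the definition of ${\cal T}$ at line \ref{linecalT_gen} the chosen set $T$ contains at most $\lfloor k/2\rfloor$ buffered items. Lines \ref{lineincl2_gen} and \ref{linetrim3_gen} only add the arriving item or remove items, so they use no recourse.

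The real work is line \ref{line2396_gen}, where I must show that keeping the $\lceil k/2\rceil$ smallest and $\lfloor k/2\rfloor$ largest large items needs no recourse at all. From the proof of Theorem \ref{thmk1k2}, when this line is reached $S_{i-1}$ contains exactly $k$ large items, namely these extreme ones among $\{o_1,\dots,o_{i-1}\}$. Let $a_1,\dots,a_{\lceil k/2\rceil}$ be the smallest ones and $b_1,\dots,b_{\lfloor k/2\rfloor}$ the largest ones, sorted by size. Then the new extreme set is obtained from $S_{i-1}\cup\{o_i\}$ by discarding exactly one item:
\begin{itemize}
\item $a_{\lceil k/2\rceil}$ if $|o_i|<|a_{\lceil k/2\rceil}|$;
\item $b_1$ if $|o_i|>|b_1|$;
\item $o_i$ itself otherwise.
\end{itemize}
So only the current item and items already in $S$ are involved, and no item returns from the buffer. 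I would justify ``exactly $k$ large items'' inductively: fewer than $k$ would make the test of line \ref{lineL2_gen} succeed, since any $k$ items fit when $u\le 1/k$.

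For feasibility, lines \ref{lineincl2_gen}, \ref{linetrim3_gen} and \ref{linetef2_gen} are immediate from their tests or loop conditions. At line \ref{line2396_gen} I would bound the two parts of $S_i$ separately.
\begin{itemize}
\item The large part $S_i^\ell$ is itself a set of at most $k+1$ items with no buffered item, so it lies in ${\cal T}$. Since the test preceding line \ref{line2396_gen} failed, $v(S_i^\ell)<\frac{k+1}{k+2}$; if it were at most $1$ and at least $\frac{k+1}{k+2}$ it would have been chosen, and its $k$ items of size at most $1/k$ ensure $v(S_i^\ell)\le 1$.
\item The small part is unchanged, $S_i^s=S_{i-1}^s$. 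Its value is $v(S_{i-1})-v(S_{i-1}^\ell)<\frac{k+1}{k+2}-v(S_{i-1}^\ell)$. Because $v(S_{i-1}^\ell\cup\{o_i\})>1$ and $|o_i|\le 1/k$, we have $v(S_{i-1}^\ell)>1-\frac1k$, hence $v(S_i^s)<\frac1k-\frac1{k+2}$.
\end{itemize}
Adding the two bounds gives $v(S_i)<\frac{k+1}{k+2}+\frac1k-\frac1{k+2}=1$ for $k\ge2$.

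The main obstacle I expect is this last line: pinning down the exact structure of $S_{i-1}^\ell$ so that the single swap and the value bound on $S_i^\ell$ are both rigorous.
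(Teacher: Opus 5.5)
Your proposal is correct and follows the paper's proof essentially step for step. It confines recourse to the single terminal round and bounds it by the definition of ${\cal T}$, uses the three-case swap to show that line~\ref{line2396_gen} involves only $S_{i-1}$ and $o_i$, and proves feasibility at that line via $v(S_i^\ell)<\frac{k+1}{k+2}$ and $v(S_i^s)<\frac1k-\frac1{k+2}$; your remark that $v(S_i^\ell)\le 1$ (so that the failed test forces the strict bound) makes explicit a detail the paper leaves implicit. One arithmetic slip: $\frac{k+1}{k+2}+\frac1k-\frac1{k+2}=\frac{k}{k+2}+\frac1k$, which is at most $1$ for $k\ge 2$ (with equality only at $k=2$) rather than identically $1$, so the conclusion $v(S_i)<1$ is unaffected.
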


\begin{proof}


Let us first consider how many times the recourse is used. By construction, recourse is made for possibly multiple items but during a single round, and for each recourse a solution whose value is between $\frac{k+1}{k+2}$ and $1$ is produced.  

At line \ref{linecalT_gen}, at most $\lfloor k/2\rfloor$ items are taken from the buffer. 
No recourse is necessary for building $S_i$ from $S_{i-1}$ and $o_i$ at line \ref{line2396_gen}. Indeed, we have seen in the proof of Theorem \ref{thmk1k2} that $S_{i-1}$ contains exactly $k$ large items. More precisely,  
$S_{i-1}$ contains the 
$\lceil \frac{k}{2}\rceil$ smallest large items, say $\{a_1, \ldots,a_{\lceil \frac{k}{2}\rceil}\}$, and the 
$\lfloor \frac{k}{2}\rfloor$ largest large items, say $\{b_1, \ldots,b_{\lfloor \frac{k}{2}\rfloor}\}$, of $\{o_1, \ldots,o_{i-1}\}$, 
where $|a_1| \le \ldots \le |a_{\lceil \frac{k}{2}\rceil}| \le |b_1| \le \ldots \le |b_{\lfloor \frac{k}{2}\rfloor}|$. If $|o_i| < |a_{\lceil \frac{k}{2}\rceil}|$, then $S_{i}=S_{i-1} \Delta \{o_i,a_{\lceil \frac{k}{2}\rceil}\}$. If $|o_i| > |b_{1}|$, then $S_{i}=S_{i-1} \Delta \{o_i,b_1\}$. If $|a_{\lceil \frac{k}{2}\rceil}| \le |o_i| \le |b_{1}|$, then $S_{i}=S_{i-1}$. 

It remains to prove that the algorithm maintains a solution whose value does not exceed 1. It is immediate at lines \ref{lineincl2_gen}, \ref{linetrim3_gen} and \ref{linetef2_gen}. Let us show that $v(S_i) \le 1$ holds at line \ref{line2396_gen}. We have $v(S_i^\ell) < \frac{k+1}{k+2}$ because $S_i^\ell \in {\cal T}$. We also know that $v(S_i^s)\le \frac{1}{k}-\frac{1}{k+2}$ because $v(S_{i}^s)=v(S_{i-1}^s) = v(S_{i-1})-v(S_{i-1}^\ell) < \frac{k+1}{k+2} - v(S_{i-1}^\ell)$, and  
$v(S_{i-1}^\ell)>1-|o_i|\ge 1-1/k$.  Thus, it holds that $v(S_i) = v(S_i^s) + v(S_i^\ell) \le \frac{k+1}{k+2}+\frac{1}{k}-\frac{1}{k+2} \le 1$ when $k \ge 2$. \qed
\end{proof}
}

\pjdelete{
\subsection{Complementary Upper Bounds}\label{app:upper}

\noindent {\textbf{Lemma \ref{lemmaub1rec3}.} \it
 For every integer $k \ge 2$, there exists an upper bound of $\frac{2k}{2k+1}$ on the {\cra}  of deterministic algorithms with at most $k-1$ recourse operations when $\frac{2}{2k+1} < u \le 1$.}
\begin{proof} 

Take any online algorithm and suppose that an adversary discloses the following instance:  $k$ \emph{large} items of size $\frac{2}{2k+1}+\varepsilon$ followed by  $2k+1$ \emph{small} items of size $\frac{1}{2k+1}$. We suppose that $ \frac{1}{2k+1} \gg \varepsilon >0$. The size of the largest item is $\frac{2}{2k+1}+\varepsilon$.
The instance can be interrupted at any moment by the adversary when the solution maintained by the algorithm is (asymptotically) $(1-\frac{1}{2k+1})$-approximate, or worse. 

The first $k$ items (which are identical) fit into the knapsack. If one of these items is not kept by the algorithm, then the competitive ratio is at most $1-\frac{1}{k}$ when $\varepsilon$ tends to 0. From now on we suppose that the $k$ first items of the instance belong to the solution. 

When the $(k+1)$-th item of the instance is disclosed, it does not fit into the solution. However, the value of an optimal solution \pjreplace{($k*(\frac{2}{2k+1}+\varepsilon$)}{$k(\frac{2}{2k+1}+\varepsilon)$} does not change. The optimal value changes only if the very last item is disclosed (i.e., it becomes 1 with $2k+1$ small items).

In between, the algorithm can 
$(1)$ replace one large item with one small item, or
$(2)$ buffer a small item,  or $(3)$  replace one large item with two small items. 

Note that the first action decreases the value of the current solution by $\frac{1}{2k+1}+\varepsilon$, giving a competitive ratio of $1-\frac{1}{2k}$ when $\varepsilon$ tends to zero. Therefore, the adversary can stop the instance if the first action is done. The second and third actions do not significantly change the value of the current solution (no decrease or a decrease by $\varepsilon$). However, the third action can be performed at most  $k-1$ times because each of them requires a recourse. 

Since the instance comprises $k$ large items, the algorithm is forced, either to do the first action, or to reach the very end of the instance where the value of the optimum becomes 1 and the algorithm's solution has value at most $k(\frac{2}{2k+1}+\varepsilon)= 1-\frac{1}{2k+1} +k\varepsilon$. In any case, 
the competitive ratio is at best $1-\frac{1}{2k+1}$ when $\varepsilon$ goes to zero. \qed
\end{proof}

\noindent \textbf{Lemma  \ref{lemmaub1rec1}.} \emph{The competitive ratio of any algorithm with at most one recourse operation is upper-bounded by $\sqrt{\frac{9-\sqrt{17}}{8}} \approx 0.78$ when $u\ge \frac{9-\sqrt{17}}{8} \approx 0.61$.}
\begin{proof} Let $s=\frac{9-\sqrt{17}}{8} \approx 0.61$.  
The first items $o_1$ and $o_2$ have size $1-s+\varepsilon$ and $s$, respectively.\footnote{We suppose that $\varepsilon>0$ and $s \gg \varepsilon$ hold.} If  $o_2$ does not replace $o_1$, then the competitive ratio is  $\frac{1-s}{s}$ when $\varepsilon$ goes to $0$. From now on, suppose that $\{o_2\}$ is the current solution and $o_1$ is in the buffer. 
The third item $o_3$ has size $1-s+\varepsilon$.  
If the solution remains $\{o_2\}$, then we get a ratio of $\frac{s}{2(1-s+\varepsilon)}$ which tends to $\frac{s}{2(1-s)}=\sqrt{s}$ when $\varepsilon$ goes to $0$. Otherwise, one can use the recourse to produce the solution $\{o_1,o_3\}$. Then, a fourth item $o_4$ of size $1-s$ is disclosed. The optimum $\{o_2,o_4\}$ has size one, and the size of the current solution is at most $2(1-s+\varepsilon)$, so the competitive ratio \pjreplace{is $2(1-s)$ which tends to $\sqrt{s}$}{tends to $2(1-s)=\sqrt{s}$} when $\varepsilon$ goes to $0$.   
Given that $s>1-s+\varepsilon$ and $\frac{1-s}{s} < \sqrt{s}$, we get an upper bound of $\sqrt{s} \approx 0.78$ when $u \ge s$. \qed
\end{proof}
}

\subsection{Useful Details from \cite{GourvesP24}}
\label{ap:DetailsFromGP}

In this section we list further useful details from \cite{GourvesP24}. The competitive ratios are achieved by \nonadapt\, introduced in the same work.


\setcounter{algorithm}{4}
\begin{algorithm}
\caption{
\label{algo7}}
\begin{algorithmic}[1] 
\State $S \gets \emptyset$
\While{a new item $o_i$ arrives}
    \If{ \label{line1024} $v(S) \ge \gamma_k$}
        \State \label{line1025}  Reject $o_i$ \Comment{{\em $S$ is not changed afterwards}}
    \Else
\State \label{line1027} $S \gets S \cup \{o_i\}$

\If{\label{line1030} $v(S) > 1$}


\State \label{line1123} Let $B=\{o \in S : |o| > 1-\gamma_k \}$ 

\If{  $v(B) > 1$}

\If{  $B$ contains a subset $\hat S$ 
such that $1 \ge v(\hat S) \ge \gamma_k$}


\State \label{line291} $S \gets \hat S$ \Comment{{\em $S$ is not changed afterwards}} 

\Else

\State \label{line1044} Remove the largest item of $S$ 

\EndIf 

\Else

\While{$v(S)>1$ \label{line1050}}
\State \label{line1051} Remove from $S$ one element of $S \setminus B$ \Comment{{\em chosen arbitrarily}}
\EndWhile

\EndIf

\EndIf 
\EndIf 

\EndWhile

\end{algorithmic}
\end{algorithm}

%
%
%
%
%
%
%
%
%
%
%
%
%
%
%
%
%
%
%
%
%
%

The competitive 
ratios given in \cite{GourvesP24} (and the equivalent upper bounds), $\rho(u)$ of deterministic online algorithms for the unweighted knapsack problem with removable items are depicted in Figure \ref{fig_old}. The lower (solid)  and upper (dotted) bounds on $\rho(u)$ are non-increasing functions of $u$. One can observe that the competitive ratio is always above $t$\pjedit{, where $t=\phi^{-1}$}, and the performance guarantee increases (and tends to 1) as the size of the items reduces.

Along with Theorem~\ref{thm11} mentioned in Section~\ref{sec:adaptive}, the following theorem complements the result and corresponds to the ``sloped'' segments of the \cra\ figure,  providing an (almost tight) approximation guarantee for all possible values of $u$.

\begin{theorem} \label{thm111} For all integers $k \ge 2$, 
Algorithm \ref{algo7} is  $\frac{1-u}{(k-1)u}$-competitive when $u \in (\frac{1-\gamma_k}{\gamma_k^2}, \frac{1}{k-1}]$.
\end{theorem}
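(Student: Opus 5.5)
We need to prove Theorem \ref{thm111}: for $k\ge 2$, Algorithm \ref{algo7} (run with parameter $k$) is $\frac{1-u}{(k-1)u}$-competitive when $u\in(\frac{1-\gamma_k}{\gamma_k^2},\frac{1}{k-1}]$.

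The plan is to follow the structure of the analysis behind Theorem \ref{thm11}, with the target $\gamma_k$ replaced by $\rho:=\frac{1-u}{(k-1)u}$ wherever the large-item regime matters. First observe that $\rho\le\gamma_k$ on this interval. At the left endpoint $u=\frac{1-\gamma_k}{\gamma_k^2}$ the two quantities should coincide, using the defining quadratic of $\gamma_k$, namely $k\gamma_k^2-(k-2)\gamma_k-2=0$. Moreover $\rho$ is decreasing in $u$. Hence any termination at lines \ref{line1025} or \ref{line291} yields value at least $\gamma_k\ge\rho$, which is enough because $v(OPT)\le 1$. Likewise, a feasibility restoration by the loop at line \ref{line1051} removes items of size at most $1-\gamma_k$ from a set of value $>1$, so it leaves value $>\gamma_k$. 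It therefore suffices to analyse runs in which the only removals before termination are those at line \ref{line1044} (the largest item of $S$ is discarded because $v(B)>1$ and no subset of $B$ lies in $[\gamma_k,1]$), and to show the invariant $v(S_i)\ge\rho\, v(OPT_i)$ at every step.

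The key structural step is to characterise $B$ when line \ref{line1044} fires. All elements of $B$ have size in $(1-\gamma_k,u]$. Since $u\le\frac1{k-1}$, any $k-1$ of them fit. Using $u>\frac{1-\gamma_k}{\gamma_k^2}$ and the relation between $k$ and $\gamma_k$, I expect to show that any $k$ of them sum to at least $\gamma_k$, possibly via $k(1-\gamma_k)\ge\gamma_k$ or a similar inequality from \cite{GourvesP24}. So if some $k$ of them fit, line \ref{line291} applies. Thus at line \ref{line1044} we have $|B|=k$, every $k$-subset of non-small items seen so far exceeds $1$, and the algorithm keeps the $k-1$ smallest non-small items. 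Because small items are never discarded before termination, the small-item contribution of $S$ dominates that of $OPT$, exactly as in Observation \ref{intermobs}. It remains to compare the non-small parts: $OPT$ holds at most $k-1$ non-small items, of total value at most $(k-1)u$, while we hold $k-1$ non-small items of total value $>1-u$, since adding any further non-small item of size $\le u$ overflows. This gives a ratio $\ge\frac{1-u}{(k-1)u}$. Combining the two parts with $\rho\le 1$ then gives the invariant.

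The main obstacle is making this mixed comparison rigorous. The bound $v(S^\ell)>1-u$ only holds with respect to the remaining $k$-th non-small item, and small items in $S$ together with those in $OPT$ interact with the capacity. I would first try to argue directly that $v(S)>1-u$: after the removal, $S\cup\{o\}$ overflowed for the removed item $o$ with $|o|\le u$. I would also argue that $v(OPT)\le \min\{1,(k-1)u+v(\text{small})\}$, and split into two cases according to whether $OPT$ is capacity-bound or non-small-bound. A second delicate point is checking that the inequality $\rho\le\gamma_k$ holds exactly at the interval's left endpoint. This should be a direct algebraic verification from the closed form $\gamma_k=\frac{k-2+\sqrt{k^2+4}}{2k}$.
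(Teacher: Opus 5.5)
The steps before your structural characterisation are fine: the reduction to runs whose only removals happen at line~\ref{line1044}, and the check $\rho\le\gamma_k$ (equality at the left endpoint is exactly \eqref{gammakeq5}). The gap is the \emph{key structural step}.

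\textbf{The inequality you rely on is false.} The bound you hope to use, $k(1-\gamma_k)\ge\gamma_k$, fails for every $k$. Adding $1-\gamma_k$ to both sides of \eqref{gammakeq} gives $k(1-\gamma_k)=1-\frac{1-\gamma_k}{\gamma_k}<\gamma_k$. The only lower bound on a non-small item is $1-\gamma_k$; the hypothesis $u>\frac{1-\gamma_k}{\gamma_k^2}$ bounds nothing from below. So $k$ non-small items can sum to less than $\gamma_k$. Consequently $S$ can hold $k$ non-small items before termination, line~\ref{line1044} can fire with $|B|=k+1$, and the optimum can use $k$ non-small items. The bound $v(OPT)\le\sigma+(k-1)u$, on which your final comparison rests, is therefore false.

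\textbf{Counterexample.} Take $k=2$, $u=0.6\in(2-\sqrt2,1]$, and items $0.3,0.3,0.405,0.405$. Each $0.405$ triggers line~\ref{line1044} with $B=\{0.3,0.3,0.405\}$: no subset lies in $[\frac{\sqrt2}{2},1]$, since $0.705<\frac{\sqrt2}{2}$. The algorithm ends with $\{0.3,0.3\}$, while the optimum $\{0.405,0.405\}$ has value $0.81>(k-1)u$. More generally, $k$ items of size $1-\gamma_k+\varepsilon$ followed by $k$ items of size slightly below $\frac{1-\gamma_k}{\gamma_k}$ give a ratio tending to $\gamma_k$. The optimal value is about $k\frac{1-\gamma_k}{\gamma_k}$, which exceeds $(k-1)u$ near the left endpoint. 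The theorem is not contradicted (for instance $0.6/0.81>2/3=\rho$), but your argument does not cover these runs. Separately, the test at line~\ref{line291} only inspects the current $B$, not all non-small items seen so far. So \emph{every $k$-subset of non-small items seen so far exceeds $1$} could not follow from it in any case.

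\textbf{The right invariant.} Before termination, small items are never removed, and line~\ref{line1044} always removes the largest non-small item of $S\cup\{o_i\}$. Hence $S$ consists of all small items seen plus the $m$ smallest non-small items seen. Moreover, once a removal has occurred, the $m+1$ smallest non-small items seen sum to more than $1$. With $u\le\frac1{k-1}$ and \eqref{gammakeq2}, this forces $m\in\{k-1,k\}$.

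\textbf{What this covers, and what is missing.} Your computation works when $m=k-1$, and also when $m=k$ but the optimum uses at most $k-1$ non-small items. In those cases the kept non-small items exceed $1-u$, the optimum's non-small part is at most $(k-1)u$, and $\rho\le1$ absorbs the small items. The missing case is $m=k$ with the optimum using $k$ non-small items. There you must exploit the failed subset tests at removal times: a removed item together with $k-1$ kept ones forms a $k$-subset that must avoid $[\gamma_k,1]$. This is where \eqref{gammakeq} enters. The resulting bound, which is about $\gamma_k$ in the extremal configuration above, must then be compared with $\rho$ using $u>\frac{1-\gamma_k}{\gamma_k^2}$. Without this case the proposal proves the claim only when $S$ keeps $k-1$ non-small items or the optimum uses at most $k-1$ of them. (The paper itself only quotes this theorem from earlier work and gives no proof to compare against.)
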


\begin{figure}
\center
\begin{tikzpicture}[scale=6.0]

\draw[->] (-0.03,0.5)--(1.05,0.5);
\draw[->] (0,0.47)--(0,1.05);


\draw  (1,0.47)--(1,0.53);
\draw  (0.03,1)--(-0.03,1);
\draw (-0.03,0.618033989)--(0.03,0.618033989);
\draw (0.618033989,0.53)--(0.618033989,0.47);
\draw (0.5,0.53)--(0.5,0.47);



\node () at (1.1,0.5) {$u$};
\node () at (0,1.1) {$\rho(u)$};
\node () at (-0.1,1) {$1$};
\node () at (0.618033989,0.4) {$t$};
\node () at (1,0.4) {$1$};
\node () at (-0.18,0.618033989) {$t\approx 0.618$};
\node () at (0.5,0.4) {$0.5$};
\node () at (-0.1,0.5) {$0.5$};

\draw[dotted] (0.618033989,0.618033989)--(1,0.618033989);
\draw[dotted] (0.394448725,0.767591879)--(0.5,0.767591879);
\draw[dotted] (0.5,0.75)--(4/7,0.75);
\draw[dotted] plot [domain=0.381966011:0.394448725,variable=\t] (\t,{(1-\t)/(2*\t)});
\draw[dotted] (0.291796068,0.809016994)--(0.381966011,0.809016994);
\draw[dotted] plot [domain=0.284450558:0.291796068,variable=\t] (\t,{(1-\t)/(3*\t)});
\draw[dotted] (0.229670386,0.838516481)--(0.284450558,0.838516481);
\draw[dotted] plot [domain=0.225148227:0.229670386,variable=\t] (\t,{(1-\t)/(4*\t)});
\draw[dotted] (0.188611699,0.86037961)--(0.225148227,0.86037961);
\draw[dotted] plot [domain=0.18567504:0.188611699,variable=\t] (\t,{(1-\t)/(5*\t)});
\draw[dotted] (0.159670332,0.877150706)--(0.18567504,0.877150706);
\draw[dotted] plot [domain=0.157670781:0.159670332,variable=\t] (\t,{(1-\t)/(6*\t)});
\draw[dotted] (0.138260621,0.890388203)--(0.157670781,0.890388203);
\draw[dotted] plot [domain=0.136843822:0.138260621,variable=\t] (\t,{(1-\t)/(7*\t)});
\draw[dotted] (0.121822171,0.901085803)--(0.136843822,0.901085803);
\draw[dotted] plot [domain=0.120784389:0.121822171,variable=\t] (\t,{(1-\t)/(8*\t)});
\draw[dotted] (0.108824378,0.909901951)--(0.120784389,0.909901951);
\draw[dotted] plot [domain=0.108042773:0.108824378,variable=\t] (\t,{(1-\t)/(9*\t)});
\draw[dotted] (0.098300563,0.917288177)--(0.108042773,0.917288177);
\draw[dotted] plot [domain=0.097697891:0.098300563,variable=\t] (\t,{(1-\t)/(10*\t)});
\draw[dotted] (0.089612167,0.923563544)--(0.097697891,0.923563544);
\draw[dotted] plot [domain=0.089138045:0.089612167,variable=\t] (\t,{(1-\t)/(11*\t)});
\draw[dotted] (0.082321372,0.928959478)--(0.089138045,0.928959478);
\draw[dotted] plot [domain=0.081941876:0.082321372,variable=\t] (\t,{(1-\t)/(12*\t)});
\draw[dotted] (0.076118446,0.933647701)--(0.081941876,0.933647701);
\draw[dotted] plot [domain=0.075810088:0.076118446,variable=\t] (\t,{(1-\t)/(13*\t)});
\draw[dotted] (0.070778347,0.937758198)--(0.075810088,0.937758198);
\draw[dotted] plot [domain=0.07052447:0.070778347,variable=\t] (\t,{(1-\t)/(14*\t)});
\draw[dotted] (0.066133776,0.941391109)--(0.07052447,0.941391109);
\draw[dotted] plot [domain=0.065922308:0.066133776,variable=\t] (\t,{(1-\t)/(15*\t)});
\draw[dotted] (0.062057851,0.944624787)--(0.065922308,0.944624787);
\draw[dotted] plot [domain=0.061879877:0.062057851,variable=\t] (\t,{(1-\t)/(16*\t)});
\draw[dotted] (0,1)--(0.058452652,0.947521397)--(0.061879877,0.947521397);
\draw[dotted] plot [domain=4/7:0.618033989,variable=\t] (\t,{(1-\t)/\t});

\draw plot [domain=0:0.076118446,variable=\t] (\t,{1/(1+\t)});

\draw plot [domain=0.585786438:0.618033989,variable=\t] (\t,{(1-\t)/\t});
\draw plot [domain=0.394448725:0.414213562,variable=\t] (\t,{(1-\t)/(2*\t)});
\draw plot [domain=0.291796068:0.302775638,variable=\t] (\t,{(1-\t)/(3*\t)});
\draw plot [domain=0.229670386:0.236067977,variable=\t] (\t,{(1-\t)/(4*\t)});
\draw plot [domain=0.192582404:0.188611699,variable=\t] (\t,{(1-\t)/(5*\t)});
\draw plot [domain=0.16227766:0.159670332,variable=\t] (\t,{(1-\t)/(6*\t)});
\draw plot [domain=0.140054945:0.138260621,variable=\t] (\t,{(1-\t)/(7*\t)});
\draw plot [domain=0.123105626:0.121822171,variable=\t] (\t,{(1-\t)/(8*\t)});
\draw plot [domain=0.108824378:0.109772229,variable=\t] (\t,{(1-\t)/(9*\t)});
\draw plot [domain=0.099019514:0.098300563,variable=\t] (\t,{(1-\t)/(10*\t)});
\draw plot [domain=0.090169944:0.089612167,variable=\t] (\t,{(1-\t)/(11*\t)});
\draw plot [domain=0.08276253:0.082321372,variable=\t] (\t,{(1-\t)/(12*\t)});
\draw plot [domain=0.076473219:0.076118446,variable=\t] (\t,{(1-\t)/(13*\t)});


\draw (0.618033989,0.618033989)--(1,0.618033989);
\draw (0.414213562,0.707106781)--(0.585786438,0.707106781);
\draw (0.394448725,0.767591879)--(0.302775638,0.767591879);
\draw (0.291796068,0.809016994)--(0.236067977,0.809016994);
\draw (0.229670386,0.838516481)--(0.192582404,0.838516481);
\draw (0.188611699,0.86037961)--(0.16227766,0.86037961);
\draw (0.188611699,0.86037961)--(0.16227766,0.86037961);

\draw (0.159670332,0.877150706)--(0.140054945,0.877150706);
\draw (0.138260621,0.890388203)--(0.123105626,0.890388203);
\draw (0.121822171,0.901085803)--(0.109772229,0.901085803);
\draw (0.108824378,0.909901951)--(0.099019514,0.909901951);
\draw (0.098300563,0.917288177)--(0.090169944,0.917288177);
\draw (0.089612167,0.923563544)--(0.08276253,0.923563544);
\draw (0.082321372,0.928959478)--(0.076473219,0.928959478);
\draw (0.076118446,0.933647701)--(0.071067812,0.933647701);
\end{tikzpicture}
\caption{Lower (solid) and upper (dotted) bounds on the competitive ratio $\rho$ as a function of $u$.\label{fig_old}}
\end{figure}

  The rationale of $\gamma_k$ originates from the following equality: 
  
\begin{equation}\label{gammakeq}
(k-1)(1-\gamma_k)+\frac{1-\gamma_k}{\gamma_k}=\gamma_k
\end{equation}
Other useful technical properties of $\gamma_k$ (valid for all positive integers $k$\pjedit{, except for~\eqref{gammakeq5}, which requires $k\ge2$}) are:

\begin{align}
0 < 1-\gamma_k \leq \frac{1-\gamma_k}{\gamma_k} &< 1/k \le \frac{1-\gamma_k}{\gamma_k^2}
\label{gammakeq3} \\
(k-1) \frac{1-\gamma_k}{\gamma_k^2} &\le 1
\label{gammakeq4} \\
(k+1)(1-\gamma_k) &\ge \gamma_k
\label{gammakeq2} \\
\frac{1-\frac{1-\gamma_k}{\gamma_k^2}}{(k-1)\frac{1-\gamma_k}{\gamma_k^2}} &= \gamma_k
\label{gammakeq5}
\end{align}

\subsection{\pjreplace{Omitted proofs from the Adaptive Algorithm with Removal Analysis}{Proofs from the Adaptive Algorithm with Removal Analysis}}
\label{ap:omittedAdaptiveRemoval}

\noindent \textbf{Observation~\ref{obs:optimal_k}. }
    Consider that the item upper bound is $u$. Then the optimal parameter $k^*$ for \nonadapt\ to guarantee the best possible approximation is: 
    \[
    k^* =  \max\Big\{k\in \mathbb{N}\text{ } \big | \text{ } \frac{2}{k-1+\sqrt{(k-1)^2+4}}\geq u \Big\}
    \]

\begin{proof}
    The observation easily follows from Theorems~\ref{thm11} and~\ref{thm111}.
    Specifically, as mentioned in \cite{GourvesP24}, $\frac{2}{k-1+\sqrt{(k-1)^2+4}}$ is the solution of the equation $\frac{1-u}{(k-1)u}=\gamma_{k-1}$\pjedit{ for $k\ge2$}. I.e., where the approximation guarantee given by Theorem~\ref{thm111} for $k$ is equal to the approximation guarantee given by Theorem~\ref{thm11} for $k-1$, making $k-1$ a better parameter for any larger $u'>u$ \pjedit{with $u'\leq \frac{1}{k-1}$}.

    Furthermore, for $k' = k+1$ and any \pjreplace{$u>\frac{2}{k'-1+\sqrt{(k'-1)^2+4}}$}{$\frac{2}{k'-1+\sqrt{(k'-1)^2+4}}<u\leq \frac{1}{k'-1}=\frac{1}{k}$}, Theorem~\ref{thm11} guarantees a better or equal approximation ratio using parameter $k$, i.e., $\gamma_k$. That is since $\frac{2}{k'-1+\sqrt{(k'-1)^2+4}}$ is the solution to $\frac{1-u}{(k'-1)u}=\gamma_{k'-1} = \gamma_k$ by substitution. \pjedit{For bounds above the respective endpoints, Theorem~\ref{thm11} provides no guarantee for the larger parameter in each comparison.}

    Finally, notice that for $k=1$, the ratio becomes $1$ (i.e., larger or equal to any $u$). \qed 
\end{proof}

\pjdelete{
\noindent \textbf{Claim~\ref{cl:noBigItemsDeleted}:}
    As long as the largest item size, $MaxItem$, seen so far by the algorithm satisfies  $MaxItem\leq \frac{2}{4-1+\sqrt{(4-1)^2+4}}$ (i.e., $k\geq 4$), Algorithm~\ref{algo_adaptive} has never removed an item from set $B$, i.e., a non-small item.\footnote{That is, no item is deleted while being a member of $B$ i.e., considered non-small at the current phase. An item that has been a member of $B$ but is in a later phase re-classified as small (i.e., member of $S$ but not member  of $B$) can get deleted.}

\begin{proof}
    We prove the claim by showing that because we are checking for a set $S$ satisfying an approximation ratio of $\gamma_{k'}$, instead of $\gamma_{k_{curr}}$\footnote{Where $k_{curr}$ the optimal $k$ for the items bounded by the highest value seen so far.}, the sum of items in $B$ will always become larger than $\gamma_{k'}$ before becoming infeasible (i.e., larger than 1). We show this by using Lemmas \ref{lm:sufficientNonSmallItems} and \ref{lm:largeItemsBound}.
    
        Specifically, by definition, at any point, the algorithm has seen items of size up to \[
    \frac{2}{k_{curr}-1+\sqrt{(k_{curr}-1)^2+4}}.
    \]
    
    By step~\ref{algStep:smallRemoval} of the algorithm\pjreplace{ - removing all non-small items from $B$ whenever $k'$ is updated}{, resetting $B$ at an overflow}, all items in $B$ \pjreplace{at any point}{at that point} must be of size \pjreplace{at least}{strictly greater than} $1-\gamma_{k'} = 1-\gamma_{k_{curr}-2}$.
    
    By Lemma~\ref{lm:sufficientNonSmallItems} we have that 
    
    \[
    (k-1)(1-\gamma_{k-2})\geq \gamma_{k-2}\Rightarrow (k_{curr}-1)(1-\gamma_{k_{curr}-2})\pjreplace{>}{\ge}\gamma_{k_{curr}-2}=\gamma_{k'}
    \].

This means that whenever there are $k_{curr}-1$ items in $B$, there is a set $S\subseteq B$ that satisfies \pjreplace{$|S|\geq \gamma_{k'}$}{$v(S)>\gamma_{k'}$}.

Finally, by Lemma \ref{lm:largeItemsBound}, we have: 
\[
(k_{curr}-1)\frac{2}{k_{curr}-1+\sqrt{(k_{curr}-1)^2+4}}\leq 1.
\]

The right part of the product on the left side of the inequality is an upper bound on the largest possible item size seen so far. Thus, whenever there are $k_{curr}-1$ items in $B$, they can fit in the knapsack and would be returned as the algorithm's solution. \qed
\end{proof}
}

\begin{lemma}
    \label{lm:sufficientNonSmallItems}
    For $k\geq3$,
    \[
    (k-1)(1-\gamma_{k-2})\geq \gamma_{k-1}\geq\gamma_{k-2}
    \]
\end{lemma}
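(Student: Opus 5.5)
The second inequality $\gamma_{k-1}\ge\gamma_{k-2}$ is just monotonicity of $\gamma_m$. For the first I will work with explicit closed forms and compare the two sides through their distances from $1$.

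\textbf{Monotonicity.} Rewrite
\[
\gamma_m=\frac{m-2+\sqrt{m^2+4}}{2m}=\frac12-x+\sqrt{\tfrac14+x^2},\qquad x=\frac1m .
\]
The derivative in $x$ is $-1+x/\sqrt{1/4+x^2}<0$. Hence $\gamma_m$ increases with $m$, and $\gamma_{k-1}\ge\gamma_{k-2}$ follows.

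\textbf{Closed form for $1-\gamma_m$.} We have $1-\gamma_m=\tfrac12+\tfrac1m-\sqrt{\tfrac14+\tfrac1{m^2}}$. Rationalizing with $(\tfrac12+\tfrac1m)^2-(\tfrac14+\tfrac1{m^2})=\tfrac1m$ gives
\[
1-\gamma_m=\frac{2}{m+2+\sqrt{m^2+4}} .
\]
With $m=k-2$ the left-hand side of the lemma becomes
\[
L:=(k-1)(1-\gamma_{k-2})=\frac{2(k-1)}{k+\sqrt{(k-2)^2+4}} .
\]
The inequality $L\ge\gamma_{k-1}$ is then equivalent to $1-L\le 1-\gamma_{k-1}$.

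\textbf{Bounding both gaps ($k\ge4$).} For $m\ge1$, squaring shows $\sqrt{m^2+4}\le m+2/m$. Applying this with $m=k-2\ge 2$ gives
\[
L\ge\frac{2(k-1)}{2(k-1)+2/(k-2)}=\frac{(k-1)(k-2)}{(k-1)(k-2)+1},
\qquad\text{so}\qquad 1-L\le\frac{1}{(k-1)(k-2)+1}.
\]
On the other side, $\sqrt{(k-1)^2+4}\le k+1$ holds for $k\ge 2$. Using the closed form with $m=k-1$,
\[
1-\gamma_{k-1}=\frac{2}{k+1+\sqrt{(k-1)^2+4}}\ge\frac1{k+1}.
\]
It therefore suffices that $(k-1)(k-2)+1\ge k+1$, i.e.\ $k^2-4k+2\ge0$, which holds for all $k\ge4$.

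\textbf{The case $k=3$.} The crude bounds fail here, since $k^2-4k+2<0$, so I check it directly: $L=4/(3+\sqrt5)\approx0.764$ and $\gamma_2=\sqrt8/4\approx0.707$.

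The main obstacle is that the paper's identities \eqref{gammakeq} and \eqref{gammakeq2} only give $(k-1)(1-\gamma_{k-2})\ge\gamma_{k-2}$, which is too weak. That is why I pass to the explicit closed forms and compare $1-L$ with $1-\gamma_{k-1}$, treating the small case $k=3$ separately.
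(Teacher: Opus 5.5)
Your proof is correct, but it takes a different route from the paper's.

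The paper does not prove this inequality directly. It derives it as a one-line corollary of the stronger Lemma~\ref{lm:itemStructure} with $n=k-3$, namely $2(1-\gamma_{k-2})+(k-3)(1-\gamma_{k-1})\ge\gamma_{k-1}$. It then replaces the $k-3$ copies of $1-\gamma_{k-1}$ by the larger $1-\gamma_{k-2}$. That stronger inequality is proved by clearing denominators to a function $N(k)$ and bounding both square roots via $\sqrt{x^2+4}\le x+2/x$. This yields $N(k)\ge 2(k^2-6k+6)/((k-1)(k-2))$, which settles $k\ge5$, and $k=3,4$ are checked by hand.

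You instead prove the weaker inequality head-on. Using the rationalized form $1-\gamma_m=2/(m+2+\sqrt{m^2+4})$, you compare the gaps $1-L$ and $1-\gamma_{k-1}$. You use the same $x+2/x$ bound only on the $\gamma_{k-2}$ side and the cruder $\sqrt{(k-1)^2+4}\le k+1$ on the other. This reduces everything to $k^2-4k+2\ge0$, so only $k=3$ needs a direct check. You also supply a proof that $\gamma_m$ is monotone, which the paper simply cites.

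Each route has its advantage. Yours is shorter and self-contained. The paper's costs more algebra, but that algebra is needed anyway: the mixed inequality of Lemma~\ref{lm:itemStructure} (two non-small plus $k-3$ non-very-small items) drives Corollary~\ref{col:itemsCombination} and the recourse analysis, so the present lemma comes essentially for free. Your closing remark is accurate: \eqref{gammakeq2} only yields the bound with $\gamma_{k-2}$ on the right, which is exactly why the paper detours through the stronger lemma.

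For $k=3$, you could state the exact comparison $3-\sqrt5\ge\sqrt2/2$ rather than decimals, though the margin already makes your check unambiguous.
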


\begin{proof}
It suffices to prove the first inequality, the second stands by definition of $\gamma_k$ as an increasing function in \cite{GourvesP24}.

We prove this by using the stronger Lemma~\ref{lm:itemStructure}, described and proven in Appendix~\ref{ap:adaptRec}:
For all integers $k\geq 3$ and $0\leq n < k-2$,
    \[
    1>(k-n-1)(1-\gamma_{k-2})+n(1-\gamma_{k-1})\geq \gamma_{k-1}
    \]
    
Specifically, all we need to note is that $1-\gamma_{k-2}> 1-\gamma_{k-1}$, since $\gamma_{k-1}>\gamma_{k-2}$. Thus, applying the lemma for $n=k-3$ we get:

\begin{align}
    2(1-\gamma_{k-2})+(k-3)(1-\gamma_{k-1})&\geq \gamma_{k-1}\Rightarrow\\
    (k-1)(1-\gamma_{k-2})&\geq \gamma_{k-1}
\end{align}

Which is the desired inequality. \qed

\end{proof}


\begin{lemma}
\label{lm:largeItemsBound}
For $k\geq 1$, 
\[
(k-1)\frac{2}{k-1 + \sqrt{(k-1)^2+4}}\leq 1.
\]
\end{lemma}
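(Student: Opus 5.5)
Write $m=k-1\ge 0$. Since $\sqrt{m^2+4}>0$, the denominator $m+\sqrt{m^2+4}$ is strictly positive, and the claim becomes
\[
\frac{2m}{m+\sqrt{m^2+4}}\le 1 .
\]
The plan is to clear this denominator and reduce everything to a comparison of square roots.

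Multiplying both sides by the positive denominator, the inequality is equivalent to
\[
2m\le m+\sqrt{m^2+4},
\]
that is, to $m\le\sqrt{m^2+4}$. This last inequality holds, since $m^2\le m^2+4$ and the square root is monotone on $[0,\infty)$, so $m=\sqrt{m^2}\le\sqrt{m^2+4}$. In fact it is strict. The case $k=1$, i.e.\ $m=0$, needs no separate treatment: the left-hand side is then $0\le 1$, which is already covered by the argument above.

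An alternative route is to rationalize. Using $\sqrt{m^2+4}^{\,2}-m^2=4$, one has
\[
\frac{2}{m+\sqrt{m^2+4}}=\frac{\sqrt{m^2+4}-m}{2},
\]
and the claim $m(\sqrt{m^2+4}-m)\le 2$ then follows from $\sqrt{m^2+4}\le m+\frac{2}{m}$ for $m>0$, which holds after squaring. I would present the first route, because it is shorter.

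There is no real obstacle in this proof. The only point that needs care is to state explicitly that the denominator is positive, so that multiplying through preserves the direction of the inequality.
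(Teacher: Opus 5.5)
Your proof is correct and is essentially the paper's argument: the paper also observes that the denominator is positive, multiplies through to get $2(k-1)\le k-1+\sqrt{(k-1)^2+4}$, and reduces this to $k-1\le\sqrt{(k-1)^2+4}$. The only difference is that the paper treats $k=1$ separately, whereas you correctly note that it is already covered, since the denominator equals $2>0$ there.
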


\begin{proof}
    For $k=1$, the left hand side is $0$ so the inequality holds.

    For $k>1$, we first notice that the denominator, on the left side, $k-1 + \sqrt{(k-1)^2+4}$, is positive. Thus the inequality is equivalent to:

    \begin{align}
        2(k-1)\leq{k-1 + \sqrt{(k-1)^2+4}} \Rightarrow\\
        (k-1)\leq  \sqrt{(k-1)^2+4}
    \end{align}

Which stands for all $k>1$. \qed
\end{proof}

\subsection{Proof of key Lemmata for the analysis of \adaptRec .}
\label{ap:adaptRec}

\pjdelete{

We begin by providing the formal description of algorithm \adaptRec:

Similarly to  the previous case of Algorithm~\ref{algo_adaptive}, our strategy is to show that our adaptive algorithm will behave \emph{almost} exactly, in this case, as the non adaptive algorithm would have if it was running for \pjreplace{$\hat{k}+1$}{$\tilde{k}$} from the beginning. Then, should an approximation-wise noticeable difference have emerged, we can identify it and use the recourse to ``reverse'' it to the required degree such that a solution with the targeted approximation is provided. Critically, we can always do this before the input ends.

We begin the analysis of the algorithm by stating some key observations and lemmata.

For the analysis, apart from the previous definition of the set of non-small items, $B$ (and its instance $B_i$ at step $i$), we also need to define an equivalent set of ``non-\emph{very}-small-items". That is, items of size $>\pjreplace{1-\gamma_{k'+1}}{1-\gamma_{k''}}$ (i.e., $k-1$-non-small items in the general case). We equivalently use the terms $C$ and $C_i$ for  this set. By definition, $B_i\subseteq C_i$ at any $i$. Even though Algorithm~\ref{algo_adaptive_recourse} still defines the set of non-small items to generally be $k-2$-non-small items, in this case we want to compare its behavior to the non-adaptive algorithm, Algorithm~\ref{algo7}, run with parameter $k-1$ (since we are trying to guarantee $\gamma_{k-1}$ approximation in this case).
\pjedit{Here $k=k_{curr}$ within the phase under consideration, and $B_i$ and
$C_i$ refer to the corresponding subsets of the currently held items.}

\begin{observation}
\label{obs:noDeletionofNonSmall}
    By Lemma~\ref{lm:sufficientNonSmallItems}, we have that the main condition used in the analysis of Algorithm~\ref{algo_adaptive}, i.e., that no item in the current set of non-small items $B_i$ will be deleted, also stands for Algorithm~\ref{algo_adaptive_recourse} \pjedit{before reaching the target, when $k_{curr}\geq3$}, where we are trying to reach $\gamma_{k-1}$ approximation.
\end{observation}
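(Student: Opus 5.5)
The statement just before this point is Observation~\ref{obs:noDeletionofNonSmall}. It asserts that, while $k_{curr}=k\ge 3$ and before the target $\gamma_{k''}=\gamma_{k-1}$ is reached, Algorithm~\ref{algo_adaptive_recourse} never deletes an item belonging to the current set $B$ of non-small items, i.e., items of size greater than $1-\gamma_{k-2}$. I would reproduce the argument of Claim~\ref{cl:noBigItemsDeleted} and check it against the stronger target and the new deletion rules.

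The first step is to list where deletions happen in Algorithm~\ref{algo_adaptive_recourse}. Line~\ref{adRecline1051} removes only elements of $S\setminus B$, so it never touches $B$. Line~\ref{adRecline1044} removes the largest item. This happens either when $B=\varnothing$, where no item of $B$ is at stake, or when $v(B)>1$. Lines~\ref{adRecline291} and~\ref{adRecline291b} replace $S$ by a target solution, so they occur only once the target is met. The whole observation therefore reduces to one claim: before the target is reached, the branch with $v(B)>1$ and $B\neq\varnothing$ is never taken with an item of $B$ as the largest item.

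To prove that claim, I would show that whenever $B$ holds enough items to overflow, some subset of $B$ is already a target solution. That subset would have been caught by the test at line~15, since that test ranges over all subsets of $S\supseteq B$. Each item of $B$ has size greater than $1-\gamma_{k-2}$. By Lemma~\ref{lm:sufficientNonSmallItems}, $(k-1)(1-\gamma_{k-2})\ge\gamma_{k-1}$, so any $k-1$ items of $B$ have total value at least $\gamma_{k-1}=\gamma_{k''}$. This is the one place where the argument differs from Claim~\ref{cl:noBigItemsDeleted}: there only the weaker bound $\ge\gamma_{k-2}$ was used, and here the first inequality of the lemma supplies the stronger target. Every item seen so far has size at most $2/(k-1+\sqrt{(k-1)^2+4})$, so by Lemma~\ref{lm:largeItemsBound} any $k-1$ items fit in the knapsack.

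It follows that if $|B|\ge k-1$, some $(k-1)$-subset of $B$ has value in $[\gamma_{k''},1]$, and line~15 returns a target solution before line~\ref{adRecline1044} is reached. If $|B|\le k-2$, then $v(B)\le 1$ by Lemma~\ref{lm:largeItemsBound}, so the condition $v(B)>1$ fails. In both cases no item of $B$ is deleted while the target is unmet.

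The main point requiring care is bookkeeping rather than inequalities. The set $B$ is recomputed at each overflow with respect to the current $k'$, and an item may leave $B$ when $k_{curr}$ changes, as the footnote to Claim~\ref{cl:noBigItemsDeleted} already allows. The statement must therefore be read per phase and relative to the current $B$. One must also confirm that the new clause ``or $B=\varnothing$'' cannot remove a non-small item, which is immediate from its form.
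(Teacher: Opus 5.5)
Your proposal is correct and follows the paper's route. Like the paper, you rerun the argument of Claim~\ref{cl:noBigItemsDeleted} with the stronger first inequality $(k-1)(1-\gamma_{k-2})\ge\gamma_{k-1}$ of Lemma~\ref{lm:sufficientNonSmallItems} together with Lemma~\ref{lm:largeItemsBound}, and you note that the added largest-item deletion fires only when $B=\varnothing$; your explicit split into $|B|\le k-2$ versus $|B|\ge k-1$, and the remark that line~15 ranges over subsets of $S\supseteq B$, just spell out what the paper leaves implicit.
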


The proof is identical to Claim~\ref{cl:noBigItemsDeleted}. All we have to notice is that in that specific proof we didn't use  Lemma~\ref{lm:sufficientNonSmallItems} tightly: We only needed that  $(k-1)(1-\gamma_{k-2})\geq \gamma_{k-2}$ when it is also $(k-1)(1-\gamma_{k-2})\geq \gamma_{k-1}$.
\pjedit{The additional largest-item deletion in line~18 is performed only when
$B=\varnothing$, so it does not remove a non-small item.}

The following lemma is key to understanding the structure of non-small items. To maintain the narrative, we state the Lemma here and provide its proof after the proof of the main theorem, towards the end of the section:

\begin{lemma}
    \label{lm:itemStructure}
    For all integers $k\geq 3$ and $0\leq n < k-2$,
    \[
    1>(k-n-1)(1-\gamma_{k-2})+n(1-\gamma_{k-1})\geq \gamma_{k-1}.
    \]
\end{lemma}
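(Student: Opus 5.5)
The plan is to reduce the two-sided inequality to two extreme cases of $n$. After that, each case becomes a statement about the closed form $\gamma_j=\frac{j-2+\sqrt{j^2+4}}{2j}$ and is checked by elementary algebra.

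\textbf{Reduction to the endpoints.} Since $\gamma_{k-1}>\gamma_{k-2}$, we have $1-\gamma_{k-2}>1-\gamma_{k-1}$. Hence $f(n)=(k-n-1)(1-\gamma_{k-2})+n(1-\gamma_{k-1})$ is strictly decreasing in $n$. It therefore suffices to prove two things: the upper bound $f(0)=(k-1)(1-\gamma_{k-2})<1$, and the lower bound $f(k-3)=2(1-\gamma_{k-2})+(k-3)(1-\gamma_{k-1})\ge\gamma_{k-1}$. Throughout, set $j=k-2\ge1$ and $s_j=\sqrt{j^2+4}$, so that $1-\gamma_j=\frac{j+2-s_j}{2j}$.

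\textbf{Upper bound.} The inequality $f(0)<1$ becomes $(j+1)(j+2-s_j)<2j$. This is equivalent to $j^2+j+2<(j+1)s_j$. Both sides are positive, so we may square. The left side gives $j^4+2j^3+5j^2+4j+4$. The right side gives $(j^2+2j+1)(j^2+4)=j^4+2j^3+5j^2+8j+4$. The difference is $4j>0$, which settles this half.

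\textbf{Lower bound (main obstacle).} Adding $1-\gamma_{k-1}$ to both sides, the claim becomes $2(1-\gamma_j)+j(1-\gamma_{j+1})\ge1$. In closed form this reads
\[
\frac{j+2-s_j}{j}+\frac{j\,(j+3-s_{j+1})}{2(j+1)}\ge 1 .
\]
This is the delicate step. Both terms are small, roughly $2/j$ and $1$ minus something of order $1/j$, so the first-order cancellations must be tracked carefully. The crude bound $\sqrt{x^2+4}\le x+2/x$ is too weak: at $j=1$ it gives only $1/4$.

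My plan is to use the sharper upper bound $\sqrt{x^2+4}\le x+\frac{2}{x}-\frac{2}{x^3}+\frac{4}{x^5}$ for $x\ge1$. This follows by squaring, or from the alternating Taylor series with a checked remainder. I would apply it to both $s_j$ and $s_{j+1}$, which gives a rational lower bound on the left-hand side. I would then verify that this bound is $\ge1$ for all $j\ge j_0$ by clearing denominators and checking the sign of a polynomial, expecting $j_0$ to be around $3$ or $4$.

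The finitely many remaining values $j<j_0$ I would check numerically from the exact closed forms. For instance, at $j=1$ we get $2(0.382)+0.293\approx1.057\ge1$. If the series bound turns out to be awkward, a fallback is to isolate one square root and square twice to obtain a polynomial inequality in $j$, being careful with the signs before each squaring.
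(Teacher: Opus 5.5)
Your proposal is correct and follows essentially the paper's route. You reduce to $n=0$ and $n=k-3$ by monotonicity in $n$, rewrite the lower bound as $2(1-\gamma_{k-2})+(k-2)(1-\gamma_{k-1})\ge 1$, bound the square roots from above by rational functions, and check small cases by hand. For the upper bound, your squaring argument works; the paper instead uses the one-line bound $1-\gamma_{k-2}=2/(k+\sqrt{(k-2)^2+4})<1/(k-1)$.

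The one thing to correct is your claim that $\sqrt{x^2+4}\le x+2/x$ is too weak. It fails only for small $j$: its error is $O(1/j^3)$, while the true margin is of order $1/j^2$. The paper uses exactly this bound. After clearing denominators it gets the lower bound $\frac{2(k^2-6k+6)}{(k-1)(k-2)}$, which is $\frac{2(j^2-2j-2)}{j(j+1)}$ in your variable and is nonnegative for $j\ge 3$. It then checks $k=3,4$ directly.

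Your sharper bound also goes through. The extra terms $\frac{2}{j^4}\bigl(1-\frac{2}{j^2}\bigr)+\frac{j}{(j+1)^4}\bigl(1-\frac{2}{(j+1)^2}\bigr)$ rescue $j=2$, so only $j=1$ is left for the numerical check. The price is heavier algebra.
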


\begin{corollary}
\label{col:itemsCombination}
    Lemma~\ref{lm:itemStructure} in combination with Lemma~\ref{lm:largeItemsBound} show that whenever there are $k-1$ non-very-small-items such that at least two of them are non-small, then their sum is a feasible solution of at least $\gamma_{k-1}$ approximation ratio.
\end{corollary}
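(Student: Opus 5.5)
We argue directly from the two lemmas, within a phase in which $k_{\mathrm{curr}}=k\ge 3$. Write $a=1-\gamma_{k-2}$ and $c=1-\gamma_{k-1}$, so that non-small items (members of $B$) have size $>a$ and non-very-small items (members of $C$) have size $>c$, with $a>c$ since $\gamma_{k-1}>\gamma_{k-2}$. Take any set $X$ of $k-1$ non-very-small items, and let $m\ge 2$ be the number of them that are non-small. Set $n=k-1-m$, the number of items of $X$ that are non-very-small but small. Since $m\ge 2$, we have $0\le n\le k-3<k-2$, which is exactly the range allowed in Lemma~\ref{lm:itemStructure}.

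\emph{Lower bound.} The $m$ non-small items each contribute more than $a$, and the remaining $n$ items each contribute more than $c$. Hence
\[
v(X) > m\,a + n\,c = (k-n-1)(1-\gamma_{k-2}) + n(1-\gamma_{k-1}) \ge \gamma_{k-1},
\]
where the last inequality is Lemma~\ref{lm:itemStructure}. The case $m=k-1$ (so $n=0$) is included.

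\emph{Feasibility.} Throughout the phase, every item seen so far has size at most $MaxItem\le \frac{2}{k-1+\sqrt{(k-1)^2+4}}$, by the definition of $k_{\mathrm{curr}}$ in Algorithm~\ref{algo_adaptive_recourse}. Therefore
\[
v(X)\le (k-1)\,\frac{2}{k-1+\sqrt{(k-1)^2+4}}\le 1
\]
by Lemma~\ref{lm:largeItemsBound}. Together with the lower bound, $X$ is a feasible solution with $\gamma_{k-1}\le v(X)\le 1$, i.e., it attains the $\gamma_{k-1}$ target.

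The only point needing care is the bookkeeping. The index conditions $k\ge3$ and $n<k-2$ must match Lemma~\ref{lm:itemStructure}, which is why at least two non-small items are required: with only one non-small item we would get $n=k-2$, outside the lemma's range. The feasibility bound must also be taken with respect to the current $k_{\mathrm{curr}}$. Otherwise the argument is a direct substitution.
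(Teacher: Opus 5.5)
Your proposal is correct and is essentially the paper's argument. Splitting the $k-1$ items into $m\ge2$ non-small ones and $n=k-1-m\le k-3$ small-but-non-very-small ones puts you in the range of Lemma~\ref{lm:itemStructure}, which gives the lower bound $\gamma_{k-1}$. The bound $MaxItem\le \frac{2}{k-1+\sqrt{(k-1)^2+4}}$ together with Lemma~\ref{lm:largeItemsBound} then gives feasibility. You also rightly take feasibility from Lemma~\ref{lm:largeItemsBound} rather than from the strict upper bound in Lemma~\ref{lm:itemStructure}, since the latter only bounds the lower-bound expression and not the actual sum.
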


\pjmargincomment{Simplifying/removed wrong invariant observation.}
\pjreplace{The following observation also gives us key insight on the structure of non-small items. Specifically, it states that there is always at least one of them, during the whole execution of the algorithm.

\begin{observation}
\label{obs:nonEmpty}
    For any $i$, $B_i$ is non empty (i.e., there is at least one non-small item at any step).
\end{observation}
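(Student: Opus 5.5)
The statement is about the sets $B_i$ of non-small items in Algorithm~\ref{algo_adaptive_recourse}. I read $B_i$ as the set $\{o\in S:|o|>1-\gamma_{k'}\}$ at the end of iteration $i$, restricted to iterations before the target $\gamma_{k''}$ has been reached. Once the target is reached, $S$ is frozen and nothing more needs to be shown. On that reading, the claim cannot hold literally at the very start: while only $k'$-small items have arrived, $B_i=\varnothing$. The first step will therefore be to make this precise. I will restrict the claim to iterations $i$ at which an overflow has occurred, i.e.\ $v(S)>1$ at line~\ref{adRecline1030}, and only through the first such overflow. Everything before that point is a pure ``accept everything'' phase in which the solution holds every item and is optimal, so nonemptiness is not needed there.

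The main step is to show that at the first overflow with the target still unmet, $B\neq\varnothing$. I will use the same structural facts as in the proof of Theorem~\ref{Thm:Adaptive_recourse}, with $\beta=\gamma_{k-1}$, $a=1-\gamma_{k-2}$ and $c=1-\beta$.
\begin{itemize}
\item If $v(C)\le 1$, deleting items outside $C$ (each of size at most $c$) until feasibility is restored leaves a subset of value greater than $1-c=\beta$. This contradicts the failed test at line~15, so $v(C)>1$.
\item $C$ has exactly $k$ items. Before the arrival there were at most $k-1$ of them, since $k$ fitting items would already reach the target by Lemma~\ref{lm:nonVerSmallBound}. Any $k-1$ items fit by Lemma~\ref{lm:largeItemsBound}.
\end{itemize}
Suppose now that all $k$ items of $C$ lie in $(c,a]$. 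I would then try to show that some $k-1$ of them already reach $\beta$, contradicting the failed test. The plan is to lower-bound the sum of the $k-1$ smallest by $(k-1)c$ and compare it with $\beta$ using Lemma~\ref{lm:itemStructure} with $n=k-3$. From there, stability is propagated over later iterations: Observation~\ref{obs:noDeletionofNonSmall} guarantees that no member of $B$ is deleted before the target is reached. A decrease of $k_{curr}$ only enlarges $B$, because the threshold $1-\gamma_{k'}$ grows as $k'$ decreases, unless Observation~\ref{obs:nonEmpty} has already frozen the solution.

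The main obstacle is exactly this all-in-$(c,a]$ configuration. The bound $(k-1)c=(k-1)(1-\gamma_{k-1})$ is compared to $\gamma_{k-1}$ only through~\eqref{gammakeq2}, which gives $k(1-\gamma_{k-1})\ge\gamma_{k-1}$, not $(k-1)(1-\gamma_{k-1})\ge\gamma_{k-1}$. So the comparison may fail, and $k$ items each at most $a$ can genuinely overflow. If that inequality cannot be closed, this step shows the observation is false as stated. The honest conclusion would then be the weaker dichotomy: either $B\neq\varnothing$, or $B=\varnothing$ and the algorithm removes the largest item exactly as Algorithm~\ref{algo7} does. That dichotomy is what the subsequent analysis actually needs.
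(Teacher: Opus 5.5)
You are right that $B_i$ can be empty before any overflow, but restricting to overflows does not rescue the statement. The configuration you flag as the obstacle is a genuine counterexample, so the observation is false. Your proposal should have ended with an explicit instance rather than a conditional. Run Algorithm~\ref{algo_adaptive_recourse} on three items of size $0.34$. Since $2/(3+\sqrt{13})\approx 0.303<0.34\le 2/(2+\sqrt{8})\approx 0.414$, we get $k_{curr}=3$, hence $k'=1$ and $k''=2$. The non-small threshold is $1-\gamma_1\approx 0.382>0.34$ and the target is $\gamma_2=\sqrt{2}/2\approx 0.707$. The first two items are accepted ($0.68<\gamma_2$), and the third gives $v(S)=1.02>1$. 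No subset has value in $[\gamma_2,1]$, and $B=\varnothing$ at this very first overflow.

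The same works for every $k\ge 3$: take $k$ equal items of size $s$ with $1/k<s<\min\{1-\gamma_{k-2},\gamma_{k-1}/(k-1)\}$. This interval is nonempty because $1-\gamma_{k-2}=2/(k+\sqrt{(k-2)^2+4})>1/k$ and $1-\gamma_{k-1}=2/(k+1+\sqrt{(k-1)^2+4})<1/k$. Such $s$ lies in $(2/(k+\sqrt{k^2+4}),\,2/(k-1+\sqrt{(k-1)^2+4}))$; for the upper end, use $\sqrt{(k-1)^2+4}-\sqrt{(k-2)^2+4}<1$. Hence $k_{curr}=k$, all items are $k'$-small, any $k-1$ of them stay below $\gamma_{k-1}$, and all $k$ together overflow.

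Your propagation step is also backwards. When $k'$ decreases, the threshold $1-\gamma_{k'}$ increases, so $B$ can only shrink. Moreover, a new largest item can itself be $k'$-small, as the $0.34$ instance shows.

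The paper's own argument for this observation takes a different route. It claims the largest item seen so far is always non-small, via $MaxItem>2/(k+\sqrt{k^2+4})\ge 1-\gamma_{k'}$, and combines this with the fact that non-small items are never deleted. The second inequality fails for every $k\ge 3$, since $1-\gamma_{k-2}=2/(k+\sqrt{(k-2)^2+4})>2/(k+\sqrt{k^2+4})$. The instance above is exactly a case where $MaxItem$ is $k'$-small.

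This is why the final version of the paper drops the observation and makes two changes:
\begin{itemize}
\item it replaces it with the statement that any deletion made before reaching the target leaves value above $\gamma_{k-2}$, so a later decrease of $k_{curr}$ freezes the solution;
\item it adds the test $B=\varnothing$ to line~18, where the algorithm then removes the largest item as Algorithm~\ref{algo7} would.
\end{itemize}
That is precisely the weaker dichotomy you arrive at. It, not the observation as stated, is what the proof of Theorem~\ref{Thm:Adaptive_recourse} actually needs.
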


\begin{proof}
    Remember that at each iteration we set $k_{curr}\gets \max\Big\{k\in \mathbb{N}\text{ } \big | \text{ } \frac{2}{k-1+\sqrt{(k-1)^2+4}}\geq MaxItem \Big\}$, where $MaxItem$ the maximum item size seen so far. That means that $MaxItem > \frac{2}{(k+1)-1+\sqrt{((k+1)-1)^2+4}}\geq 1-\gamma_{k'}$, for any $k\geq 2$ \pamargincomment{Do we need to add proof for this?}. This, in combination with Observation~\ref{obs:noDeletionofNonSmall}, prove  the Observation. \qed
\end{proof}

Corollary~\ref{col:itemsCombination} and Observation~\ref{obs:nonEmpty} give us a good picture of the items that have been included in the knapsack before the algorithm deletes any non-very-small items. 

Specifically, by Lemma~\ref{lm:largeItemsBound} we know that $k-1$ items of the maximum item size seen so far fit in the solution. By the fact that Algorithm~\ref{algo_adaptive_recourse} deletes non-small items in increasing size order, we know that if we have seen up to $k-1$ non-very-small items so far, none of them will have been deleted. 

Thus, by Corollary~\ref{col:itemsCombination}, we have that when we actually accumulate $k-1$ items we have one of two cases: Either we have at least 2 non-small items and the sum of non-very-small items immediately gives the approximation guarantee or we have exactly 1 non-small-item. 

For the latter case, it is not guaranteed that the sum of non-very-small in the solution gives approximation $\gamma_{k-1}$. Specifically, if we take all the non-small-items in this structure to be exactly on their lower bounds (i.e., 1 non-small item and $k-2$ non-very-small-items) we may have:

\[
1-\gamma_{k-2} + (k-2)(1-\gamma_{k-1})<\gamma_{k-1}
\]

That being said, the above only says that there is a small gap between the current sum and the desired approximation. Different item sizes forming the same structure (only 1 non-small and $k-2$ non-very-small) may still additively guarantee the approximation ratio.

}{\begin{observation}\label{obs:nonEmpty}
If Algorithm~\ref{algo_adaptive_recourse} deletes an item while $k_{\mathrm{curr}}=k\ge3$ and has not reached
its target, its solution after that iteration has value greater than
$\gamma_{k-2}$. If $k_{\mathrm{curr}}$ subsequently decreases, the new target is
already satisfied.
\end{observation}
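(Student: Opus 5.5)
The plan is to track the value of the solution right after any deletion that happens before the target is reached. The only facts needed are that every deleted item has size at most $a:=1-\gamma_{k-2}$ and that $\gamma_j$ is increasing in $j$. Throughout, fix a phase with $k_{\mathrm{curr}}=k\ge 3$, so that $k'=k-2$, $k''=k-1$, and $B=\{o\in S:|o|>a\}$ after each reset at line~\ref{algStep:smallRemovalRec}.

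\emph{Step 1: every deleted item has size at most $a$.} I go through the branches of Algorithm~\ref{algo_adaptive_recourse} taken after an overflow when the target test fails.
\begin{itemize}
\item The branches at lines~\ref{adRecline291} and~\ref{adRecline291b} set $S\gets\hat S$ with $v(\hat S)\ge\gamma_{k''}$. They therefore reach the target, so they are excluded by hypothesis.
\item The while loop at line~\ref{adRecline1050} only removes elements of $S\setminus B$, each of size at most $a$.
\item Line~\ref{adRecline1044} removes the largest item of $S$ in two cases. If $v(B)>1$, Observation~\ref{obs:noDeletionofNonSmall} says this cannot happen before the target is reached. (This is the argument of Claim~\ref{cl:noBigItemsDeleted}, combining Lemmas~\ref{lm:sufficientNonSmallItems} and~\ref{lm:largeItemsBound}: $k-1$ non-small items would already fit and reach $\gamma_{k-1}$.) If instead $B=\varnothing$, every item of $S$, including the largest one, has size at most $a$.
\end{itemize}
Hence every deletion made before the target is reached removes an item of size at most $a$.

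\emph{Step 2: the value after the iteration exceeds $\gamma_{k-2}$.} Deletions in an iteration stop as soon as feasibility is restored. In the while loop this happens once $v(S)\le 1$. In the largest-item branch, a single removal suffices, since the removed item is at least as large as the arriving one and the solution was feasible before that arrival. Immediately before the last deletion, $v(S)>1$, and the deleted item has size at most $a$. Afterwards, therefore,
\[
v(S)>1-a=\gamma_{k-2}.
\]

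\emph{Step 3: the bound persists while the parameter stays the same.} In later iterations of the same phase, there are three possibilities:
\begin{itemize}
\item the item is rejected at line~\ref{adRecline1025}, and $S$ is unchanged;
\item the item fits, and the value only grows;
\item an overflow occurs, and by Steps 1--2 either the target is reached or the value again ends above $\gamma_{k-2}$.
\end{itemize}
So the bound $v(S)>\gamma_{k-2}$ is maintained throughout the phase.

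\emph{Step 4: a decrease of $k_{\mathrm{curr}}$ already meets the new target.} Since $MaxItem$ is non-decreasing, $k_{\mathrm{curr}}$ can only decrease. Suppose it decreases to some $j\le k-1$ at the arrival of a new item. The new target is $\gamma_{\max\{j-1,1\}}$. Because $\max\{j-1,1\}\le k-2$ (using $k\ge3$) and $\gamma$ is increasing, this target is at most $\gamma_{k-2}<v(S)$. Since $k''$ is updated before the test at line~\ref{adRecline1024}, that test succeeds before the new item is processed, and $S$ is never changed afterwards.

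The main obstacle I expect is Step 1 for the largest-item branch. One must make sure that the extra condition $B=\varnothing$ added to line~\ref{adRecline1044} cannot remove something larger than $a$, and that the $v(B)>1$ case is genuinely excluded by Observation~\ref{obs:noDeletionofNonSmall} before the target is met. Once that is settled, the remaining steps are one-line inequalities.
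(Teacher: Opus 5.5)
Your proposal is correct and follows the paper's proof essentially step for step. Every deleted item has size at most $1-\gamma_{k-2}$ by Observation~\ref{obs:noDeletionofNonSmall}, the last feasibility-restoring deletion leaves value above $\gamma_{k-2}$, the bound persists within the phase, and a drop of $k_{\mathrm{curr}}$ to $j\le k-1$ gives target $\gamma_{\max\{j-1,1\}}\le\gamma_{k-2}$, so the rejection test fires; your separate treatment of the $v(B)>1$ and $B=\varnothing$ cases of the largest-item deletion, and your single-removal argument for that branch, spell out what the paper leaves implicit.
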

\begin{proof}
By Observation~\ref{obs:noDeletionofNonSmall}, all deleted items have size at most
$1-\gamma_{k-2}$. Immediately before the last deletion that restores feasibility,
the solution has value greater than $1$. Therefore, afterwards its value is
greater than $\gamma_{k-2}$. Further iterations with the same parameter preserve
this bound. If the parameter decreases to $j\le k-1$, the new target is
$\gamma_{\max\{j-1,1\}}\le\gamma_{k-2}$, so line~9 applies before processing the
new item.
\end{proof}}

The following lemma completes the picture of the possible structures of non-small-items that could provide the approximation ratio:

\begin{lemma}
    \label{lm:nonVerSmallBound}
    For any $k\geq 2$,
    \[
    1>k(1-\gamma_{k-1})\geq \gamma_{k-1}
    \]
\end{lemma}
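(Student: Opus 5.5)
The plan is to reparametrize by $m=k-1\ge 1$, so that the claim becomes $1>(m+1)(1-\gamma_m)\ge\gamma_m$ for every positive integer $m$. The two inequalities are then handled separately: the right one by a property already listed from \cite{GourvesP24}, the left one by a direct computation with the closed form of $\gamma_m$ from Theorem~\ref{thm11}.

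\emph{Lower inequality.} The inequality $(m+1)(1-\gamma_m)\ge\gamma_m$ is exactly property~\eqref{gammakeq2} applied with index $m$. That property is valid for all positive integers, so this direction needs no further work.

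\emph{Upper inequality.} The first idea would be to use~\eqref{gammakeq3}. It gives $1-\gamma_m<\gamma_m/m$, hence only $(m+1)(1-\gamma_m)<(m+1)\gamma_m/m$, and this bound is not below $1$ in general. So the obstacle is this strict upper bound, and I would prove it by direct computation instead. From $\gamma_m=\frac{m-2+\sqrt{m^2+4}}{2m}$ we get
\[
1-\gamma_m=\frac{m+2-\sqrt{m^2+4}}{2m}.
\]
Therefore the inequality $(m+1)(1-\gamma_m)<1$ is equivalent to $(m+1)(m+2)-2m<(m+1)\sqrt{m^2+4}$, that is,
\[
m^2+m+2<(m+1)\sqrt{m^2+4}.
\]
Both sides are positive, so I would square them. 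The left side gives $(m^2+m+2)^2=m^4+2m^3+5m^2+4m+4$. The right side gives $(m+1)^2(m^2+4)=m^4+2m^3+5m^2+8m+4$. The difference is $4m>0$, so the strict inequality holds for all $m\ge1$.

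Substituting back $m=k-1$ yields $1>k(1-\gamma_{k-1})\ge\gamma_{k-1}$ for every $k\ge2$.
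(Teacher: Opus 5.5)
Your proof is correct. For the upper inequality you do the same computation as the paper. The paper sets $n=k-1$, reduces to $n^2+n+2<(n+1)\sqrt{n^2+4}$, divides by $n+1$ and then squares; you square directly and read off the difference $4m$, which is the same argument.

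The real difference is the lower inequality. You note that $(m+1)(1-\gamma_m)\ge\gamma_m$ is exactly property \eqref{gammakeq2} with a shifted index, and the paper lists that property as valid for all positive integers. The paper instead reproves it from scratch. It reduces to $n+\frac{4}{n+2}\ge\sqrt{n^2+4}$, uses $\sqrt{n^2+4}\le n+\frac{2}{n}$ for $n\ge 2$, and checks $n=1$ numerically. Your route is shorter and makes clear that only the upper bound is new content. The paper's route is self-contained and does not depend on the property imported from \cite{GourvesP24}.

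If you want self-containment at no extra cost, the squaring you used for the upper bound also works here. The inequality is equivalent to $m^2+2m+4\ge(m+2)\sqrt{m^2+4}$, and $(m^2+2m+4)^2-(m+2)^2(m^2+4)=4m^2\ge 0$. This needs no separate case for $m=1$.
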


That is, if $k$ non-very-small items fit in the knapsack, then their sum gives the required approximation. 


\pjmargincomment{For updated proof}\pjreplace{The proof of the main theorem then is based in the following logic: Based on the observations above, realize that the only case in which Algorithm~\ref{algo_adaptive_recourse} may delete non-very-small items in a different order than the non adaptive algorithm would have before termination, can only be a case in which we have $k-1$ non-very-small items and exactly 1 non-small item.

The difference in behavior between the two algorithms may only be important if termination occurs or should have occurred in the same ``phase" (i.e., before $k_{curr}$ is updated): Whenever $k_{curr}$ is updated non-very-small items may only include items that were considered non-small in the previous phase - that never get deleted by Observation~\ref{obs:noDeletionofNonSmall}. Any non-very-small items of the previous phase that weren't non-small would be considered small in the new phase so their deletion order wouldn't matter. [!!!Rephrase/only include in the proof???!!!]

Thus, we may focus only on the possible termination conditions within each distinct phase in this case. These conditions can only be one of 3: a) The input ends. b) A combination of $k-1$ non-very-small items will appear in the input, s.t. their sum would suffice as a termination condition (i.e., guarantee the approximation ratio). c) The above is achieved by $k$ non-very-small items.\pamargincomment{Removed big paragraph from below that seemed redundant. Double check.}

}{The proof of the main theorem distinguishes the case where $B$ is empty
from the case where there is exactly one non-small item. In the former case,
Algorithm~\ref{algo_adaptive_recourse} removes the largest item, as Algorithm~\ref{algo7} would do. In the latter
case, we compare the values of the two solutions and show that the next
non-very-small item either gives a target solution directly or allows the use
of one recourse. By Observation~\ref{obs:nonEmpty}, a change in
$k_{\mathrm{curr}}$ before termination can occur only before any deletion.}

We are now ready to conclude with the formal proof of Theorem~\ref{Thm:Adaptive_recourse}:\pacomment{Probably tone done the pre-proof description or include it all in the proof? OR keep only this if it is informative enough and put formal proof in appendix???}
\newline

\noindent \textbf{Theorem~\ref{Thm:Adaptive_recourse}.}
        For all positive integers $k$, Algorithm~\ref{algo_adaptive_recourse} is $\gamma_{\tilde{k}}$-competitive when $u\leq \frac{2}{k-1+\sqrt{(k-1)^2+4}}$, where $\tilde{k}= \max\{k-1, 1\}$\pjedit{, using at most one recourse}.

\pjmargincomment{Core of updated/corrected proof -- validate}\pjreplace{\begin{proof}
    For bounding the approximation, the same argument we used in the proof of Theorem~\ref{thm:adaptiveMain} stands for $\tilde{k}$: Algorithm~\ref{algo7} (non-adaptive) is guaranteed to achieve approximation $\gamma_{\tilde{k}}$ whenever $u\leq \frac{2}{k-1+\sqrt{(k-1)^2+4}}$, by using the same arguments and Theorem~\ref{thm11}.

    Thus, following the logic of of the proof of Theorem~\ref{thm:adaptiveMain}, we have to show that Algorithm~\ref{algo_adaptive_recourse} behaves exactly the same \emph{or} approximation-wise equivalently to Algorithm~\ref{algo7} when run with parameter $\tilde{k}$. Specifically, in this case we want to show this equivalence focusing on sets $C_i$ (set of non-very-small items of adaptive algorithm) and $B_{k'',i}$ (set of non-small items of non adaptive algorithm when run with parameter $k''$)\pamargincomment{Does this need more explanation???}.

Our claim will hold for the $\gamma_{\tilde{k}}$ approximation for $\tilde{k}=\max\{k-1,1\}$ for any positive integer $k$. Specifically, we notice that all the lemmata mentioned work for $k\geq 3$. Thus, as $k_{curr}$ decreases they apply to all $\tilde{k}\geq 2$. Finally, we notice that when $k_{curr} \in \{1,2\}$ then $\tilde{k}=1$ and $k'=k''=1$. In that case our algorithm reverts to the basic case of Algorithm~\ref{algo7} run for parameter 1, guaranteeing approximation $\gamma_1$ for any item size. \footnote{Notice, that by Lemma~\ref{lemma:k-sizes}, since at the iteration, $i$, right before updating $k''$ to 1 we had $k_{curr}\geq 3$, and no items  in $B_i$ have been deleted by Observation~\ref{obs:noDeletionofNonSmall}, no 1-non-small item will have been deleted at the start of the final phase (thus the set of non-small items will be identical to the set of 1-non-small items of the non adaptive algorithm at that step, $B_{i,1}$). }

We begin by identifying the only case in which it could be $C_i\neq B_{k'',i}$ for any $i$: Since by Observation~\ref{obs:noDeletionofNonSmall}, no item in $B_{i-1}$ has been deleted until the end of step $i-1$, whenever $k''$ is updated from $k''_{i-1}$ to $k''_i$ ($k''_{i-1}> k''_{i}$), all items that have been deleted so far are small w.r.t. $k''_{i}$, and thus their deletion order doesn't matter. Furthermore, it will be $C_i=B_{k''_{i},i}$ since $B_{k''_{i},i}$ either only contains items from $B_{i-1}$ when ($k''_{i}= k''_{i-1}-1$), or only contains the new item that arrived.

Furthermore, since $k-1$ items of the largest size seen-so-far always fit in the knapsack by Lemma~\ref{lm:largeItemsBound}, non items in $C_i$ will be deleted if $|C_i|\leq{k-1}$, and only smaller items may have been deleted due to the algorithm deleting small items in increasing size.

Thus, we only need to focus in the case where we have $|C_i|\geq k-1$. By Observation~\ref{obs:nonEmpty} we know that in that case it is $|B_i|\geq 1$. Furthermore, if $|B_i|\geq 2$ then by Lemma~\ref{lm:itemStructure} we have that the items in $C_i$ satisfy the termination condition of the algorithm, achieving the desired approximation.

This leads us to the only case where Algorithm~\ref{algo_adaptive_recourse} may delete items from $C_i$ while Algorithm~\ref{algo7} would delete items from the so-far equivalent $B_{k''_i,i}$ in a different order: when we have exactly $k-1$ non-very-small items such that their sum doesn't satisfy the termination condition, and another non-small item arrives that doesn't fit in the knapsack. In that case, Algorithm~\ref{algo_adaptive_recourse} will delete the smallest item in $C_i$, while Algorithm~\ref{algo7} would delete the largest item in $B_{k''_i,i}$. To show the equivalence of the two algorithms in this case, we focus on the two possible termination conditions:

\begin{itemize}
    \item \textbf{Condition satisfaction:} In this case we want to show that if the items that appear on and after the phase $i$ are sufficient to satisfy the termination condition, then Algorithm~\ref{algo_adaptive} will identify it and manage to provide the equivalent solution, possibly using the recourse. We can focus only on non-very-small items, since any small items that arrive would be treated the same in both cases:

    By Lemma~\ref{lm:nonVerSmallBound} we have that if $k$ non-very-small items can fit in the knapsack, then they guarantee the termination condition. Thus, it could be the case that instead of deleting the smallest item in $C_i$, to accommodate the new item, we should have deleted the largest one in order for $k$ items to eventually fit in $C$.

    We first note the following inequality, that upper bounds the possible gap between the size of $k-1$ non-very-small-items, including at least one non-small, and the satisfaction of the condition (analysis of the inequality follows after the theorem's proof):

    \begin{equation}
    \label{eq:satisfactionSlack}
    \gamma_{k-1} - \big [(1 - \gamma_{k-2}) + (k - 2)(1 - \gamma_{k-1})\big ]< \frac{1-\gamma_{k-1}}{k+2}
    \end{equation}\pamargincomment{Should this be stated as separate lemma?}

    Furthermore we identify the gap between $k$ items of the lower bound of non-small-item size and the size of the knapsack (i.e., how much ``extra size'' other than the minimum provided by $k$ non-very-small items can fit in the knapsack. As with before, analysis of the inequality follows after the theorem's proof):

    \begin{equation}
        \label{eq:nonVerySmallBoundBrake}
        1-k(1-\gamma_{k-1})>\frac{1-\gamma_{k-1}}{k}
    \end{equation}\pamargincomment{Same question as for the previous equation}

    We are now ready to characterize the behavior of Algorithm~\ref{algo_adaptive_recourse} in this case: Since in phase $i$ we have $k-2$ non-very-small items and 1 non-small and the termination condition is not satisfied, by Equation~\ref{eq:satisfactionSlack} we have that the smallest item in $C_i$ is upper bounded by:
    
    \begin{equation}
        \label{eq:SmallestNonVrySmall}
        \min\{v\in C_i\}\leq (1-\gamma_{k-1}) + \frac{\frac{1-\gamma_{k-1}}{k+2}}{k-2}\leq (1-\gamma_{k-1})\Big (1+\frac{1}{k^2}\Big ),
    \end{equation}

    by pigeonhole principle.

    Furthermore, the sum of non-very-small items, excluding the non-small item, is upper bounded by the following, since the termination condition is not met:

    \begin{equation}
        \label{eq:LargestNonVrySmall}
        (1-\gamma_{k-1})\Big (k-2 +\frac{1}{k+2}\Big )
    \end{equation}

    As we move on from phase $i$, when the next non-very-small item appears, either the satisfaction condition will be met by fitting in the solution, either by deleting substituting (deleting) the current smallest item or it will delete the smallest item and the satisfaction condition will not be met. So far, not enough items have appeared to satisfy the termination condition.

    The important observation is this: If termination hasn't happened, all previously stated equations still stand for the same reason, for the equivalent set $C_{i'}$ ($i'>i$). We close the argument by showing that the next time that an item from $C_{i'}$ is about to be deleted due to another non-very-small item arriving that similarly doesn't satisfy the termination condition, we can bring back the previously deleted item, delete the non-small item and keep all the non-very-small items, including the new one, to reach a set that guarantees the termination condition. Specifically, in that case, the upper bound of the items in the knapsack, using Equation~\ref{eq:LargestNonVrySmall} and Equation~\ref{eq:SmallestNonVrySmall} will be:
    
    \begin{equation}
        \label{eq:bringBack}
        \sum {C_{i''}}\leq (1-\gamma_{k-1})\Big (k-2 +\frac{1}{k+2}\Big )+ 2 (1-\gamma_{k-1})\Big (1+\frac{1}{k^2}\Big ) < (1-\gamma_{k-1})\Big (k+\frac{1}{k}\Big)
    \end{equation}

    But, by Equation~\ref{eq:nonVerySmallBoundBrake}, we have that in that case, $C_{i''}$ is feasible (fits in the knapsack). Furthermore, $|C_{i''}|=k$ and thus, by Lemma~\ref{lm:nonVerSmallBound}, $C_{i''}$ satisfies the terminating condition.
    
    \item \textbf{Input ends:} If not enough items have arrived to reach the termination condition and the input ends, the current solution of Algorithm~\ref{algo_adaptive_recourse} guarantees at least as good as an approximation as Algorithm~\ref{algo_adaptive} since it has only deleted smaller items. This final case concludes the proof. \qed
\end{itemize}

\end{proof}}{\begin{proof}
Since $\gamma_k$ increases with $k$, it suffices to prove the guarantee for
$k=k^*$ of Observation~\ref{obs:optimal_k}. We first consider a phase during which
$k_{\mathrm{curr}}=k\ge3$. Write
\[
\beta=\gamma_{k-1},\qquad a=1-\gamma_{k-2},\qquad c=1-\beta.
\]
Thus $a>c$, $B=\{o\in S:|o|>a\}$ and $C=\{o\in S:|o|>c\}$.
Algorithm~\ref{algo7} run with parameter $k-1$ guarantees approximation $\beta$ for the
item-size upper bound of this phase, by Theorem~\ref{thm11} and the
same argument used in the proof of Theorem~\ref{thm:adaptiveMain}.

Suppose that an item causes $v(S)>1$, but the test in line~15 finds no subset
with value in $[\beta,1]$. Then $v(C)>1$. Otherwise, deleting items outside $C$,
each of size at most $c=1-\beta$, until feasibility is restored would leave a
subset of value greater than $\beta$, a contradiction. Just before the new item
arrived, the solution contained at most $k-1$ items in $C$: a feasible set of
$k$ such items would already reach the target by Lemma~\ref{lm:nonVerSmallBound}.
By Lemma~\ref{lm:largeItemsBound}, any $k-1$ items fit. Hence the current set $C$ has
exactly $k$ items. By Corollary~\ref{col:itemsCombination}, at most one of them belongs to
$B$, since otherwise a subset of $k-1$ items would reach the target.

If $B=\varnothing$, Algorithm~\ref{algo_adaptive_recourse} removes the largest item. This is exactly the
choice of Algorithm~\ref{algo7} when $v(C)>1$. It restores feasibility, since the removed
item is at least as large as the arriving item. Until the first deletion with
$B\ne\varnothing$, the two algorithms therefore have the same solution, unless
Algorithm~\ref{algo_adaptive_recourse} has already reached its target.

Consider that first deletion with $B\ne\varnothing$, and let $z>a$ be the size
of its unique non-small item. Let $d_0$ be the size of the smallest other item
in $C$, let $R$ be the total size of the remaining $k-2$ items in $C$, and let
$w$ be the total size of the items outside $C$. Algorithm~\ref{algo_adaptive_recourse} deletes all items
outside $C$ and then the item of size $d_0$, leaving value $z+R$; Algorithm~\ref{algo7}
removes the item of size $z$, leaving value $d_0+R+w$.
The former solution is feasible by Lemma~\ref{lm:largeItemsBound}. The latter is
feasible because $z$ is the largest item and the solution was feasible before
the arrival. Neither has value at least $\beta$, since both are subsets tested
in line~15.

In fact, the failed test also implies $z+R+w<\beta$. If this were false,
starting with the feasible set of value $z+R<\beta$ and adding items outside
$C$ until reaching $\beta$ would produce a subset of value at most
$\beta+c=1$. Since $d_0$ is the smallest of the $k-1$ items in $C\setminus B$,
we have $R-d_0\ge(k-3)c$. By Lemma~\ref{lm:itemStructure},
\[
2a+(k-3)c\ge\beta.
\]
Consequently,
\[
w<\beta-z-R\le z-d_0.
\]
Thus $z+R>d_0+R+w$: Algorithm~\ref{algo_adaptive_recourse} retains at least as much value as Algorithm~\ref{algo7}.
If the input ends now, its competitive ratio is at least $\beta$. Until another
item in $C$ arrives, items of size at most $c$ fit in both solutions while their
values are below $\beta$, and preserve this comparison.

Consider the next item of size greater than $c$. If a target subset exists,
Algorithm~\ref{algo_adaptive_recourse} finds it. Otherwise the preceding argument again gives exactly
$k$ items in $C$, with the same unique non-small item of size $z$. A second
non-small item would yield a target subset by Corollary~\ref{col:itemsCombination}.
Let $d_1$ be the smallest size among the other $k-1$ items in $C$, and let $R'$
be the sum of the remaining $k-2$ sizes. The set of value $z+R'$ is feasible
and has value less than $\beta$, so
\[
R'<\beta-z,\qquad d_1\le\frac{R'}{k-2}<\frac{\beta-z}{k-2}.
\]
The previous deletion similarly gives
$d_0\le R/(k-2)<(\beta-z)/(k-2)$. By restoring the item of size $d_0$ and
removing the non-small item, we obtain $k$ items in $C$ with total size
\begin{equation}\label{eq:recourse-corrected-sum}
R'+d_1+d_0<\frac{k(\beta-z)}{k-2}
<\frac{k(\beta-a)}{k-2}\le1.
\end{equation}
To verify the last inequality, Lemma~\ref{lm:itemStructure} gives
$2a+(k-2)c\ge1$. For $k\ge4$, using $a>c$, we obtain
\[
k(a+c)=2\bigl(2a+(k-2)c\bigr)+(k-4)(a-c)\ge2.
\]
Since $\beta-a=1-(a+c)$, the inequality follows. For $k=3$, it follows directly
from
\[
a+c=\frac{5-\sqrt5-\sqrt2}{2}>\frac23.
\]
By Lemma~\ref{lm:nonVerSmallBound}, the feasible set in~\eqref{eq:recourse-corrected-sum}
has value greater than $kc\ge\beta$. It is available in the test of line~20
by using the previously deleted item of size $d_0$. The algorithm therefore
uses one recourse and reaches its target. No further recourse is used.

It remains to consider changes in $k_{\mathrm{curr}}$. By
Observation~\ref{obs:nonEmpty}, a decrease after any deletion immediately
satisfies the new target. Every change before that has retained all items seen
so far, so the preceding comparison with Algorithm~\ref{algo7} begins from the same
solution. If $k_{\mathrm{curr}}\in\{1,2\}$, then $k'=k''=1$. After a failed target
subset test, $v(B)>1$ necessarily holds, by the same small-item deletion
argument. The algorithm then behaves as Algorithm~\ref{algo7} with parameter $1$ and
guarantees $\gamma_1$.

Finally, a solution that reaches its current target also meets every later
target, since $k_{\mathrm{curr}}$ can only decrease. If the input ends before
this happens, the comparison with Algorithm~\ref{algo7} in the final phase gives the
claimed ratio. This completes the proof.
\end{proof}}
}

Next, we provide the proof of Lemma~\ref{lm:itemStructure} (restated below):\\ \newline
\noindent \textbf{Lemma~\ref{lm:itemStructure}.}
    For all integers $k\geq 3$ and $0\leq n < k-2$,
    \[
    1>(k-n-1)(1-\gamma_{k-2})+n(1-\gamma_{k-1})\geq \gamma_{k-1}
    \]

\begin{proof}
    It suffices to notice that:
    \begin{align}
    2(1-\gamma_{k-2})+(k-3)(1-\gamma_{k-1})&\geq \gamma_{k-1}\Rightarrow \\
    2(1-\gamma_{k-2})+(k-3)(1-\gamma_{k-1}) - \gamma_{k-1}&\geq 0
    \label{eq:mainInequality}
    \end{align}
    That is, since $1-\gamma_{k-2}>1-\gamma_{k-1}$, the inequality becomes stronger as $n$ decreases. Furthermore, the upper bound of 1 follows from \pjreplace{Lemma~\ref{lm:largeItemsBound}, showing that $k-1$ items of the largest item size seen so far always fit in the knapsack.}{$1-\gamma_{k-2}=2/(k+\sqrt{(k-2)^2+4})<1/(k-1)$, since the expression contains $k-1$ terms, each at most $1-\gamma_{k-2}$.}
    
    We first rewrite the left-hand side:
\begin{align}
&2(1-\gamma_{k-2})+(k-3)(1-\gamma_{k-1})-\gamma_{k-1} \notag\\
&=2(1-\gamma_{k-2})+(k-2)(1-\gamma_{k-1})-1.
\end{align}

From the definition of $\gamma_k$, we obtain
\begin{align}
1-\gamma_k=\frac{k+2-\sqrt{k^2+4}}{2k}.
\end{align}

Therefore,
\begin{align}
&2(1-\gamma_{k-2})+(k-2)(1-\gamma_{k-1})-1 \notag\\
&=\frac{k-\sqrt{(k-2)^2+4}}{k-2}
+\frac{k-2}{2(k-1)}\bigl(k+1-\sqrt{(k-1)^2+4}\bigr)-1.
\end{align}

Multiplying by $2(k-2)(k-1)>0$, we see that \eqref{eq:mainInequality} is equivalent to
\begin{align}\label{eq:auxInequality}
N(k)\ge 0,
\end{align}
where
\begin{align}
N(k)= (k^3-3k^2+4k) -(k-2)^2\sqrt{(k-1)^2+4}
-2(k-1)\sqrt{(k-2)^2+4}.
\end{align}

For any $x>0$, we use the inequality
\begin{align}\label{eq:sqrtBound}
\sqrt{x^2+4}
=x\sqrt{1+\frac{4}{x^2}}
\le x\left(1+\frac12\cdot\frac{4}{x^2}\right)
=x+\frac{2}{x}.
\end{align}

Applying \eqref{eq:sqrtBound} with $x=k-1$ and $x=k-2$ (valid for $k\ge 3$), we obtain
\begin{align}
\sqrt{(k-1)^2+4}&\le (k-1)+\frac{2}{k-1},\\
\sqrt{(k-2)^2+4}&\le (k-2)+\frac{2}{k-2}.
\end{align}

Substituting these bounds into $N(k)$ gives
\begin{align}
N(k)\ge\;&k^3-3k^2+4k \notag\\
&-(k-2)^2\left((k-1)+\frac{2}{k-1}\right)
-2(k-1)\left((k-2)+\frac{2}{k-2}\right).
\end{align}

A direct simplification yields:\pamargincomment{Not adding more detail here, just a bunch of manipulations.}
\begin{align}\label{eq:lowerBoundNk}
N(k)\ge \frac{2(k^2-6k+6)}{(k-1)(k-2)}.
\end{align}

For $k\ge 3$, we have $(k-1)(k-2)>0$, and
\begin{align}
k^2-6k+6=(k-3)^2-3.
\end{align}

Thus, for all integers $k\ge 5$,
\begin{align}
k^2-6k+6\ge 0,
\end{align}
and therefore, by \eqref{eq:lowerBoundNk},
\begin{align}
N(k)\ge 0.
\end{align}

It remains to check the cases $k=3$ and $k=4$.

For $k=3$, inequality \eqref{eq:mainInequality} becomes
\begin{align}
2(1-\gamma_1)\ge \gamma_2.
\end{align}
We compute and get:
\begin{align}
3-\sqrt5\ge \frac{\sqrt2}{2},
\end{align}
which is true.

For $k=4$, inequality \eqref{eq:mainInequality} becomes
\begin{align}
2(1-\gamma_2)+(1-\gamma_3)\ge \gamma_3,
\end{align}
or equivalently,
\begin{align}
3-2\gamma_2\ge 2\gamma_3.
\end{align}
We compute and get:
\begin{align}
3-\sqrt2\ge \frac{1+\sqrt{13}}{3},
\end{align}
which is also true.

Therefore, inequality \eqref{eq:mainInequality} holds for $k=3,4$ and for all integers $k\ge 5$. Hence, it is valid for all integers $k\ge 3$. \qed
\end{proof}

We continue with the proof of Lemma~\ref{lm:nonVerSmallBound} .
\newline

\noindent\textbf{Lemma~\ref{lm:nonVerSmallBound}.}
    For any $k\geq 2$,
    \[
    1>k(1-\gamma_{k-1})\geq \gamma_{k-1}
    \]

\begin{proof}
Since $k\ge 2$ and set $n=k-1$. Then $n\ge 1$, and the inequality
\begin{align}\label{eq:mainInequality2}
1>k(1-\gamma_{k-1})\ge \gamma_{k-1}
\end{align}
is equivalent to
\begin{align}
1>(n+1)(1-\gamma_n)\ge \gamma_n.
\end{align}

Recall that
\begin{align}
1-\gamma_n&=\frac{n+2-\sqrt{n^2+4}}{2n} \Rightarrow\\
(n+1)(1-\gamma_n)
&=\frac{(n+1)\bigl(n+2-\sqrt{n^2+4}\bigr)}{2n}.
\label{eq:intermediate}
\end{align}

We first prove that $(n+1)(1-\gamma_n)<1$.

The inequality
\begin{align}
\frac{(n+1)\bigl(n+2-\sqrt{n^2+4}\bigr)}{2n}<1
\end{align}
is equivalent, since $2n>0$, to
\begin{align}
(n+1)\bigl(n+2-\sqrt{n^2+4}\bigr)&<2n\Rightarrow\\
n^2+n+2&<(n+1)\sqrt{n^2+4}.
\end{align}

Dividing by $n+1>0$, we obtain
\begin{align}
\frac{n^2+n+2}{n+1}&<\sqrt{n^2+4}\Rightarrow\\
n+\frac{2}{n+1}&<\sqrt{n^2+4}.
\end{align}

Both sides are positive, so we may square:
\begin{align}
\left(n+\frac{2}{n+1}\right)^2&<n^2+4\Rightarrow\\
\frac{n}{n+1}+\frac{1}{(n+1)^2}&<1\Rightarrow\\
\frac{n(n+1)+1}{(n+1)^2}&<1\Rightarrow\\
n^2+n+1&<n^2+2n+1.
\end{align}

Which reduces to $n>0$, which proves this direction as $n\ge 1$.

We now prove that $(n+1)(1-\gamma_n)\ge \gamma_n$. By \eqref{eq:intermediate} and the definition of $\gamma_k$ this is equivalent to:
\begin{align}
\frac{(n+1)\bigl(n+2-\sqrt{n^2+4}\bigr)}{2n}
\ge
\frac{n-2+\sqrt{n^2+4}}{2n}.
\end{align}

Since $2n>0$ the above becomes:
\begin{align}
(n+1)\bigl(n+2-\sqrt{n^2+4}\bigr)&\ge n-2+\sqrt{n^2+4}\Rightarrow\\
(n+1)(n+2)-(n-2)&\ge (n+2)\sqrt{n^2+4}\Rightarrow\\
n^2+2n+4&\ge (n+2)\sqrt{n^2+4}.
\end{align}

Dividing by $n+2>0$, we find
\begin{align}
n+\frac{4}{n+2}\ge \sqrt{n^2+4}.
\label{eq:secondIntermediate}
\end{align}

Using the inequality
\begin{align}
\sqrt{n^2+4}\le n+\frac{2}{n},
\end{align}
valid for $n>0$. Thus, for $n\ge 2$, we get:
\begin{align}
n+\frac{4}{n+2}\ge n+\frac{2}{n}\ge \sqrt{n^2+4},
\end{align}
verifying \eqref{eq:secondIntermediate}.

For the remaining case $n=1$ (that is, $k=2$), we evaluate both sides directly and get:

\begin{align}
3-\sqrt5\ge \frac{-1+\sqrt5}{2},
\end{align}

Which holds and completes the proof. \qed

\end{proof}

\pjdelete{We continue with the proof of \textbf{Equation~\ref{eq:satisfactionSlack}}, for every $k\geq 3$:

    \begin{equation*}
    \gamma_{k-1} - \big [(1 - \gamma_{k-2}) + (k - 2)(1 - \gamma_{k-1})\big ]< \frac{1-\gamma_{k-1}}{k+2}
    \end{equation*}
    
\begin{proof}

Manipulating the left part we get that \eqref{eq:satisfactionSlack} is equivalent to:

\begin{align}
\gamma_{k-2}-(k-1)(1-\gamma_{k-1})
&<
\frac{1-\gamma_{k-1}}{k+2}
\label{eq:rearranged2} \Rightarrow \\
\gamma_{k-2}
&<
(1-\gamma_{k-1})\left((k-1)+\frac{1}{k+2}\right)
\label{eq:rearranged3}\Rightarrow \\
\gamma_{k-2}
&<
(1-\gamma_{k-1})\frac{k^2+k-1}{k+2}.
\label{eq:keyInequality}
\end{align}

From the definition of $\gamma_k$,
\begin{align}
1-\gamma_k
&=1-\frac{k-2+\sqrt{k^2+4}}{2k} \nonumber\\
&=\frac{k+2-\sqrt{k^2+4}}{2k}.
\end{align}

Rationalizing the numerator, we obtain
\begin{align}
1-\gamma_k
=
\frac{2}{k+2+\sqrt{k^2+4}}.
\label{eq:oneMinusGamma}
\end{align}

Thus, we have:
\begin{align}
1-\gamma_{k-1}
=
\frac{2}{k+1+\sqrt{(k-1)^2+4}}.
\label{eq:oneMinusGammaKm1}
\end{align}

Also,
\begin{align}
\gamma_{k-2}
=
\frac{k-4+\sqrt{(k-2)^2+4}}{2(k-2)}.
\label{eq:gammaKm2}
\end{align}

Substituting \eqref{eq:oneMinusGammaKm1} and \eqref{eq:gammaKm2} into
\eqref{eq:keyInequality}, it is enough to show
\begin{align}
\frac{k-4+\sqrt{(k-2)^2+4}}{2(k-2)}
<
\frac{2}{k+1+\sqrt{(k-1)^2+4}}
\cdot
\frac{k^2+k-1}{k+2}.
\label{eq:reducedInequality}
\end{align}

For any $x>0$, we have
\begin{align}
\sqrt{x^2+4}
=
x\sqrt{1+\frac{4}{x^2}}
\le
x\left(1+\frac{2}{x^2}\right)
=
x+\frac{2}{x},
\label{eq:sqrtBound2}
\end{align}
since $\sqrt{1+u}\le 1+\frac{u}{2}$ for $u\ge0$.

For $k\ge5$, applying \eqref{eq:sqrtBound2} with $x=k-2$ and $x=k-1$ gives
\begin{align}
\sqrt{(k-2)^2+4}
&\le (k-2)+\frac{2}{k-2}, \nonumber\\
\sqrt{(k-1)^2+4}
&\le (k-1)+\frac{2}{k-1}.
\label{eq:sqrtBounds}
\end{align}

Substituting these bounds into \eqref{eq:reducedInequality} yields a stronger
inequality which simplifies to
\begin{align}
k^3-5k^2+5k-10 \ge 0.
\label{eq:cubic}
\end{align}

For $k\ge5$, we have
\begin{align}
k^3-5k^2+5k-10>0.
\end{align}

Indeed, at $k=5$ the value is $15>0$, and the derivative
\begin{align}
3k^2-10k+5>0 \quad \text{for } k\ge5
\end{align}
shows that the polynomial is increasing. Hence,
\eqref{eq:cubic} holds for all $k\ge5$, and therefore
\eqref{eq:satisfactionSlack} holds for all $k\ge5$.

It remains to check $k=3$ and $k=4$.

For $k=3$,
\begin{align}
\gamma_2=\frac{\sqrt8}{4},\qquad
\gamma_1=\frac{-1+\sqrt5}{2}.
\end{align}
Substituting into \eqref{eq:satisfactionSlack} verifies the inequality.

For $k=4$,
\begin{align}
\gamma_3=\frac{1+\sqrt{13}}{6},\qquad
\gamma_2=\frac{\sqrt2}{2}.
\end{align}
Again, direct substitution verifies \eqref{eq:satisfactionSlack}.

This completes the proof. \qed

\end{proof}

Finally, we provide the proof of \textbf{Equation~\ref{eq:nonVerySmallBoundBrake}}:
    \begin{equation*}
        1-k(1-\gamma_{k-1})>\frac{1-\gamma_{k-1}}{k}
    \end{equation*}
    
\begin{proof}

First, we simplify $1-\gamma_k$ as:

\begin{align}
1-\gamma_k
&=\frac{k+2-\sqrt{k^2+4}}{2k}
= \frac{(k+2)^2-(k^2+4)}{2k\bigl(k+2+\sqrt{k^2+4}\bigr)}
\\
&=\frac{4k}{2k\bigl(k+2+\sqrt{k^2+4}\bigr)}
=\frac{2}{k+2+\sqrt{k^2+4}}.
\label{eq:nonSmallSimplified2}
\end{align}

Thus:
\begin{align}
1-\gamma_{k-1}
=
\frac{2}{k+1+\sqrt{(k-1)^2+4}}.
\label{eq:oneMinusGammaKm1Second}
\end{align}

Let
\begin{align}
a:=1-\gamma_{k-1}>0.
\end{align}

Then \eqref{eq:nonVerySmallBoundBrake} is equivalent to
\begin{align}
1
>
ka+\frac{a}{k}
=
a\left(k+\frac{1}{k}\right).
\label{eq:step1b}
\end{align}

Dividing both sides of \eqref{eq:step1b} by $k+\frac{1}{k}>0$, we obtain
\begin{align}
a
<
\frac{1}{k+\frac{1}{k}}
=
\frac{k}{k^2+1}.
\label{eq:step2b}
\end{align}

Using \eqref{eq:oneMinusGammaKm1Second}, inequality \eqref{eq:step2b} will follow from
\begin{align}
\frac{2}{k+1+\sqrt{(k-1)^2+4}}
<
\frac{k}{k^2+1}.
\label{eq:step3}
\end{align}

Since all denominators in \eqref{eq:step3} are positive for $k\ge3$, it is equivalent to
\begin{align}
2(k^2+1)
&<
k\Bigl(k+1+\sqrt{(k-1)^2+4}\Bigr)
\label{eq:step4}\Rightarrow \\
2k+\frac{2}{k}
&<
k+1+\sqrt{(k-1)^2+4}.
\label{eq:step5}\Rightarrow \\
\sqrt{(k-1)^2+4}
&>
(k-1)+\frac{2}{k}.
\label{eq:step6}
\end{align}

Set $x=k-1>0$. Then \eqref{eq:step6} becomes
\begin{align}
\sqrt{x^2+4}
>
x+\frac{2}{x+1}.
\label{eq:step7}
\end{align}

Both sides of \eqref{eq:step7} are positive, so we may square:
\begin{align}
x^2+4
&>
\left(x+\frac{2}{x+1}\right)^2
\label{eq:step8}\Rightarrow \\
1
&>
\frac{x}{x+1}+\frac{1}{(x+1)^2}.
\label{eq:step11}
\end{align}

Thus, \eqref{eq:step11} is equivalent to
\begin{align}
1
>
\frac{x^2+x+1}{(x+1)^2}
\quad\Longleftrightarrow\quad
(x+1)^2>x^2+x+1.
\label{eq:step13}
\end{align}

Expanding gives
\begin{align}
x^2+2x+1>x^2+x+1,
\end{align}
which is equivalent to
\begin{align}
x>0.
\end{align}

Since $x=k-1>0$ for $k\ge3$, this holds.

This completes the proof. \qed

\end{proof}

}

\pjreplace{\subsection{Learning-Augmented Version}}{\subsection{Proof of the key lemma for the Learning-Augmented Algorithm}}
\label{ap:predictions}

\pjdelete{
We begin by describing the strategy of the augmented algorithm. Since we have already presented algorithms \adapt\ and Algorithm~\ref{algo7} of \cite{GourvesP24} and their similarities, we make a more informal presentation, that focuses on presenting the key aspect of how we capitalize on their behavior.


Next, we produce the arguments that prove Theorem~\ref{thm:augmented}:

\begin{proof}
\pjreplace{For the second case of the theorem, $k<4$, we notice that whenever}{Whenever} $k_p<4$, we only run \adapt\ and ignore the prediction. Thus, we guarantee approximation of $\gamma_{\hat{k}}$\pjedit{, where $\hat{k}=\max\{k^*-2,1\}$,} by Theorem~\ref{thm:adaptiveMain} which bounds both the robustness and the consistency of the algorithm.

We now focus on the more interesting case where $k_p\geq 4$. If $k_{opt}$ is the optimal $k$ for the real upper bound $u$ according to Observation~\ref{obs:optimal_k}, we break the analysis into the following cases:

\begin{description}
    \item[Case $k_{opt}> k_p$.] In this case the prediction is not accurate and thus we need to show the robustness bound. We identify two sub-cases:
    \begin{description}
        \item[\pjreplace{$k_{opt-2}> k_{p}$:}{$k_{opt}-2> k_{p}$:}] Then, the algorithm will terminate during the run of \adapt\ in step \ref{step:adapt} guaranteeing approximation \pjreplace{$\gamma_{\hat{k}}>\gamma_{k-3}$}{$\gamma_{\max\{k_{opt}-2,1\}}>\gamma_{k_{opt}-3}$} (Theorem~\ref{thm:adaptiveMain}).
        \item[\pjreplace{$k_{opt-2}\leq k_{p}$:}{$k_{opt}-2\leq k_{p}$:}] \pjreplace{Then the algorithm will terminate during the run of Algorithm~\ref{algo7} at step\ref{step:nonAdapt}.}{Then the algorithm reaches step~\ref{step:nonAdapt}, unless the required guarantee has already been achieved.} Notice that our algorithm will immediately switch to Algorithm~\ref{algo7} as soon as $k'\leq k_{p}$.\pjedit{ This switch is performed before processing the arriving item. Before this arrival, $k_{curr}\geq k_p+3$, unless this is the first item.} That is, at the point of the switch, all items \pjreplace{seen so far}{preceding the arriving item} would be considered small, w.r.t to $k_p$, by Lemma~\ref{lemma:k-sizes}. Thus, using the same logic as in the proof of Theorem~\ref{thm:adaptiveMain}, all \pjreplace{items seen so far}{earlier items} would have been considered as $k_p-small$ by Algorithm~\ref{algo7} and their deletion order doesn't matter. \pjreplace{That is, until this point, the algorithm has behaved exactly as if we were running Algorithm~\ref{algo7} from the beginning.}{More precisely, by the proof of Theorem~\ref{thm:adaptiveMain}, any earlier deletion or rejection leaves value at least $\gamma_{k_p}$, since until the switch $k'>k_p$. Otherwise, all earlier items remain. Thus, unless the value already reaches $\gamma_{k_p}$, the current state is the same as for Algorithm~\ref{algo7} with parameter $k_p$ from the beginning. The arriving item is then processed by Algorithm~\ref{algo7}.} Since $k_p<k_{opt}$ \pjreplace{(i.e., $u_p>u$)}{(hence $u_p>u$)}, Algorithm~\ref{algo7} will still provide approximation $\gamma_{k_p}$, by Theorem~\ref{thm11} \cite{GourvesP24}. But we have \pjreplace{$k_p\geq k_{opt-2}$}{$k_p\geq k_{opt}-2$} which is better than the desired robustness.
    \end{description}
    \item[Case $k_{opt}=k_p$.] \pjreplace{I.e., the prediction agrees with the real upper-bound and we need to show the consistency bound.}{First suppose that the prediction is never invalidated.} This case follows \pjreplace{exactly the same logic}{the same switching argument} as the sub case \pjreplace{($k_{opt-2}\leq k_{p}$)}{($k_{opt}-2\leq k_{p}$)} of the previous case. All we have to notice is that since now $k_{opt}=k_p$, Algorithm~\ref{algo7} will guarantee its best possible competitive ratio as stated by Theorem~\ref{thm11} \pareplace{and Theorem~\ref{thm111}}  (mentioned in Section~\ref{sec:adaptive}) of \cite{GourvesP24}.\pjedit{ This proves consistency when $u_p=u$, and robustness otherwise, since this guarantee is at least $\gamma_{k_p-1}\geq\gamma_{k_p-3}$. If the prediction is invalidated, the argument in the following case applies as well.}
\item[\pjreplace{Case $k_{opt}>k_p$.}{Case $k_{opt}<k_p$.}] I.e., the prediction is not accurate and thus we need to show the robustness bound. \pjreplace{The only case where the procedure of \adapt\ would be disrupted, is if at any step, $k' \in \{k_p, k_{p-1}, k_{p-2}\}$, thus entering step \ref{step:nonAdapt} of the algorithm.}{An item invalidating the prediction must arrive. The following argument also applies when $k_{opt}=k_p$ and the prediction is invalidated.} \pjreplace{But, if at any point after that, an item}{When the first item} invalidating the prediction arrives (i.e., an item of size $|o|>u_p$)\pjreplace{, and the current termination bound isn't reached, then we immediately return to making an over-estimation of $k$, by setting $k'< k_{p-3}$.}{, we immediately return to running \adapt\ with $k'=\max\{1,\min\{k_p-3,k_{curr}-2\}\}$ at every subsequent iteration.} By Lemma~\ref{lemma:k-sizes}, this means that \pjreplace{all previously seen items are now $k_{p-3}$-small}{all items preceding this arrival are $\{k_p-3\}$-small}, so their deletion order doesn't matter when we are targeting an approximation ratio of at most \pjreplace{$\gamma_{k_{p-3}}$}{$\gamma_{k_p-3}$}.\pjedit{ More precisely, if an earlier deletion has occurred, the last deletion restoring feasibility leaves value greater than $\gamma_{k_p-3}$. Selection of a feasible subset or rejection at a termination condition also leaves at least this value. This already suffices, since $k_{opt}\leq k_p$. Otherwise, all earlier items remain. The proof of Theorem~\ref{thm:adaptiveMain} applies to the subsequent run because $k'$ never increases and, whenever $k'>1$, satisfies $k'\leq k_{curr}-2$: by Lemmas~\ref{lm:sufficientNonSmallItems} and~\ref{lm:largeItemsBound}, $k'+1$ non-small items fit and already reach $\gamma_{k'}$, so no non-small item is removed before reaching the target; deleting small items to restore feasibility reaches the target. The case $k'=1$ is handled as in that proof. The final guarantee is therefore $\gamma_{\max\{1,\min\{k_p-3,k_{opt}-2\}\}}$, which is at least $\gamma_{k_{opt}-3}$ when $k_{opt}\geq4$, and at least $\gamma_1$ otherwise.} Thus the desired guarantee directly follows from the analysis of \adapt\ (i.e., proof of Theorem~\ref{thm:adaptiveMain}).
\end{description}
This concludes the proof. \qed

\end{proof}
}

Here, we present the formal proof of the key lemma: 
\newline

\noindent \textbf{Lemma~\ref{lemma:k-sizes}:}
Items of size up to $\frac{2}{k-1+\sqrt{(k-1)^2+4}}$ are $\{k-c\}-small$, for $k > c\geq 3$. 

\begin{proof}

It suffices to show that 
\begin{align}
1-\gamma_{k-3}\geq \frac{2}{k-1+\sqrt{(k-1)^2+4}} \label{eqn:k-sizes_lemma}
\end{align}

That is, since $\gamma_k$ is increasing in $k$, we have $1-\gamma_{k^*}>1-\gamma_k$ whenever $k^*<k$. Thus $1-\gamma_{k-c}>1-\gamma_{k-3}$, when $c>3$ and $k-c$, $k-3$ feasible, i.e., $k>c$.

Notice that we only care about showing the lemma for $k>c$, since we only need to use it for cases where $k-c$ is \textit{``feasible''}, in the sense that $k-c\geq 1$ (as it is for $k$ in general).

By expanding equation~(\ref{eqn:k-sizes_lemma})  we get that it must be:
\begin{align}
    1-\frac{(k-3)-2+\sqrt{(k-3)^2+4}}{2(k-3)} &\geq\frac{2}{k-1+\sqrt{(k-1)^2+4}} \Rightarrow \nonumber \\
    \frac{(k-3)+2-\sqrt{(k-3)^2+4}}{2(k-3)} &\geq\frac{2}{k-1+\sqrt{(k-1)^2+4}} \Rightarrow \nonumber
    \end{align}
    
    \begin{align}
     \frac{\Big [(k-3)+2-\sqrt{(k-3)^2+4}\Big ]  \Big [ k-1+\sqrt{(k-1)^2+4} \Big ] - 4(k-3)}{2(k-3)\Big [ k-1+\sqrt{(k-1)^2+4} \Big ]} \geq 0 \label{eqn:k-sizes_fraction}
\end{align}

Notice that the denominator of the fraction in equation~\ref{eqn:k-sizes_fraction} is always positive for feasible $k$, i.e., for $k>3$.

Thus it suffices to show that for $k>3$
\begin{align}
     \Big [(k-3)+2-\sqrt{(k-3)^2+4}\Big ] \Big [  k-1+\sqrt{(k-1)^2+4} \Big ] - 4(k-3) &\geq 0 \label{eqn:k-sizes_inequality}
\end{align}

Set $a=k-3>0$, so that $k-1=a+2$. Define:
\begin{align}
E&=\Big[(k-3)+2-\sqrt{(k-3)^2+4}\Big]\Big[k-1+\sqrt{(k-1)^2+4}\Big]-4(k-3). \label{eq:definitionE}
\end{align}
Using the conjugate, we obtain:
\begin{align}
\Big((a+2)-\sqrt{a^2+4}\Big)\Big((a+2)+\sqrt{a^2+4}\Big)
&=(a+2)^2-(a^2+4)=4a, \label{eq:conjugate}
\end{align}
which implies:
\begin{align}
(a+2)-\sqrt{a^2+4}
&=\frac{4a}{(a+2)+\sqrt{a^2+4}}
=\frac{4(k-3)}{(k-1)+\sqrt{(k-3)^2+4}}. \label{eq:rewrite}
\end{align}
Substituting \eqref{eq:rewrite} into \eqref{eq:definitionE}, we get:
\begin{align}
E
&=4(k-3)\left(\frac{(k-1)+\sqrt{(k-1)^2+4}}{(k-1)+\sqrt{(k-3)^2+4}}-1\right) \label{eq:substitution}\\
&=4(k-3)\frac{\sqrt{(k-1)^2+4}-\sqrt{(k-3)^2+4}}{(k-1)+\sqrt{(k-3)^2+4}}. \label{eq:simplified}
\end{align}
Since $k>3$, we have:
\begin{align}
4(k-3)>0,\quad (k-1)+\sqrt{(k-3)^2+4}>0,\quad 
\sqrt{(k-1)^2+4}>\sqrt{(k-3)^2+4}, \label{eq:positivity}
\end{align}
and therefore $E\ge 0$, which proves the inequality. \qed

\end{proof}

\par\bigskip
\pjmargincomment{Put this here.}\pjedit{\noindent\textbf{AI Disclosure.} After writing the full version of the paper, chatGPT was used to assist with mathematical proofreading, minor corrections and streamlining the writing.}

\end{document}